\documentclass[11pt]{article}

\newcommand {\ignore} [1] {}

\usepackage[margin=1in]{geometry}
\usepackage{amsmath,amssymb,amsthm,mathtools}
\usepackage{microtype}
\usepackage{enumitem}
\usepackage{booktabs}
\usepackage{array}
\usepackage{cite}
\usepackage[colorlinks=true,linkcolor=blue,citecolor=blue,urlcolor=blue]{hyperref}
\usepackage{amsthm}

\usepackage{graphicx}

\theoremstyle{plain}
\newtheorem{theorem}{Theorem}[section]
\newtheorem{lemma}{Lemma}[section]
\newtheorem{proposition}{Proposition}[section]
\newtheorem{corollary}{Corollary}[section]
\newtheorem{fact}{Fact}[section]

\theoremstyle{definition}

\theoremstyle{remark}
\newtheorem{remark}{Remark}[section]
\DeclareMathOperator{\Var}{Var}
\DeclareMathOperator{\Med}{median}

\newcommand{\Pa}{P_\alpha}
\newcommand{\Phat}{\widehat P_\alpha}
\newcommand{\Hhat}{\widehat H_\alpha}
\newcommand{\Kalpha}{K_\alpha}
\newcommand{\eps}{\varepsilon}

\newcommand{\Unif}{\operatorname{Unif}}

\newcommand{\bbB}{\mathbb{B}}

\newcommand{\Bin}{\operatorname{Bin}}
\newcommand{\Tcal}{\mathcal{T}}

\newcommand{\TV}{\operatorname{TV}}
\newcommand{\Poi}{\operatorname{Poi}}
\newcommand{\E}{\mathbb{E}}
\newcommand{\Prob}{\mathbb{P}}

\newcommand{\one}{\mathbf{1}}

\title{
Tight Sample Bounds for R\'{e}nyi and Min-Entropy Estimation}
\author{
Arman Adibi \qquad Piotr Krysta\\[0.5em]
Department of Computer Science\\
School of Computer and Cyber Sciences\\
Augusta University\\
Augusta, Georgia, USA
}
\date{}

\begin{document}
\maketitle

\thispagestyle{empty} 

\begin{abstract}
Estimating entropy from samples is a fundamental problem in information
theory and distribution property testing. The required number of samples
depends strongly on the notion of entropy. Shannon entropy measures
average uncertainty and can be estimated to constant additive accuracy
over a $k$-symbol alphabet using $\Theta(k/\log k)$ samples.
Min-entropy, in contrast, depends only on the probability of the most
likely symbol and measures worst-case predictability. Both Shannon entropy and min-entropy are special cases of order-$\alpha$ R\'{e}nyi entropy, denoted $H_{\alpha}$, depending on the order $\alpha$.
We provide complete characterizations of sample complexity of estimating min-entropy and R\'{e}nyi entropy for all regimes of $k$ and integer orders $\alpha > 1$. Furthermore, our lower bound also hold for non-integers $\alpha \geq 1.001$.

We prove that estimating min-entropy to constant additive accuracy has
sample complexity $\Theta(k\log k)$. The upper bound is achieved by the
largest empirical frequency and follows from a support-sensitive
concentration argument based on a dyadic grouping of the symbol
probabilities. The matching lower bound uses a mixture of distributions
in which a slightly heavier symbol is hidden at a uniformly random
location. Thus, min-entropy estimation requires a factor of
$\Theta(\log^2 k)$ more samples than Shannon entropy estimation. This
result also resolves a previously stated $\Theta(k/\log k)$
characterization of classical min-entropy estimation. We next study R\'enyi entropy when its order grows with the alphabet
size. For every integer $\alpha$ where $2\le\alpha\le c_0\log k$, we prove the matching
fixed-accuracy characterization
$\Theta_{c_0}\!\left(\alpha k^{1-1/\alpha}\right)$. Previous results
established $\Omega_\alpha(k^{1-1/\alpha})$ for each fixed integer
$\alpha>1$, and $O_{c_0}(\alpha^2 k^{1-1/\alpha})$, for every integer $\alpha > 1$. Our upper bound analyzes an unbiased
falling-factorial estimator based on $\alpha$-way collisions, while a
hidden-heavy-coordinate construction gives the matching lower bound and
shows that the factor $\alpha$ is unavoidable when the integer order
grows. More generally, for every real order
$1.001\le\alpha\le c_0\log k$, we establish the uniform lower bound
$\Omega_{c_0}\!\left(\alpha k^{1-1/\alpha}\right)$. Finally, the deterministic comparison
$0\le H_\alpha(p)-H_\infty(p)\le \log k/(\alpha-1)$ shows that, once
$\alpha$ is a sufficiently large constant multiple of $\log k$, min-entropy is a
uniformly accurate proxy for $H_\alpha$. Combining this reduction with
the min-entropy upper and lower bounds gives
$\Theta_\varepsilon(k\log k)$ sample complexity throughout the
sufficiently high-order regime, when $\alpha > c_0\log k$.
\end{abstract}

\newpage
\setcounter{page}{1}
\setcounter{tocdepth}{1}
\tableofcontents

\section{Introduction}
\label{sec:introduction}

\setcounter{page}{1}

Entropy estimation is a fundamental problem in statistics, information
theory, cryptography, learning, and property testing
\cite{bravo2022private,skorski2023towards,ObremskiSkorski2017,
AcharyaOrlitskySureshTyagi2017,canonne2024entropy,AcharyaOrlitskySureshTyagi2015}.
Given independent samples from an unknown distribution $p$ on a finite
alphabet $[k]=\{1,\ldots,k\}$, the goal is to estimate a numerical
measure of its uncertainty or randomness by using the smallest number of samples from $p$. Shannon entropy
\[
   H(p)=-\sum_{i\in[k]}p_i\log p_i
\]
measures average uncertainty and governs the fundamental limits of
compression and coding \cite{Shannon1948}. For $\alpha>0$ and
$\alpha\neq1$, the R\'enyi entropy is
\[
   H_\alpha(p)
   =
   \frac{1}{1-\alpha}
   \log\sum_{i\in[k]}p_i^\alpha.
\]
With the standard limiting definitions, this family includes Hartley
entropy at $\alpha=0$, Shannon entropy at $\alpha=1$, collision entropy
at $\alpha=2$, and min-entropy as $\alpha\to\infty$
\cite{Hartley1928,Renyi1961}. The latter is
\[
   H_\infty(p)=-\log\|p\|_\infty
\]
and is determined entirely by the probability of the most likely
symbol. It therefore measures worst-case predictability rather than
average uncertainty.

Min-entropy is central in randomness extraction and cryptographic source
analysis, where a source with a heavy atom may be insecure even when its
average uncertainty is large
\cite{Vadhan2012,BennettBrassardCrepeauMaurer1995,
ImpagliazzoZuckerman1989}. It also plays a central role in the NIST
framework for assessing IID and non-IID entropy sources used in random
bit generation \cite{NISTSP80090B}. More generally, R\'enyi entropies
arise in guessing, cutoff rates, collision search, and key-derivation
analyses \cite{Arikan1996,Csiszar1995,vanOorschotWiener1999}.

The statistical difficulty of entropy estimation varies considerably
across this family. Shannon entropy over $k$ symbols can be estimated to
constant additive accuracy using $\Theta(k/\log k)$ samples. This line
of work includes early consistency and bias analyses by Paninski
\cite{Paninski2003,Paninski2004}, the sublinear-sample estimation and
lower-bound framework of Valiant and Valiant
\cite{ValiantValiant2011,ValiantValiantLinear2011,
ValiantValiantJACM2017}, and the minimax-optimal
polynomial-approximation estimators of Jiao, Venkat, Han, and Weissman \cite{JiaoVenkatHanWeissman2015}, and of Wu and Yang \cite{WuYang2016}. These
results demonstrate that the empirical plug-in estimator can be highly
suboptimal in large alphabets.

For R\'enyi entropy, Acharya, Orlitsky, Suresh, and Tyagi established
the foundational fixed-order theory
\cite{AcharyaOrlitskySureshTyagi2015,
AcharyaOrlitskySureshTyagi2017}. For every fixed $\alpha$, they
characterize the large-alphabet dependence of the sample complexity:
roughly $k^{1/\alpha}$ samples are required for $0<\alpha<1$; nearly
linear sample size is required for fixed noninteger $\alpha>1$; and
$\Theta(k^{1-1/\alpha})$ samples are necessary and sufficient for fixed
integer $\alpha>1$. Obremski and Skorski subsequently sharpened several
distribution-dependent and worst-case bounds, including lower bounds for
real $\alpha>1$ \cite{ObremskiSkorski2017}.

These results primarily treat $\alpha$ as a fixed constant. Thus, the notation
$\Theta_\alpha(k^{1-1/\alpha})$ permits the hidden constants to depend
arbitrarily on $\alpha$ and does not characterize the complexity when
the order grows with the alphabet size. Determining this dependence is
also necessary for understanding the transition from finite-order
R\'enyi entropy to min-entropy.

More recently, Skorski developed an exact U-statistic analysis of the
same collision estimator and used it to obtain adaptive,
distribution-dependent guarantees
\cite{skorski2023towards}. His results are particularly useful when the
distribution has low or moderate entropy and its effective support is
substantially smaller than the ambient alphabet. When specialized to
the worst case over distributions on $[k]$, and to constant entropy
accuracy and confidence, the resulting analysis gives an upper bound of
order
$
   O\!\left(
      \alpha^2 k^{1-1/\alpha}
   \right)
$
for integer $\alpha>1$. Thus, the best explicit growing-order dependence
available from the previous collision-estimator analyses is quadratic
in $\alpha$. Our upper bound improves this dependence to
$O(\alpha k^{1-1/\alpha})$. The improvement does not come from a different estimator. We use the
same falling-factorial collision statistic, but analyze the event needed
for additive accuracy of $H_\alpha$ directly. Previous analyses control
a fixed relative error for the power sum
$P_\alpha(p)=\sum_i p_i^\alpha$. In contrast, additive error
$\varepsilon$ in
$H_\alpha=(1-\alpha)^{-1}\log P_\alpha$ permits the wider
multiplicative window
$
   {P_\alpha}e^{-(\alpha-1)\varepsilon}
   \le
   {\widehat P_\alpha}
   \le
   {P_\alpha}e^{(\alpha-1)\varepsilon},
$ where $\widehat P_\alpha$ is the estimator of $P_\alpha$. Exploiting this order-dependent window, rather than imposing an
order-independent relative-error requirement on $P_\alpha$, is what
removes one factor of $\alpha$.

The endpoint $\alpha=\infty$ presents a separate difficulty. Li and Wu \cite{LiWu2019},
in their work on the quantum query complexity of entropy estimation,
state that the classical complexity of min-entropy estimation is
$\Theta(k/\log k)$, based on an application of Valiant and Valiant's
``Estimating the Unseen'' theorem
\cite{ValiantValiant2011}. We show that this framework does not
apply to min-entropy. The Valiant--Valiant canonical estimator applies to
symmetric properties satisfying a continuity condition with respect to
relative earthmover distance between probability histograms
\cite{ValiantValiant2011,ValiantValiantJACM2017}. Shannon entropy
satisfies this condition, but min-entropy does not: distributions can
become arbitrarily close in relative earthmover distance while their
largest atoms, and hence their min-entropies, remain separated by a
constant. The explicit counterexample is given in
Section~\ref{sec:prelim-rem}.

\begin{table}[t]
\centering
\small
\setlength{\tabcolsep}{5pt}
\renewcommand{\arraystretch}{1.18}
\begin{tabular}{
    p{0.30\linewidth}
    p{0.28\linewidth}
    p{0.30\linewidth}}
\toprule
Regime
&
Previously known
&
\textbf{This paper}
\\
\midrule

Shannon entropy, $\alpha=1$
&
$\Theta(k/\log k)$
\cite{ValiantValiant2011,ValiantValiantJACM2017,
JiaoVenkatHanWeissman2015,WuYang2016}
&
---
\\

\addlinespace

Fixed integer $\alpha>1$
&
$\Theta_\alpha\!\left(k^{1-1/\alpha}\right)^{{\scalebox{1}{\dag}}}$
\cite{AcharyaOrlitskySureshTyagi2015,
AcharyaOrlitskySureshTyagi2017}
&
$\boldsymbol{
O\!\left(
\varepsilon^{-2}\alpha k^{1-1/\alpha}
\log(1/\delta)
\right)}$
\\

\addlinespace

Integer $2\le\alpha\le c_0\log k$
&
$O_{c_0}\!\left(
\alpha^2 k^{1-1/\alpha}
\right)^{{\scalebox{1}{\dag}}}$~\cite{skorski2023towards,AcharyaOrlitskySureshTyagi2017}
&
$\boldsymbol{
\Theta_{c_0}\!\left(
\alpha k^{1-1/\alpha}
\right)}$
\\

\addlinespace

Real $1.001\le\alpha\le c_0\log k$
&
$\Omega(k^{1-1/\alpha})^{{\scalebox{1}{\dag\dag}}}$
\cite{
ObremskiSkorski2017}
&
$\boldsymbol{
\Omega_{c_0}\!\left(
\alpha k^{1-1/\alpha}
\right)}$
\\

\addlinespace

Real $\alpha > c_0\log k$
&
---
&
$\boldsymbol{
O\!\left(
k\varepsilon^{-2}\log(k/\delta)
\right)}$
\\

\addlinespace

Min-entropy, $\alpha=\infty$
&
$\Theta(k/\log k)$ claimed
\cite{LiWu2019}
&
$\boldsymbol{\Theta(k\log k)}$
\\

\bottomrule
\end{tabular}

\caption{
Comparison of sample-complexity bounds for estimating entropy over distributions on a $k$-symbol alphabet. Unless the accuracy and confidence parameters are displayed explicitly, all rates refer to constant additive accuracy and constant success probability. The notation $\Theta_\alpha(\cdot)$ allows the hidden constants to depend on the fixed order $\alpha$, whereas $\Theta_{c_0}(\cdot)$ denotes a bound that holds uniformly over $2\le\alpha\le c_0\log k$, with constants depending only on $c_0$ and the fixed accuracy; see Remark~\ref{rem:comparison-fixed-alpha} for further discussion. For real noninteger orders, this paper establishes only the displayed lower bound. The high-order upper bound follows from $0\le H_\alpha(p)-H_\infty(p)\le\log k/(\alpha-1)$. The previously stated min-entropy rate is based on an application of the Valiant--Valiant framework that does not apply to min-entropy, as explained in Section~\ref{sec:prelim-rem}.\\
$^{{\scalebox{1}{\dag}}}$We extract dependence on $\alpha$, as $\alpha$ can grow as a function of $k$, in the lower bound $\Omega(C_1(\alpha) \cdot k^{1-1/\alpha})$ and upper bound $O(C_2(\alpha) \cdot k^{1-1/\alpha})$, proofs in~\cite{AcharyaOrlitskySureshTyagi2017}, see Appendix \ref{app:Acharya_LB_UB}. Namely, $C_1(\alpha) = \Theta(1)$ and $C_2(\alpha) = \Theta(\alpha^2)$, thus our bound $\Theta(\alpha k^{1-1/\alpha})$ improves on both bounds by a factor of $\alpha$.\\
$^{{\scalebox{1}{\dag\dag}}}$The lower bound of Obremski-Skorski for real $\alpha>1$ has a multiplicative constant that depends on the target accuracy but is independent of $\alpha$ \cite{ObremskiSkorski2017}, see Appendix \ref{app:Obremski_Skorski_LB}. Thus our lower bound $\Omega_{c_0}\!\left(
\alpha k^{1-1/\alpha}
\right)$ and upper bound $O_{c_0}\!\left(
\alpha k^{1-1/\alpha}
\right)$ improve by a factor of $\alpha$ on the results in both papers \cite{skorski2023towards,AcharyaOrlitskySureshTyagi2017}.
}
\label{tab:comparison}
\end{table}

\paragraph{Our contributions.}
We determine the sample complexity of min-entropy and growing
integer-order R\'enyi entropy in the i.i.d.\ sampling model.

First, we prove that estimating min-entropy to constant additive
accuracy has minimax sample complexity $\Theta(k\log k)$. The upper bound
is achieved by the empirical maximum and follows from a
distribution-sensitive concentration argument based on a dyadic
decomposition of the symbol probabilities. The lower bound hides a
slightly heavier coordinate at a uniformly random location and shows
that the resulting mixture cannot be distinguished from the uniform
distribution with $o(k\log k)$ samples.

Second, for every integer  $2\le\alpha\le c_0\log k$, we prove the
matching fixed-accuracy rate
$\Theta_{c_0}(\alpha k^{1-1/\alpha})$. The upper bound analyzes an
unbiased falling-factorial estimator of the power sum and keeps the
dependence on the growing order explicit. The lower bound shows that the
factor $\alpha$, which is absorbed into constants in fixed-order
analyses, is statistically unavoidable.

Finally, when $\alpha$ is sufficiently large relative to
$\log k/\varepsilon$, the deterministic proximity of $H_\alpha$ to
$H_\infty$ transfers the min-entropy estimator to high-order R\'enyi
entropy. The precise statements and proof ideas are developed in the
next section.

\section{Main Results and Proofs Overview}
\label{sec:main-results}

Recall from Section~\ref{subsec:notation} that
$S_\alpha(k,\varepsilon,\delta)$ denotes the minimax number of
independent samples required to estimate $H_\alpha(p)$ uniformly over
$p\in\Delta_k$ to additive error $\varepsilon$ with failure probability
at most $\delta$. The case $\alpha=\infty$ corresponds to min-entropy.

Unless stated otherwise, the asymptotic rates in this section are for
constant additive accuracy and constant success probability. Subscripts
on $O$, $\Omega$, and $\Theta$ indicate the parameters on which the
hidden constants may depend. For other notation used here, please refer to Section~\ref{subsec:notation}.

\subsection{Min-entropy}
\label{subsec:main-min-entropy}

Our first theorem gives a complete worst-case characterization of
min-entropy estimation.

\begin{theorem}[Min-entropy sample complexity]
\label{thm:main-min-entropy-summary}
There exist universal constants $c,C,\varepsilon_0>0$ such that, for all
sufficiently large $k$,
$
   S_\infty(k,\varepsilon,\delta)
   \le
   C\,\frac{k}{\varepsilon^2}\log\frac{k}{\delta}
$
for every $\varepsilon\in(0,\epsilon_0]$ and $\delta\in(0,1/2)$, whereas
$
   S_\infty(k,\varepsilon,1/3)
   \ge
   c\,k\log k.
$
Consequently,
$
   S_\infty(k,\Theta(1),1/3)=\Theta(k\log k).
$
\end{theorem}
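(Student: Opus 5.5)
The plan has two halves: an upper bound for the empirical maximum $\widehat m=\max_i N_i/n$, where $N_i$ counts occurrences of symbol $i$ among $n$ samples, and a two-point mixture lower bound.

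\textbf{Upper bound.} We use $\widehat H=-\log\widehat m$ and write $m=\|p\|_\infty$. Additive accuracy $\varepsilon$ for $H_\infty$ is equivalent to the multiplicative event $e^{-\varepsilon}m\le\widehat m\le e^{\varepsilon}m$. The lower side is immediate: the heaviest symbol $i^*$ satisfies $N_{i^*}\ge(1-\varepsilon/2)nm$ except with probability $\exp(-c\varepsilon^2nm)$ by a multiplicative Chernoff bound. Since $m\ge1/k$, this requires only $n\gtrsim k\varepsilon^{-2}\log(1/\delta)$.

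The upper side needs $N_i\le e^{\varepsilon}nm$ for every $i$ simultaneously. Here the dyadic grouping enters. Partition the symbols into classes $B_j=\{i:p_i\in(m2^{-j-1},m2^{-j}]\}$, so that $|B_j|\le 2^{j+1}/m$. For $i\in B_j$ with $j\ge1$ the threshold $e^{\varepsilon}nm$ exceeds $2^{j}\,np_i$. Chernoff in the large-deviation form therefore gives a tail of at most $\exp(-c\,nm\,j)$ for $j\ge1$. For $j=0$ the tail is $\exp(-c\varepsilon^2nm)$. Each class $B_j$ has at most $2^{j+1}/m\le 2^{j+1}k$ members, so a union bound over classes gives total failure probability at most
\[
2k\,e^{-c\varepsilon^2 nm}+\sum_{j\ge1}2^{j+1}k\,e^{-c\,nmj}.
\]
This is at most $\delta$ once $nm\gtrsim\varepsilon^{-2}\log(k/\delta)$. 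Because $m\ge1/k$, the condition holds for $n\ge Ck\varepsilon^{-2}\log(k/\delta)$.

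The one delicate point is symbols with tiny $p_i$, which have huge relative deviation. The large-deviation Chernoff bound $\Pr[N_i\ge t]\le (e np_i/t)^t$ with $t=e^{\varepsilon}nm$ handles them uniformly: the bound only improves as $p_i$ shrinks, so all classes with $j\gtrsim\log k$ can be lumped together.

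\textbf{Lower bound.} Let $P_0$ be the uniform distribution on $[k]$. Let $P_1$ be the mixture over a uniformly random $J\in[k]$ of the distribution $q^{(J)}$ with $q^{(J)}_J=(1+\gamma)/k$ and all other masses equal to $(1-\gamma/k)/(k-1)$, where $\gamma$ is a constant such as $e^{3\varepsilon_0}-1$. Every $q^{(J)}$ has $H_\infty$ smaller than $\log k$ by about $\log(1+\gamma)>2\varepsilon_0$, so an $\varepsilon_0$-accurate estimator would distinguish $P_0^{\otimes n}$ from the mixture $\E_J[(q^{(J)})^{\otimes n}]$ with probability $2/3$.

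It therefore suffices to show that the total variation distance is $o(1)$ when $n\le ck\log k$. We Poissonize, taking $n\sim\Poi(n)$ so that the counts become independent; this costs only a constant factor. Under Poissonization the likelihood ratio of the mixture to uniform is
\[
L=\frac1k\sum_{J}(1+\gamma)^{N_J}e^{-\gamma\lambda},\qquad \lambda=n/k.
\]
Ignoring the negligible $\gamma/k$ perturbation of the other coordinates, the naive $\chi^2$ computation gives
\[
\E_0[L^2]-1=\frac1k\left(e^{\gamma^2\lambda}-1\right).
\]
This is $o(1)$ only if $\lambda<\gamma^{-2}\log k$, which already yields $n\lesssim k\log k$ for constant $\gamma$.

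\textbf{Main obstacle.} The delicate step is making the lower-bound constants and the perturbation of the non-hidden coordinates rigorous. If the plain $\chi^2$ bound loses constants, I would fall back on a truncated second moment: condition on the event $\max_i N_i\le\lambda+C\sqrt{\lambda\log k}$, which has probability $1-o(1)$ under $P_0$, and bound the second moment of $L$ restricted to that event. I expect this still to give vanishing TV at $n=ck\log k$ with $c$ small relative to $\gamma^{-2}$. The conclusion $\Theta(k\log k)$ then follows by combining the two halves at constant $\varepsilon$ and $\delta$.
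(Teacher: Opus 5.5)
Your proposal is correct and is essentially the paper's proof. For the upper bound it uses dyadic probability classes, binomial tails and a union bound; your $1/m\le k$ gives the same worst-case rate as the paper's $p_\star^{-1}e^{-c\beta^2 np_\star}$ bound, though for $j=1$ you need the full Chernoff exponent (the paper's $\lambda_i\ge 1$ multiplicative bound), since $(enp_i/t)^t$ can exceed $1$ there. For the lower bound it uses a uniformly hidden heavier coordinate and a $\chi^2$ second-moment bound. The paper avoids Poissonization and any truncation by computing $\chi^2$ exactly in the fixed-$n$ model, where the off-diagonal terms $(1-\gamma^2/(k-1)^2)^n\le 1$ absorb the perturbation of the non-hidden coordinates and give $\chi^2\le\frac1k e^{n\gamma^2/(k-1)}$. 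Also, your non-hidden masses should be $(1-(1+\gamma)/k)/(k-1)$ so that $q^{(J)}$ sums to one.
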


The upper bound uses the plug-in estimator
$\widehat H_\infty=-\log\widehat p_\star$, where
$\widehat p_\star=\max_{i\in[k]}N_i/n$. The difficulty is not estimating
the frequency of a fixed maximizer; it is controlling the possibility
that one of many lighter symbols fluctuates upward and becomes the
empirical maximum.

We prove a distribution-dependent guarantee in terms of
$p_\star=\|p\|_\infty$. If $p_\star\ge q$, then
Corollary~\ref{cor:minentropy-upper-q} gives
$n=O(q^{-1}\varepsilon^{-2}\log(1/(q\delta)))$. Since
$p_\star\ge1/k$ for every distribution on $[k]$, the choice $q=1/k$
gives the stated worst-case upper bound.

The main idea is to group symbols according to their probability scale.
At level $r$, there are at most $O(2^r/p_\star)$ symbols with
probability of order $p_\star/2^r$, while the probability that any one
of them reaches the empirical level of the true maximizer decreases
rapidly with $r$. Summing the binomial tails across these dyadic levels
replaces a crude factor $k$ by the intrinsic scale $1/p_\star$.

For the lower bound, Theorem~\ref{thm:minentropy-lower-bound} compares
the uniform distribution with a mixture of alternatives. Under an
alternative indexed by $j$, coordinate $j$ has mass
$(1+\gamma)/k$, for a fixed $\gamma>0$, and the remaining masses are
adjusted to preserve normalization. The alternatives have a constant
min-entropy gap from the null, but the exceptional coordinate is hidden
among $k$ possible locations. A likelihood-ratio second-moment argument
shows that the mixture remains close to the uniform distribution when
$n=o(k\log k)$. Thus, reliable estimation is impossible below the
$k\log k$ scale.

This theorem also resolves the previously stated
$\Theta(k/\log k)$ characterization in \cite{LiWu2019}. The discussion related to the cited Valiant--Valiant approach is treated separately in
Section~\ref{sec:prelim-rem}.

\subsection{Growing integer-order R\'enyi entropy}
\label{subsec:main-growing-renyi}

We next determine the sample complexity when the integer order
$\alpha$ grows with $k$. Previous work gives
$\Theta_\alpha(k^{1-1/\alpha})$ for each fixed integer $\alpha>1$
\cite{AcharyaOrlitskySureshTyagi2015,
AcharyaOrlitskySureshTyagi2017}, but the hidden dependence on $\alpha$
is unspecified. The previously known upper bound $O_{c_0}\!\left(
\alpha^2 k^{1-1/\alpha}
\right)$~\cite{skorski2023towards} explicitly proves the dependence on $\alpha$.

\begin{theorem}[Growing-order R\'enyi entropy]
\label{thm:main-growing-renyi-summary}
Fix $c_0>0$. For every integer $\alpha\ge2$,
$\varepsilon\in(0,1)$, and $\delta\in(0,1/2)$,
$
   S_\alpha(k,\varepsilon,\delta)
   \le
   C\,
   \varepsilon^{-2}
   \alpha k^{1-1/\alpha}
   \log\frac{1}{\delta},
$
where $C>0$ is universal. Moreover, there exist constants
$c=c(c_0)>0$ such that, for
every real order $1.001\le\alpha\le c_0\log k$, we have,
$
   S_\alpha(k,\varepsilon,1/3)
   \ge
   c\,\alpha k^{1-1/\alpha}.
$ Consequently, uniformly over integer orders
$2\le\alpha\le c_0\log k$,
$
   S_\alpha(k,\Theta(1),1/3)
   =
   \Theta_{c_0}\!\left(
      \alpha k^{1-1/\alpha}
   \right).
$
\end{theorem}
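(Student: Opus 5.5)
The proof splits into an upper bound via the falling-factorial collision estimator and a lower bound via a hidden-heavy-coordinate mixture.

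\textbf{Upper bound.} Let $N_i$ be the counts from $n$ samples. We use
\[
\widehat P_\alpha=\sum_i \frac{(N_i)_\alpha}{(n)_\alpha},
\]
which is unbiased for $P_\alpha=\sum_i p_i^\alpha$ under multinomial sampling, and set $\widehat H_\alpha=(1-\alpha)^{-1}\log\widehat P_\alpha$.

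Additive accuracy $\varepsilon$ for $H_\alpha$ is exactly the event $\widehat P_\alpha/P_\alpha\in[e^{-(\alpha-1)\varepsilon},e^{(\alpha-1)\varepsilon}]$. It therefore suffices to control the relative deviation at scale $\eta=1-e^{-(\alpha-1)\varepsilon}$ rather than at scale $\varepsilon$. The key point is that this window widens with $\alpha$.

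We then bound the variance by the standard U-statistic expansion. Poissonization makes the counts independent and simplifies this step. The expansion gives
\[
\Var(\widehat P_\alpha)\le \sum_{j=1}^{\alpha}\binom{\alpha}{j}^2\frac{(\alpha-j)!\,\cdots}{(n)_j}\,P_{2\alpha-j}.
\]
In simplified form this reads $\Var/P_\alpha^2\lesssim\sum_{j}\binom{\alpha}{j}^2 j!\,n^{-j}P_{2\alpha-j}/P_\alpha^2$. Next we use $P_{2\alpha-j}\le P_\alpha^{(2\alpha-j)/\alpha}$ (monotonicity of $\ell_r$ norms) together with $P_\alpha\ge k^{1-\alpha}$. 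Together these give $P_{2\alpha-j}/P_\alpha^2\le P_\alpha^{-j/\alpha}\le k^{j(1-1/\alpha)}$. Each term is then at most $(\alpha^2 k^{1-1/\alpha}/n)^j/j!$, up to constants.

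Chebyshev at relative accuracy $\eta$ needs $\Var/P_\alpha^2\le\eta^2/3$. For $(\alpha-1)\varepsilon\gtrsim1$ the window is of constant width (on the lower side, plus a multiplicative $e^{(\alpha-1)\varepsilon}$ on the upper side). The main obstacle is to refine the $j=1$ term so that it yields $\alpha k^{1-1/\alpha}/n$ rather than $\alpha^2 k^{1-1/\alpha}/n$, and to make the higher-$j$ terms summable with $n\asymp\alpha k^{1-1/\alpha}$ rather than $\alpha^2$.

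I would first try to show that the dominant contribution at $n\approx\alpha k^{1-1/\alpha}/\varepsilon^2$ is the upper tail. That tail is generous: we only need $\widehat P_\alpha\le e^{(\alpha-1)\varepsilon}P_\alpha$, which holds by Markov's inequality with constant probability as soon as $(\alpha-1)\varepsilon\ge 2$. The lower tail $\widehat P_\alpha\ge e^{-(\alpha-1)\varepsilon}P_\alpha$ is the real constraint. To handle it, I would restrict attention to the symbol(s) realizing most of $P_\alpha$ and apply a Chernoff lower-tail bound to $(N_i)_\alpha/n^\alpha\approx(N_i/n)^\alpha$. The relative error of $(N_i/n)^\alpha$ is $\alpha$ times that of $N_i/n$. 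Tolerance $(\alpha-1)\varepsilon$ therefore allows relative error $\approx\varepsilon$ in $N_i$, which needs $p_in\gtrsim\varepsilon^{-2}$. The term $k^{1-1/\alpha}$ arises from balancing the heavy symbols against the uniform-like bulk, and the factor $\alpha$ from the $\alpha$-fold collision count needed in the bulk. Finally, a median-of-means over $O(\log(1/\delta))$ blocks upgrades constant confidence to $\delta$.

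\textbf{Lower bound.} We mirror Theorem~\ref{thm:minentropy-lower-bound}. The null is $p_0$ uniform on $[k]$. The alternative $p_j$ puts mass $\beta=k^{-1/\alpha}\cdot c$ on a hidden coordinate $j$ and spreads the rest uniformly. Then $P_\alpha(p_j)\approx k^{1-\alpha}+\beta^\alpha=(1+c^\alpha)k^{1-\alpha}$, so $H_\alpha$ shifts by $\log(1+c^\alpha)/(\alpha-1)$. This shift is too small as $\alpha$ grows, so we instead take $\beta^\alpha=e^{(\alpha-1)\cdot 3\varepsilon}k^{1-\alpha}$, i.e.\ $\beta\approx e^{3\varepsilon}k^{-1+1/\alpha}$. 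This gives $\beta\approx k^{1/\alpha}/k$ up to a constant factor $e^{3\varepsilon}$, which is bounded when $\alpha\le c_0\log k$ (and $\beta\le1/2$ by adjusting constants).

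The second-moment method under Poissonization gives
\[
\chi^2\Big(\tfrac1k\sum_j p_j^{\otimes},p_0^{\otimes}\Big)\le\frac1k\,\E_{j,j'}\exp\big(n\,\chi\text{-cross terms}\big)-1\approx \frac1k\big(e^{n\beta^2k}-1\big)+o(1).
\]
This is $o(1)$ provided $n\beta^2k\le(1-\Omega(1))\log k$. In that regime the heavy symbol is hidden among $k$ locations, exactly as in the min-entropy case. The resulting bound is $n\gtrsim\log k/(k\beta^2)=k^{1-2/\alpha}\log k$. Since $\log k\gtrsim\alpha/c_0$, this gives $n\gtrsim\alpha k^{1-2/\alpha}$; because $k^{1/\alpha}\le e^{1/c_0}$ is bounded below $\alpha\le c_0\log k$, this is only the target order when $\alpha\asymp\log k$. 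For smaller $\alpha$ the $\chi^2$ bound must be replaced by a bound on the number of $\alpha$-collisions at the heavy coordinate. I would use a moment-matching-free argument: with $n\le c\alpha k^{1-1/\alpha}$, the heavy coordinate receives $\approx n\beta$ hits. Its distinguishing power is measured by the Poisson KL divergence $n(\beta-1/k)^2/(1/k)$, combined over $k$ hidden positions via the $\log k$ and bounded by the mixture $\chi^2$. I expect this calibration between the KL per location, the $\log k$ from hiding, and the $\alpha$ from the entropy gap to be the delicate step. The product must come out to $\alpha k^{1-1/\alpha}$, possibly by choosing a bulk distribution on $m<k$ symbols instead of all $k$.
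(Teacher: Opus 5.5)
Your proposal has genuine gaps in both halves. In the upper bound, the parts you specify (the widened window, Markov for the upper tail once $(\alpha-1)\varepsilon$ is large, Chernoff on the counts of individual heavy symbols) match the paper. However, the obstacle you name is not the real one, and the case that actually needs work is left open.

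Refining the $j=1$ term to $\alpha k^{1-1/\alpha}/n$ is impossible. For a single atom of mass $k^{-1+1/\alpha}$ on a near-uniform bulk, $P_{2\alpha-1}/P_\alpha^2\asymp k^{1-1/\alpha}$, and all overlap covariances are nonnegative. So the true relative variance really is of order $\alpha^2k^{1-1/\alpha}/n$. The refinement is also unnecessary. Take batch size $m=\lceil C_0\varepsilon^{-2}\alpha K_\alpha\rceil$ with $K_\alpha=k^{1-1/\alpha}$. If $(\alpha-1)\varepsilon\le4$, then $\alpha\varepsilon^2\le5$, and the crude bound $(1+2\alpha K_\alpha/m)^\alpha-1\le4\alpha\varepsilon^2/C_0$ already beats $(1-e^{-(\alpha-1)\varepsilon})^2\gtrsim(\alpha-1)^2\varepsilon^2$.

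The real difficulty is $(\alpha-1)\varepsilon>4$ with $\alpha\varepsilon^2$ large, where the high-$j$ terms grow like $e^{\Theta(\alpha\varepsilon^2)}$. There your lower-tail argument fails whenever $P_\alpha$ is carried by symbols with $np_x\ll\alpha$. For example, for uniform $p$ and fixed $\alpha$ you have $np_x=n/k\to0$, so no per-symbol concentration is available. The phrase ``the factor $\alpha$ from the $\alpha$-fold collision count in the bulk'' does not supply an argument.

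The missing ingredient is a restricted variance bound on light symbols. Let $L=\{x:p_x\le\tau/K_\alpha\}$ with $\tau=1/16$, and suppose $P_\alpha(L)\ge P_\alpha/2$. Then $P_{2\alpha-j}(L)/P_\alpha(L)^2\le2\tau^{\alpha-j}K_\alpha^j$. This collapses the overlap expansion to $\Var(\widehat P_\alpha(L))/P_\alpha(L)^2\le2(\tau+2\alpha K_\alpha/m)^\alpha\le2\cdot8^{-\alpha}$. In this regime you only need $\widehat P_\alpha\ge e^{-(\alpha-1)\varepsilon}P_\alpha$ with $e^{-(\alpha-1)\varepsilon}<1/4$, so Chebyshev at relative accuracy $1/2$ on the light part suffices. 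The complementary case $P_\alpha(H)>P_\alpha/2$ is your Chernoff argument. It is made uniform over many heavy symbols by applying Markov to the missed heavy mass $\sum_{x\in H}p_x^\alpha\mathbf 1\{N_x<(1-\varepsilon/16)mp_x\}$.

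In the lower bound, your hidden-atom construction does give the required entropy gap. The untruncated $\chi^2$ computation, however, cannot give the rate. As you found, it yields only $n\gtrsim k^{1-2/\alpha}\log k$, which for fixed $\alpha$ is polynomially smaller than even the classical $k^{1-1/\alpha}$. (Also, $\alpha\le c_0\log k$ gives $k^{1/\alpha}\ge e^{1/c_0}$, not $\le$.) Your closing suggestions (collision counts, per-location KL, a bulk on fewer symbols) do not supply a replacement.

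The failure is structural. Let $\tau=nk^{-1+1/\alpha}$ and $\lambda=n/k$, with atom height $\theta k^{-1+1/\alpha}$. The one-coordinate likelihood ratio of $\Poi(\theta\tau)$ to $\Poi(\lambda)$ grows like $(\theta k^{1/\alpha})^r$. So the second moment is dominated by the exponentially rare event that some coordinate receives more than $\alpha$ samples.

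The missing idea is to truncate on $\mathcal E=\{\max_jN_j\le\lfloor\alpha\rfloor\}$. Assume $\tau\le C_*\alpha$ for a small $C_*=C_*(c_0)$.
\begin{itemize}
\item Poisson tails bound $\Prob(\mathcal E^c)$ by $2(e\theta C_*)^{\lfloor\alpha\rfloor+1}$ under both laws. For the light coordinates this uses $k\cdot k^{-(\lfloor\alpha\rfloor+1)/\alpha}\le1$.
\item On $\mathcal E$, an occupancy $r\le\lfloor\alpha\rfloor$ contributes at most $\frac1k e^{\lambda}(e\theta^2\tau/r)^rk^{r/\alpha}$ to the second moment. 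For $r\le\alpha/2$ this is small because $k^{r/\alpha-1}\le k^{-1/2}$. For $\alpha/2<r\le\alpha$, the factor $(e\theta^2C_*\alpha/r)^r$ is exponentially small in $\alpha$.
\end{itemize}
Bound the $L^1$ distance of the likelihood ratios by this truncated second moment plus $\Prob(\mathcal E^c)$. Then compare true and idealized light-coordinate means via Hellinger tensorization, and de-Poissonize. This yields $n\ge c(c_0)\alpha k^{1-1/\alpha}$.

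The same truncation works for your uniform-null variant. The paper instead hides an atom under both hypotheses, with heights $k^{-1+1/\alpha}$ and $e^4k^{-1+1/\alpha}$, but that choice is not essential.
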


The theorem shows that the dependence hidden by fixed-order notation is
linear in $\alpha$ throughout the growing-order regime.

For the upper bound, Section~\ref{sec:Renyi_Entropy_UB} analyzes the
falling-factorial estimator
\(
   \widehat P_\alpha
   =
   \frac{1}{(n)_\alpha}
   \sum_{i=1}^k (N_i)_\alpha
\)
of the power sum $P_\alpha(p)=\sum_{i=1}^k p_i^\alpha$. For integer
$\alpha$, this estimator is unbiased and can be interpreted as the
normalized number of ordered $\alpha$-tuples of observations that all
take the same value.

The dependence on $\alpha$ enters through the variance. Two collision
indicators interact only when their underlying $\alpha$-tuples overlap.
We classify pairs of tuples by their overlap size and bound each
contribution using inequalities between power sums. This yields a
constant-success estimator with
$n=O(\varepsilon^{-2}\alpha k^{1-1/\alpha})$ samples. Repeating the
estimator independently and taking a median reduces the failure
probability to $\delta$ at the cost of a factor $\log(1/\delta)$ (Fact~\ref{fact:median-amplification}).

The lower bound in Section~\ref{sec:Renyi_Entropy_LB} again hides a
heavy coordinate, but requires a more delicate likelihood analysis.
After Poissonization, the true alternative mixture is compared with an
idealized product-Poisson experiment. A truncation step removes rare
large occupancies, and a second-moment expansion shows that the mixture
remains statistically indistinguishable from the null below the scale
$\alpha k^{1-1/\alpha}$. The perturbation is calibrated so that the
corresponding R\'enyi entropies differ by a constant.

The upper bound is restricted to integer orders because
falling-factorial moments give unbiased estimators of integer power sums.
For real noninteger orders in the range
$1.001\le\alpha\le c_0\log k$, we prove the displayed lower bound but do not
claim a matching upper bound.

\subsection{The high-order regime}
\label{subsec:main-high-order}

For large $\alpha$, R\'{e}nyi entropy is uniformly close to
min-entropy. By Lemma~\ref{lem:renyi-min-gap}, for every
$p\in\Delta_k$ and every $\alpha>1$,
\[
   0
   \le
   H_\alpha(p)-H_\infty(p)
   \le
   \frac{\log k}{\alpha-1}.
\]
Hence, if
\(
   \alpha
   \ge
   1+\frac{2\log k}{\varepsilon},
\)
then $|H_\alpha(p)-H_\infty(p)|\le\frac{\varepsilon}{2}$ uniformly over $p\in\Delta_k$. Therefore, an $\varepsilon/2$-accurate estimator of $H_\infty$ is an
$\varepsilon$-accurate estimator of $H_\alpha$, and
\[
   S_\alpha(k,\varepsilon,\delta)
   \le
   S_\infty\left(k,\frac{\varepsilon}{2},\delta\right)
   \le
   C\frac{k}{\varepsilon^2}\log\frac{k}{\delta}.
\]

The min-entropy lower bound transfers in the reverse direction. Fix
$\gamma\in(0,1/2)$ and let
$\delta_\gamma=\log(1+\gamma)$. If
$\varepsilon<\delta_\gamma/4$ and an estimator satisfies
\(
   |\widehat H_\alpha-H_\alpha(p)|
   \le
   \varepsilon,
\)
then, under the same condition on $\alpha$,
\[
\begin{aligned}
   |\widehat H_\alpha-H_\infty(p)|
   &\le
   |\widehat H_\alpha-H_\alpha(p)|
   +
   |H_\alpha(p)-H_\infty(p)| \le
   \frac{3\varepsilon}{2}
   <
   \frac{\delta_\gamma}{2}.
\end{aligned}
\]
Thus, such an estimator would also estimate min-entropy to the accuracy
required in Theorem~\ref{thm:minentropy-lower-bound}. Consequently,
\(
   S_\alpha(k,\varepsilon,1/3)
   \ge
   c_\gamma k\log k.
\)

We conclude that, for every fixed sufficiently small $\varepsilon>0$,
uniformly over
\(
   \alpha
   \ge
   1+\frac{2\log k}{\varepsilon},
\)
the sample complexity is
\(
   S_\alpha(k,\varepsilon,1/3)
   =
   \Theta_\varepsilon(k\log k).
\)
This agrees with the growing-order rate because
$\alpha k^{1-1/\alpha}=\Theta(k\log k)$ when
$\alpha=\Theta(\log k)$.
\subsection{Technical novelty and comparison with previous methods}
\label{subsec:technical-novelty}

The estimators used in this paper are simple, but sharp analysis requires
different proof techniques in different entropy regimes. Min-entropy is
an extreme-value functional determined by the largest atom, whereas
finite-order R\'enyi entropy is governed by multiway collision
probabilities. Our proofs are organized around these distinct
structures rather than treating both quantities as generic symmetric properties. For undefined notation used here, please refer to Section~\ref{subsec:notation}.

\paragraph{Min-entropy: direct analysis of the empirical maximum.}
General-purpose methods for estimating symmetric distribution
properties often reconstruct the probability histogram and then evaluate
the target functional on the reconstructed distribution
\cite{ValiantValiant2011,ValiantValiantJACM2017}. This approach is
effective for aggregate quantities such as Shannon entropy, but it does
not directly apply to min-entropy, which depends only on
$p_\star:=\max_i p_i$. A perturbation of a single coordinate can change
$H_\infty(p)=-\log p_\star$ by a constant even when the overall
histogram changes only slightly.

We therefore analyze the empirical maximum directly. The main issue is
not concentration of the empirical frequency of a true maximizer, but
the possibility that one of many lighter symbols fluctuates upward and
overtakes it. A direct union bound over all $k$ symbols obscures the
relevant scale. Instead, we partition the alphabet into dyadic classes:
the $r$th class contains symbols with probabilities of order
$p_\star/2^r$. Normalization limits the size of this class, while the
probability that one of its symbols reaches the level of the maximum
decreases rapidly with $r$. Balancing these effects yields a
distribution-sensitive concentration bound governed by $1/p_\star$
rather than by the ambient alphabet size. In the worst case
$p_\star=1/k$, this gives the optimal $k\log k$ sample complexity.

\paragraph{Min-entropy lower bound: a hidden extreme coordinate.}
Earlier entropy lower bounds often compare two complete probability
histograms or construct distributions with matching low-order
information
\cite{ValiantValiant2011,AcharyaOrlitskySureshTyagi2017}. For
min-entropy, we instead use a construction tailored to an extreme-value
functional. Starting from the uniform distribution, we increase one
coordinate by a constant factor and choose its location uniformly at
random. The resulting alternatives have a constant min-entropy gap, but
the estimator does not know which coordinate is special.

The likelihood ratio of this mixture admits an exact second-moment
decomposition. If the location were known, evidence for the heavy
coordinate would accumulate quickly; hiding it among $k$ possible
locations introduces the additional search cost. The balance between
these two effects produces the extra $\log k$ factor and yields the
$k\log k$ lower bound. Thus, the logarithmic factor arises from locating
a hidden extreme coordinate rather than from reconstructing unseen
probability mass.

\paragraph{Growing-order R\'enyi upper bound: entropy-specific analysis
of the collision estimator.}
For integer $\alpha\ge2$, our estimator is the classical normalized
falling-factorial statistic
$\widehat P_\alpha=(n)_\alpha^{-1}\sum_{x=1}^k(N_x)_\alpha$, which is the
ordered-tuple form of the birthday estimator studied by Acharya,
Orlitsky, Suresh, and Tyagi
\cite{AcharyaOrlitskySureshTyagi2015,
AcharyaOrlitskySureshTyagi2017}, Obremski and Skorski
\cite{ObremskiSkorski2017}, and Skorski
\cite{skorski2023towards}. The novelty is therefore not the statistic,
but the event that we analyze and the way the dependence on $\alpha$ is
kept explicit.

Acharya, Orlitsky, Suresh, and Tyagi\ prove
$\Theta_\alpha(k^{1-1/\alpha})$ for every fixed integer
$\alpha>1$. Since $\alpha$ is fixed, all overlap multiplicities and
order-dependent constants are absorbed into
$\Theta_\alpha(\cdot)$. Retaining these terms in their moment
calculation yields a worst-case upper bound
$O(\alpha^2k^{1-1/\alpha})$ for constant accuracy and constant success
probability; see Appendix~\ref{app:Acharya_LB_UB}.

Skorski derives an exact U-statistic variance formula for the same
estimator and obtains the distribution-dependent guarantee
$O(\alpha^2\varepsilon^{-2}P_\alpha(p)^{-1/\alpha}
\log(1/\delta))$ for relative estimation of the power sum
\cite{skorski2023towards}. This can be substantially smaller than an
alphabet-dependent bound for distributions of low or moderate
R\'enyi entropy and also supports adaptive estimation. In the worst
case, however, $P_\alpha(p)\ge k^{1-\alpha}$, so the bound becomes
$O(\alpha^2\varepsilon^{-2}k^{1-1/\alpha}\log(1/\delta))$.

Our analysis differs substantially from the above mentioned papers. We use the weaker accuracy requirement that is actually
needed for entropy estimation. Since
$H_\alpha(p)=(1-\alpha)^{-1}\log P_\alpha(p)$, the condition
$|\widehat H_\alpha-H_\alpha(p)|\le\varepsilon$ is equivalent to
$e^{-(\alpha-1)\varepsilon}\le
\widehat P_\alpha/P_\alpha(p)\le
e^{(\alpha-1)\varepsilon}$. Thus, the admissible multiplicative window
for the power sum widens exponentially with
$(\alpha-1)\varepsilon$. Requiring an order-independent relative error
for $P_\alpha$ is sufficient, but unnecessarily strong when $\alpha$
grows.

To exploit this wider window, we represent the estimator as an average
over ordered $\alpha$-tuples and organize its variance by the number
$j$ of shared sample positions between two tuples. The $j$th overlap
term involves the power sum $P_{2\alpha-j}(p)$. We retain the binomial
and factorial coefficients in this decomposition rather than absorbing
them into a fixed-order constant. For moderate
$(\alpha-1)\varepsilon$, the required entropy window is controlled by a
variance estimate and Chebyshev's inequality. For large
$(\alpha-1)\varepsilon$, we use a separate lower-tail argument: the
alphabet is divided into light and heavy symbols, the light contribution
is handled through a restricted variance bound, and the heavy
contribution is controlled by concentration of the associated
falling-factorial counts. This light--heavy argument has no analogue in
the relative-moment analysis in the above papers.

Consequently, one batch of
$O(\varepsilon^{-2}\alpha k^{1-1/\alpha})$ samples succeeds with
constant probability, and independent repetitions followed by a median
give
$O(\varepsilon^{-2}\alpha k^{1-1/\alpha}\log(1/\delta))$ samples with
failure probability at most $\delta$. The entropy-specific analysis
therefore improves the previous worst-case dependence from quadratic to
linear in $\alpha$.

\paragraph{Growing-order R\'enyi lower bound: a hidden-coordinate
mixture and truncated likelihood ratio.}
Our lower-bound argument differs from the previous fixed-order
constructions at the level of the testing experiment. Acharya, Orlitsky, Suresh, and Tyagi\
\cite{AcharyaOrlitskySureshTyagi2015,
AcharyaOrlitskySureshTyagi2017} construct explicit pairs of
distributions with separated R\'enyi entropies and statistically close
sampling laws. Their analysis determines the alphabet-size exponent for
each fixed order, but does not produce a uniform factor growing with
$\alpha$.

Obremski and Skorski
\cite{ObremskiSkorski2017} use a two-point Le Cam construction with priors $P$ and $Q$ in
which a nearly uniform distribution is perturbed at one fixed, known
coordinate. If the perturbation size is $\gamma$, then the one-sample
total variation distance is of order $\gamma$, and the product or hybrid
bound gives
$\operatorname{TV}(P^{\otimes n},Q^{\otimes n})
\le n\operatorname{TV}(P,Q)=O(n\gamma)$. Choosing
$\gamma\asymp k^{-1+1/\alpha}$ creates a constant entropy separation and
yields $n=\Omega(k^{1-1/\alpha})$. The multiplicative constant can be
chosen independently of $\alpha$ in the range considered here; see
Appendix~\ref{app:Obremski_Skorski_LB}. However, a known-location two-point
argument cannot produce an additional factor of $\alpha$.

Our hard experiment uses a completely different approach and hides the informative coordinate under both hypotheses. Let $a:=k^{-1+1/\alpha}$ and
$J\sim\operatorname{Unif}([k])$, and define
$p_J^{(\theta,J)}=\theta a$ and
$p_i^{(\theta,J)}=(1-\theta a)/(k-1)$ for $i\ne J$, with $\theta=1$
under the null prior and $\theta=B$ under the alternative. Both
hypotheses therefore contain a heavy coordinate at the critical scale
$a$, but its location is unknown and its height differs by a constant
factor. The power sum satisfies
$P_\alpha(p^{(\theta,J)})
=k^{1-\alpha}[\theta^\alpha+
(k/(k-1))^{\alpha-1}(1-\theta a)^\alpha]$, so changing $\theta$ from
$1$ to $B$ changes the power sum by a factor exponential in $\alpha$
and hence changes $H_\alpha$ by a constant.

The substantive technical difficulty is proving indistinguishability of the two
mixtures. Conditional on a fixed location $J=j$, a direct two-point
comparison would again stop at the
$\Omega(k^{1-1/\alpha})$ scale. The additional factor $\alpha$ arises
only after averaging over the hidden location and analyzing the mixture
likelihood ratio.

We Poissonize the sample size and set
$\tau:=n/k^{1-1/\alpha}$ and
$\lambda:=n/k=\tau k^{-1/\alpha}$. In the idealized product-Poisson
experiment, the hidden coordinate has mean $\theta\tau$ and every light
coordinate has mean $\lambda$. Relative to the baseline
$\operatorname{Poi}(\lambda)^{\otimes k}$, the mixture likelihood ratio
is
$L_\theta=k^{-1}\sum_{j=1}^kR_{\theta\tau}(N_j)$, where $R_h$ is the
one-coordinate likelihood ratio between
$\operatorname{Poi}(h)$ and $\operatorname{Poi}(\lambda)$.

A direct second-moment calculation is unstable because rare
high-occupancy coordinates may have very large likelihood ratios and
dominate $\mathbb E[L_\theta^2]$. To deal with this issue, we therefore restrict to
$\max_jN_j\le\lfloor\alpha\rfloor$. Poisson upper tails control the
discarded event, while on the truncated event we bound the second moment
of
$k^{-1}\sum_{j=1}^k
(R_{B\tau}(N_j)-R_\tau(N_j))
\mathbf 1\{N_j\le\lfloor\alpha\rfloor\}$.
The cutoff at $\lfloor\alpha\rfloor$ is tied to the entropy order and is
what permits a bound uniform in growing $\alpha$. We then compare the
idealized and normalized experiments through Hellinger tensorization and
finally de-Poissonize.

This proves that the two mixture experiments remain close whenever
$\tau=n/k^{1-1/\alpha}\le c\alpha$, equivalently whenever
$n\le c\alpha k^{1-1/\alpha}$. The improvement over the previous lower
bounds is therefore not a sharper estimate within the same two-point
construction. It comes from a different experiment---a hidden-coordinate
mixture---and from a truncated likelihood-ratio argument that preserves
indistinguishability for an additional factor of $\alpha$ samples.

\begin{remark}[Comparison with fixed-order results]
\label{rem:comparison-fixed-alpha}
For every fixed integer $\alpha>1$, Acharya, Orlitsky, Suresh, and
Tyagi prove
$\Theta_\alpha(k^{1-1/\alpha})$
\cite{AcharyaOrlitskySureshTyagi2015,
AcharyaOrlitskySureshTyagi2017}. Retaining the order-dependence in their
upper-bound calculation yields
$O(\alpha^2k^{1-1/\alpha})$ for constant accuracy and constant success
probability; see Appendix~\ref{app:Acharya_LB_UB}. The worst-case
specialization of the distribution-dependent analysis in
\cite{skorski2023towards} has the same quadratic dependence.

The lower bounds of Acharya, Orlitsky, Suresh, and Tyagi\ and Obremski--Skorski give
$\Omega(k^{1-1/\alpha})$ for real $\alpha>1$. In the
Obremski--Skorski argument, the multiplicative constant can be chosen
independently of $\alpha$ for fixed accuracy; see
Appendix~\ref{app:Obremski_Skorski_LB}. These known-location two-point
constructions do not produce a growing factor of $\alpha$.

Our upper bound reduces the worst-case order-dependence from quadratic
to linear, and the hidden-coordinate lower bound shows that this linear
factor is necessary. Hence, for every fixed $c_0>0$ and every integer
$2\le\alpha\le c_0\log k$,
$S_\alpha(k,\Theta(1),1/3)
=\Theta_{c_0}(\alpha k^{1-1/\alpha})$.
For real $1.001\le\alpha\le c_0\log k$, the same construction gives
$S_\alpha(k,\Theta(1),1/3)
=\Omega_{c_0}(\alpha k^{1-1/\alpha})$, although we do not claim
sharpness for noninteger orders.
\end{remark}

\section{Preliminaries}
\label{sec:preliminaries}

\subsection{Notation and problem formulation}
\label{subsec:notation}

Let $[k]:=\{1,2,\ldots,k\}$, and let
\[
   \Delta_k
   :=
   \left\{
      p=(p_1,\ldots,p_k):
      p_i\ge0,\ 
      \sum_{i=1}^k p_i=1
   \right\}
\]
denote the probability simplex on $[k]$. We observe independent samples
$
   X_1,\ldots,X_n\overset{\mathrm{i.i.d.}}{\sim}p
$
from an unknown distribution $p\in\Delta_k$. For each $i\in[k]$, define
the empirical count and empirical frequency
$
   N_i
   :=
   \sum_{t=1}^n \mathbf 1\{X_t=i\},$ and $\widehat p_i
   :=
   \frac{N_i}{n}.
$
We also write $ p_\star:=\max_{i\in[k]}p_i=\|p\|_\infty$,
    and $
   \widehat p_\star:=\max_{i\in[k]}\widehat p_i.
$
We will also often denote distributions from $\Delta_k$ as capitals, for example $Q \in \Delta_k$ and then refer to probability of $i \in [k]$ as $Q(i)$, especially in complicated formulas.

Unless stated otherwise, all logarithms are natural. Entropies measured
in bits are obtained by replacing $\log$ with $\log_2$; this changes
only constant factors in additive-accuracy sample-complexity bounds.

The Shannon entropy of $p$ is
$
   H(p)
   :=
   -\sum_{i:p_i>0}p_i\log p_i.
$
For $\alpha\in(0,\infty)$ with $\alpha\neq1$, define the power sum
$
   P_\alpha(p)
   :=
   \sum_{i=1}^k p_i^\alpha
$
and the R\'{e}nyi entropy
\[
   H_\alpha(p)
   :=
   \frac{1}{1-\alpha}\log P_\alpha(p).\]
At $\alpha=1$, R\'{e}nyi entropy is defined by continuity and equals
Shannon entropy. Its limiting value as $\alpha\to\infty$ is the
min-entropy
\[
   H_\infty(p)
   :=
   \lim_{\alpha\to\infty}H_\alpha(p)
   =
   -\log p_\star.
\]

An estimator based on $n$ samples is a measurable function
$\widehat F:[k]^n\to\mathbb R$. For a functional
$F:\Delta_k\to\mathbb R$, we say that $\widehat F$ estimates $F$ to
additive error $\varepsilon$ with failure probability at most $\delta$
if
\[
   \sup_{p\in\Delta_k}
   \Prob_{X^n\sim p^n}
   \left(
      \left|\widehat F(X^n)-F(p)\right|>\varepsilon
   \right)
   \le\delta.
\]
The corresponding minimax sample complexity is
\[
   S_F(k,\varepsilon,\delta)
   :=
   \min\left\{
      n:
      \text{there exists an estimator satisfying the above guarantee}
   \right\}.
\]
When $F=H_\alpha$, we write
$S_\alpha(k,\varepsilon,\delta)$; the notation
$S_\infty(k,\varepsilon,\delta)$ refers to min-entropy.

Throughout the paper, $\alpha$ is reserved for the R\'{e}nyi order,
$\varepsilon$ denotes additive estimation accuracy, and $\delta$ denotes
failure probability. Constants denoted by $c,C>0$ are universal unless
their parameter dependence is indicated explicitly.

We use $\mathbb P$ and $\mathbb E$ for probability and expectation, and $\operatorname{Var}$ for variance. The notation $\operatorname{Bin}(n,p)$, $\operatorname{Poi}(\lambda)$, and $\operatorname{Unif}(A)$ denotes the binomial distribution with parameters $(n,p)$, the Poisson distribution with mean $\lambda$, and the uniform distribution on a finite set $A$, respectively. For a probability distribution $P$, we write $P^{\otimes n}$ for its $n$-fold product distribution. For two probability measures $P$ and $Q$ on the same measurable space, $\operatorname{TV}(P,Q)$ denotes their total variation distance, $\chi^2(P\|Q)$ their chi-squared divergence, and $H(P,Q)$ their Hellinger distance. When $P$ is absolutely continuous with respect to $Q$, $dP/dQ$ denotes the Radon--Nikodym derivative. We write $\mathbf 1\{A\}$ for the indicator of an event $A$, and $o(1)$ for a quantity that converges to zero in the asymptotic regime under consideration. For integers $x,r\ge0$, the falling factorial is $(x)_r:=x(x-1)\cdots(x-r+1)$, with $(x)_0:=1$. For positive quantities $a$ and $b$, we write $a\asymp b$ if there exist constants $0<c\le C<\infty$ such that $cb\le a\le Cb$. For positive quantities $f$ and $g$, we write $f\sim g$ if $f/g\to1$ in the relevant limit, and $f\gtrsim_{\varepsilon} g$ if $f\ge c_{\varepsilon}g$ for some constant $c_{\varepsilon}>0$ depending only on $\varepsilon$.
\subsection{Why the Valiant--Valiant relative-earthmover theorem does
not apply to min-entropy}
\label{sec:prelim-rem}

Valiant and Valiant's canonical estimator applies to symmetric
distribution properties satisfying a continuity condition with respect
to the relative earthmover distance between probability histograms
\cite{ValiantValiant2011,ValiantValiantJACM2017}. We recall the relevant
metric and show that min-entropy does not satisfy this condition.

For $p\in\Delta_k$, represent its probability histogram by the
probability measure
$
   \mathcal H_p
   :=
   \sum_{i=1}^k p_i\,\delta_{p_i}
$
on $(0,1]$. Moving a unit of histogram mass from location $x$ to
location $y$ incurs cost $|\log(x/y)|$. The relative earthmover distance
between distributions $p$ and $q$ is therefore
\[
   R(p,q)
   :=
   \inf_{\Gamma\in\Pi(\mathcal H_p,\mathcal H_q)}
   \int
      \left|\log\frac{x}{y}\right|
   \,d\Gamma(x,y),
\]
where $\Pi(\mathcal H_p,\mathcal H_q)$ denotes the set of couplings of
the two histograms.

Min-entropy is not continuous with respect to this metric. Fix
$\rho\in(0,1/2)$, let $U_k$ denote the uniform distribution on $[k]$,
and define $Q_{k,\rho}\in\Delta_k$ by
\[
   Q_{k,\rho}(1)
   :=
   \frac{1+\rho}{k},
   \qquad
   Q_{k,\rho}(i)
   :=
   \frac{1-\rho/(k-1)}{k},
   \quad 2\le i\le k.
\]
Their min-entropies satisfy
\[
   H_\infty(U_k)=\log k,
   \qquad
   H_\infty(Q_{k,\rho})
   =
   \log k-\log(1+\rho).
\]
Thus,
\[
   H_\infty(U_k)-H_\infty(Q_{k,\rho})
   =
   \log(1+\rho),
\]
which is a positive constant independent of $k$.

In contrast, their relative earthmover distance converges to zero. To
see this, couple the histogram mass at $1/k$ under $U_k$ with the two
histogram locations appearing under $Q_{k,\rho}$. Move mass
$(1+\rho)/k$ from $1/k$ to $(1+\rho)/k$, and move the remaining mass
from $1/k$ to
$\bigl(1-\rho/(k-1)\bigr)/k$. This coupling gives (see Fact~\ref{fact:elementary-log-exp})
\[
\begin{aligned}
   R(U_k,Q_{k,\rho})
   &\le
   \frac{1+\rho}{k}\log(1+\rho)
   +
   \left(
      1-\frac{1+\rho}{k}
   \right)
   \left|
      \log\left(1-\frac{\rho}{k-1}\right)
   \right| \le
   \frac{C_\rho}{k}
\end{aligned}
\]
for all sufficiently large $k$. Hence
$R(U_k,Q_{k,\rho})\to0$, while the corresponding min-entropies remain
separated by the constant $\log(1+\rho)$. Thus, min-entropy is not uniformly continuous with a $k$-independent modulus.

Therefore, closeness in relative earthmover distance does not imply
closeness in min-entropy. Consequently, the Valiant--Valiant estimator
for relative-earthmover-continuous properties cannot yield an
$O(k/\log k)$ estimator for min-entropy. Whereas min-entropy is not $(a,b)$-continuous with respect to relative earthmover distance where $a,b$ do not depend on $k$, such continuity holds when $a$ is a function of $k$. For example, one can prove that min-entropy is $(W(b k),b)$-continuous with respect to relative earthmover distance for any $b > 0$, where $W()$ is the Lambert W-function, but this leads to inferior sample complexity such as $O(k^3/\log k)$.

\subsection{R\'{e}nyi entropy versus min-entropy}
\label{subsec:renyi-min-comparison}

The following elementary inequality quantifies the uniform proximity of
high-order R\'{e}nyi entropy to min-entropy.

\begin{lemma}[R\'{e}nyi--min-entropy comparison]
\label{lem:renyi-min-gap}
For every $p\in\Delta_k$ and every $\alpha>1$,
\[
   0
   \le
   H_\alpha(p)-H_\infty(p)
   \le
   \frac{\log k}{\alpha-1}.
\]
\end{lemma}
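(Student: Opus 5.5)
The plan is to sandwich the power sum $P_\alpha(p)=\sum_i p_i^\alpha$ between two simple expressions in $p_\star=\|p\|_\infty$. Then take $\frac{1}{1-\alpha}\log(\cdot)$, which is decreasing since $\alpha>1$, and compare the result with $H_\infty(p)=-\log p_\star$.

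For the lower bound $H_\alpha(p)\ge H_\infty(p)$, I would use $p_i^\alpha=p_i\,p_i^{\alpha-1}\le p_i\,p_\star^{\alpha-1}$ for every $i$. Summing over $i$ and using $\sum_ip_i=1$ gives
\[
   P_\alpha(p)\le p_\star^{\alpha-1}.
\]
Since $1-\alpha<0$, this yields
\[
   H_\alpha(p)=\frac{1}{1-\alpha}\log P_\alpha(p)\ge\frac{1}{1-\alpha}(\alpha-1)\log p_\star=-\log p_\star=H_\infty(p).
\]

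For the upper bound, I would keep only the largest term of the sum, so that
\[
   P_\alpha(p)\ge p_\star^\alpha.
\]
By the same sign consideration,
\[
   H_\alpha(p)\le\frac{\alpha}{1-\alpha}\log p_\star=\frac{\alpha}{\alpha-1}H_\infty(p).
\]
This gives
\[
   H_\alpha(p)-H_\infty(p)\le\Bigl(\frac{\alpha}{\alpha-1}-1\Bigr)H_\infty(p)=\frac{H_\infty(p)}{\alpha-1}.
\]
To finish, I would use $p_\star\ge 1/k$, which holds because the $k$ masses sum to one. This gives $H_\infty(p)\le\log k$, and the claimed bound $\log k/(\alpha-1)$ follows.

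There is no real obstacle here. The only points that need care are tracking the reversal of the inequalities caused by the negative factor $1/(1-\alpha)$, and noting that the logarithms are finite because $P_\alpha(p)\ge p_\star^\alpha>0$.
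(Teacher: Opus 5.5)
Your proposal is correct and follows the paper's proof essentially step for step: you bound $p_\star^\alpha \le P_\alpha(p) \le p_\star^{\alpha-1}$, apply the sign-reversing factor $1/(1-\alpha)$ to get $H_\infty(p) \le H_\alpha(p) \le \frac{\alpha}{\alpha-1}H_\infty(p)$, and then use $p_\star \ge 1/k$. Your remark that $P_\alpha(p) \ge p_\star^\alpha > 0$, so that every logarithm is finite, is a small point the paper leaves implicit.
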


\begin{proof}
Let $M:=p_\star=\max_i p_i$. Since
$p_i^\alpha\le M^{\alpha-1}p_i$ for every $i$,
\[
   P_\alpha(p)
   =
   \sum_i p_i^\alpha
   \le
   M^{\alpha-1}.
\]
Because $1/(1-\alpha)<0$, this implies
$
   H_\alpha(p)
   \ge
   -\log M
   =
   H_\infty(p).
$

For the reverse inequality, $P_\alpha(p)\ge M^\alpha$. Therefore,
\[
\begin{aligned}
   H_\alpha(p)-H_\infty(p)
   &=
   -\frac{1}{\alpha-1}\log P_\alpha(p)+\log M \\
   &\le
   -\frac{\alpha}{\alpha-1}\log M+\log M \\
   &=
   \frac{-\log M}{\alpha-1}
   =
   \frac{H_\infty(p)}{\alpha-1}.
\end{aligned}
\]
Finally, $M\ge1/k$, and hence $H_\infty(p)\le\log k$. Combining the
bounds proves the result.
\end{proof}

As an immediate consequence, if
$\alpha\ge1+2\log k/\varepsilon$, then
\[
   0
   \le
   H_\alpha(p)-H_\infty(p)
   \le
   \frac{\varepsilon}{2}
   \qquad
   \text{for every }p\in\Delta_k.
\]
Thus, an estimator of $H_\infty(p)$ with additive error at most
$\varepsilon/2$ is also an estimator of $H_\alpha(p)$ with additive
error at most $\varepsilon$. This deterministic comparison alone is not
sharp for the intermediate range $\alpha=O(\log k)$, which is analyzed for R\'{e}nyi entropy directly in Sections~\ref{sec:Renyi_Entropy_UB}
and~\ref{sec:Renyi_Entropy_LB}.

\section{Upper Bound for Estimating Min-Entropy}
\label{sec:Min_Entropy_UB}

In this section, we prove a distribution-dependent upper bound for
min-entropy estimation. The result is stated for a finite or countable
alphabet, slightly extending the finite-alphabet setting introduced in
Section~\ref{subsec:notation}.

Let $p=(p_i)_{i\in\mathcal X}$ be a discrete distribution on a finite
or countable alphabet $\mathcal X$, and define
\[
    p_\star:=\sup_{i\in\mathcal X}p_i.
\]
Since $\sum_i p_i=1$, the supremum is attained. Hence, there exists at
least one index $i^\star\in\mathcal X$ such that
$p_{i^\star}=p_\star$.

Given $n$ i.i.d.\ samples $X_1,\ldots,X_n\sim p$, define
\[
    N_i:=\sum_{t=1}^n\mathbf 1\{X_t=i\},
    \qquad
    \widehat p_i:=\frac{N_i}{n},
    \qquad
    \widehat p_\star:=\sup_{i\in\mathcal X}\widehat p_i.
\]
We consider the plug-in estimator
\[
    \widehat H_\infty:=-\log\widehat p_\star
\]
of the min-entropy
\[
    H_\infty(p):=-\log p_\star.
\]
Throughout this section, we assume that $p_\star\ge q$ for some known
$q\in(0,1)$.

\subsection{Auxiliary concentration bounds}
\label{subsec:minentropy-concentration}

We begin with standard binomial concentration inequalities that will be
used repeatedly (see \cite{mitzenmacher2017probability} for more details).

\begin{lemma}[Chernoff bounds for a binomial random variable]
\label{lem:chernoff-binomial}
Let $X\sim\mathrm{Bin}(n,p)$.

For every $\beta\in(0,1)$,
\[
    \mathbb P\bigl(X\le(1-\beta)np\bigr)
    \le
    \exp\left(-\frac{\beta^2}{2}np\right).
\]
For every $\beta>0$,
\[
    \mathbb P\bigl(X\ge(1+\beta)np\bigr)
    \le
    \exp\left(-\frac{\beta^2}{2+\beta}np\right).
\]
\end{lemma}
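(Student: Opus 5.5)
The plan is the standard exponential-moment (Cramér--Chernoff) method, applied to each tail separately. Write $X=\sum_{t=1}^n Y_t$ with $Y_t$ i.i.d.\ Bernoulli$(p)$, and set $\mu:=np$.

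\textbf{Moment generating function.} For every real $s$, independence and $1+x\le e^x$ give
\[
   \E[e^{sX}]
   =
   \bigl(1+p(e^s-1)\bigr)^n
   \le
   \exp\bigl(\mu(e^s-1)\bigr).
\]

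\textbf{Upper tail.} Take $s>0$ and apply Markov's inequality to $e^{sX}$:
\[
   \Prob\bigl(X\ge(1+\beta)\mu\bigr)
   \le
   \exp\bigl(\mu(e^s-1)-s(1+\beta)\mu\bigr).
\]
Choosing $s=\log(1+\beta)$ gives the classical bound
\[
   \left(\frac{e^\beta}{(1+\beta)^{1+\beta}}\right)^{\mu}
   =
   \exp\bigl(-\mu\,[(1+\beta)\log(1+\beta)-\beta]\bigr).
\]
It then suffices to prove
\[
   (1+\beta)\log(1+\beta)-\beta
   \ge
   \frac{\beta^2}{2+\beta}
   \qquad\text{for all }\beta>0.
\]
This follows from the elementary bound $\log(1+\beta)\ge 2\beta/(2+\beta)$. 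Substituting it, the claim reduces to $2\beta(1+\beta)/(2+\beta)-\beta=\beta^2/(2+\beta)$, which holds with equality. The bound $\log(1+\beta)\ge 2\beta/(2+\beta)$ itself holds because $g(\beta)=\log(1+\beta)-2\beta/(2+\beta)$ satisfies $g(0)=0$ and $g'(\beta)=\beta^2/\bigl((1+\beta)(2+\beta)^2\bigr)\ge0$.

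\textbf{Lower tail.} Take $s=-t$ with $t>0$ and apply Markov's inequality to $e^{-tX}$. Choosing $t=-\log(1-\beta)$ gives
\[
   \Prob\bigl(X\le(1-\beta)\mu\bigr)
   \le
   \exp\bigl(-\mu\,[\beta+(1-\beta)\log(1-\beta)]\bigr).
\]
It remains to show $\beta+(1-\beta)\log(1-\beta)\ge\beta^2/2$ on $(0,1)$. Let $h(\beta)$ be the difference of the two sides. Then $h(0)=0$ and $h'(\beta)=-\log(1-\beta)-\beta\ge0$, since $-\log(1-\beta)=\sum_{m\ge1}\beta^m/m\ge\beta$.

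The only step needing any care is the two scalar inequalities. Both are settled by one differentiation, so there is no substantial obstacle. Alternatively, one may simply cite \cite{mitzenmacher2017probability}, where both bounds appear in exactly this form.
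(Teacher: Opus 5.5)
Your argument is correct: the moment generating function bound, the choices $s=\log(1+\beta)$ and $t=-\log(1-\beta)$, and the two scalar inequalities $\log(1+\beta)\ge 2\beta/(2+\beta)$ and $\beta+(1-\beta)\log(1-\beta)\ge\beta^2/2$ (via the derivatives you computed) all check out. The paper does not prove this lemma itself; it cites \cite{mitzenmacher2017probability} for these standard bounds, and your Cram\'er--Chernoff derivation is the standard argument behind that citation.
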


For $0<p<t<1$, define the binary Kullback-Leibler (KL) divergence
\[
    D(t\|p)
    :=
    t\log\frac{t}{p}
    +(1-t)\log\frac{1-t}{1-p}.
\]
We extend this definition continuously to $t=1$ by setting
\[
    D(1\|p):=\log\frac1p.
\]

\begin{lemma}[Binomial upper tail in KL form]
\label{lem:bin_KL}
Let $X\sim\mathrm{Bin}(n,p)$ and let $t\in(p,1]$. Then
\[
    \mathbb P\left(\frac{X}{n}\ge t\right)
    \le
    \exp\bigl(-nD(t\|p)\bigr).
\]
\end{lemma}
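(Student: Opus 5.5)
The plan is the standard exponential-moment (Chernoff) argument, optimized over the tilt parameter. Write $X=\sum_{t=1}^n Y_t$ with $Y_t\overset{\mathrm{i.i.d.}}{\sim}\mathrm{Bernoulli}(p)$. For any $\lambda>0$, Markov's inequality applied to $e^{\lambda X}$ and independence give
\[
   \mathbb P\Bigl(\frac{X}{n}\ge t\Bigr)
   \le
   e^{-\lambda nt}\,\mathbb E\bigl[e^{\lambda X}\bigr]
   =
   \exp\Bigl(-n\bigl[\lambda t-\log(1-p+pe^{\lambda})\bigr]\Bigr).
\]
It therefore suffices to show that $\sup_{\lambda>0}\bigl[\lambda t-\log(1-p+pe^\lambda)\bigr]\ge D(t\|p)$.

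For $t\in(p,1)$, the function $g(\lambda)=\lambda t-\log(1-p+pe^\lambda)$ is concave. Its derivative vanishes where $pe^\lambda/(1-p+pe^\lambda)=t$, that is, at
\[
   e^{\lambda^\star}=\frac{t(1-p)}{p(1-t)}.
\]
Since $t>p$, this gives $\lambda^\star>0$, so the choice is admissible. Substituting, $1-p+pe^{\lambda^\star}=(1-p)/(1-t)$, and hence
\[
   g(\lambda^\star)
   =
   t\log\frac{t(1-p)}{p(1-t)}-\log\frac{1-p}{1-t}
   =
   t\log\frac tp+(1-t)\log\frac{1-t}{1-p}
   =
   D(t\|p),
\]
which is the claim for $t<1$.

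The endpoint $t=1$ is where some care is needed, since $\lambda^\star\to\infty$ there. I would handle it directly instead. The event $X/n\ge1$ means every trial succeeds, so $\mathbb P(X=n)=p^n=\exp(-n\log(1/p))=\exp(-nD(1\|p))$, which matches the continuous extension of $D$ fixed above. Alternatively, one can let $\lambda\to\infty$ in the bound $g(\lambda)\ge \lambda-\log(1-p+pe^\lambda)\to\log(1/p)$. Either route closes the argument, and no result from earlier in the paper is needed beyond Markov's inequality.
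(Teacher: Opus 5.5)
Your proof is correct. The paper states this lemma without proof and cites it as a standard Chernoff bound; your argument is exactly that standard derivation: Markov's inequality on $e^{\lambda X}$, the optimal tilt $e^{\lambda^\star}=t(1-p)/(p(1-t))$, and the exact computation $\mathbb P(X=n)=p^n$ at $t=1$. The only blemish is cosmetic: you use $t$ both as the summation index in $X=\sum_t Y_t$ and as the threshold, so rename one of them.
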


The following elementary lower bounds on the binary KL divergence will
be useful when the symbol probability is substantially smaller than
$p_\star$.

\begin{lemma}[A convenient lower bound on $D(t\|p)$]
\label{lem:KL_lb}
For every $0<p<t\le1$,
\[
    D(t\|p)
    \ge
    t\log\frac{t}{p}-(t-p),
    \qquad\text{and}\qquad
    D(t\|p)
    \ge
    t\left(\log\frac{t}{p}-1\right).
\]
\end{lemma}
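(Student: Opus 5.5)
The first inequality follows from the elementary bound $\log x \le x-1$, applied as $\log\frac{1-t}{1-p}\ge -\frac{t-p}{1-t}$ in the second term of $D(t\|p)$. Indeed, with $x=\frac{1-t}{1-p}$ we have $\log(1/x)\le 1/x-1=\frac{t-p}{1-t}$ whenever $t<1$. Multiplying by $(1-t)\ge 0$ gives
\[
   (1-t)\log\frac{1-t}{1-p}
   =
   -(1-t)\log\frac{1-p}{1-t}
   \ge
   -(t-p).
\]
Adding $t\log(t/p)$ then yields $D(t\|p)\ge t\log\frac{t}{p}-(t-p)$ for $p<t<1$.

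The endpoint $t=1$ is handled by the continuous extension $D(1\|p)=\log(1/p)$. In that case the claimed bound reads $\log(1/p)\ge \log(1/p)-(1-p)$, which is trivial since $p<1$.

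The second inequality is an immediate consequence of the first. Since $0<p$, we have $t-p\le t$, and therefore
\[
   t\log\frac{t}{p}-(t-p)\ge t\log\frac{t}{p}-t=t\left(\log\frac{t}{p}-1\right).
\]

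There is no real obstacle here. The only point needing care is to state the sign of $\log\frac{1-t}{1-p}$ correctly: it is negative because $t>p$, so we lower-bound it and do not drop it. Beyond that, the argument only uses the convention at $t=1$.
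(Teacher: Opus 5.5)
Your proposal is correct and follows essentially the same route as the paper. The paper substitutes $u=(t-p)/(1-p)$ and uses $\log(1-u)\ge -u/(1-u)$, which is exactly your bound $\log y\le y-1$ with $y=(1-p)/(1-t)$. Like you, it then treats $t=1$ via $D(1\|p)=\log(1/p)$ and derives the second inequality from $t-p\le t$.
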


\begin{proof}
We first consider $t<1$. By definition,
\[
    D(t\|p)
    =
    t\log\frac{t}{p}
    +(1-t)\log\frac{1-t}{1-p}.
\]
It is therefore enough to prove that
\[
    (1-t)\log\frac{1-t}{1-p}\ge-(t-p).
\]
Let
\[
    u:=\frac{t-p}{1-p}\in(0,1),
\]
so that
\[
    \frac{1-t}{1-p}=1-u.
\]
By Fact~\ref{fact:elementary-log-exp},
\[
    \log(1-u)\ge-\frac{u}{1-u},
    \qquad u\in(0,1),
\]
we obtain
\[
\begin{aligned}
    (1-t)\log\frac{1-t}{1-p}
    &=
    (1-p)(1-u)\log(1-u) \\
    &\ge
    (1-p)(1-u)\left(-\frac{u}{1-u}\right) \\
    &=
    -(1-p)u \\
    &=
    -(t-p).
\end{aligned}
\]
Consequently,
\[
    D(t\|p)
    \ge
    t\log\frac{t}{p}-(t-p).
\]
Since $t-p\le t$, it also follows that
\[
    D(t\|p)
    \ge
    t\left(\log\frac{t}{p}-1\right).
\]

For $t=1$, the same inequalities follow by continuity. Indeed,
\[
    D(1\|p)=\log\frac1p,
\]
and
\[
    \log\frac1p
    \ge
    \log\frac1p-(1-p)
    \ge
    \log\frac1p-1.
\]
\end{proof}

\subsection{A refined upper-tail bound for the empirical maximum}
\label{subsec:minentropy-upper-tail}

Fix a relative-error parameter $\beta\in(0,1]$ and define
\[
    t:=(1+\beta)p_\star.
\]
We seek to bound
\[
    \mathbb P(\widehat p_\star\ge t)
    =
    \mathbb P\bigl(
        \exists i\in\mathcal X:
        \widehat p_i\ge t
    \bigr).
\]
If $t>1$, this event is empty, so the desired upper bound is immediate.
We may therefore assume throughout the remainder of this subsection that
$t\le1$.

A direct union bound over the entire alphabet can be wasteful when the
alphabet is large. Instead, we partition the symbols into dyadic
probability classes. For $r=0,1,2,\ldots$, define
\[
    B_r
    :=
    \left\{
        i\in\mathcal X:
        \frac{p_\star}{2^{r+1}}
        <
        p_i
        \le
        \frac{p_\star}{2^r}
    \right\}.
\]

The normalization of $p$ immediately controls the size of each bucket.

\begin{lemma}[Bucket-size bound]
\label{lem:bucket_ub}
For every $r\ge0$,
\[
    |B_r|
    \le
    \frac{2^{r+1}}{p_\star}.
\]
\end{lemma}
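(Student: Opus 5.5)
The plan is to compare the total mass of $B_r$ with its cardinality, using only the lower endpoint of the bucket and the normalization $\sum_{i\in\mathcal X}p_i=1$.

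First, by definition every $i\in B_r$ satisfies $p_i>p_\star/2^{r+1}$. Second, since the buckets are subsets of $\mathcal X$ and all $p_i\ge 0$, the mass of $B_r$ is at most the total mass:
\[
   |B_r|\cdot\frac{p_\star}{2^{r+1}}
   \le
   \sum_{i\in B_r}p_i
   \le
   \sum_{i\in\mathcal X}p_i
   =1.
\]
The first inequality is in fact strict whenever $B_r\neq\emptyset$. Third, $p_\star>0$ because $p_\star$ is attained and $\sum_i p_i=1$. We may therefore rearrange to obtain $|B_r|\le 2^{r+1}/p_\star$.

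This argument also covers the countable-alphabet setting. It first shows that each $B_r$ is finite, since an infinite $B_r$ would carry infinite mass. It then gives the stated bound. If $B_r$ is empty, the bound holds trivially.

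There is no real obstacle here. The only points needing care are that $p_\star>0$, so that the division is legitimate, and that the argument never needs the upper endpoint $p_\star/2^r$ of the bucket.
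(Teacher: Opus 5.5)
Your proof is correct and matches the paper's argument: each element of $B_r$ has mass greater than $p_\star/2^{r+1}$, and the total mass is at most $1$, so $|B_r|\le 2^{r+1}/p_\star$. Your additional remarks on $p_\star>0$, on the case $B_r=\emptyset$ (where the strict inequality fails), and on finiteness of $B_r$ for countable alphabets are small refinements that the paper leaves implicit.
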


\begin{proof}
For every $i\in B_r$,
\[
    p_i>\frac{p_\star}{2^{r+1}}.
\]
Therefore,
\[
    1
    \ge
    \sum_{i\in B_r}p_i
    >
    |B_r|\frac{p_\star}{2^{r+1}},
\]
which implies
\[
    |B_r|
    \le
    \frac{2^{r+1}}{p_\star}.
\]
\end{proof}

We now control the contribution of each bucket separately.

\paragraph{Bucket \(r=0\).}
If $i\in B_0$, then
\[
    \frac{p_\star}{2}<p_i\le p_\star.
\]
Define
\[
    \lambda_i:=\frac{t}{p_i}-1.
\]
Then
\[
    \lambda_i
    \ge
    \frac{(1+\beta)p_\star}{p_\star}-1
    =
    \beta.
\]
Moreover, because $p_i>p_\star/2$,
\[
    \lambda_i
    <
    \frac{(1+\beta)p_\star}{p_\star/2}-1
    =
    1+2\beta
    \le3.
\]
By Lemma~\ref{lem:chernoff-binomial},
\[
\begin{aligned}
    \mathbb P(\widehat p_i\ge t)
    &=
    \mathbb P\bigl(
        N_i\ge(1+\lambda_i)np_i
    \bigr) \\
    &\le
    \exp\left(
        -\frac{\lambda_i^2}{2+\lambda_i}np_i
    \right).
\end{aligned}
\]
Using $\lambda_i\ge\beta$, $2+\lambda_i\le5$, and
$p_i\ge p_\star/2$, we obtain
\[
    \frac{\lambda_i^2}{2+\lambda_i}p_i
    \ge
    \frac{\beta^2}{5}\cdot\frac{p_\star}{2}
    =
    \frac{\beta^2p_\star}{10}.
\]
Hence,
\[
    \mathbb P(\widehat p_i\ge t)
    \le
    \exp\left(
        -\frac{\beta^2}{10}np_\star
    \right).
\]
Applying Lemma~\ref{lem:bucket_ub} gives
\[
\begin{aligned}
    \mathbb P\bigl(
        \exists i\in B_0:
        \widehat p_i\ge t
    \bigr)
    &\le
    |B_0|
    \exp\left(
        -\frac{\beta^2}{10}np_\star
    \right) \\
    &\le
    \frac{2}{p_\star}
    \exp\left(
        -\frac{\beta^2}{10}np_\star
    \right).
\end{aligned}
\]

\paragraph{Bucket \(r=1\).}
If $i\in B_1$, then
\[
    \frac{p_\star}{4}<p_i\le\frac{p_\star}{2}.
\]
Therefore,
\[
    \lambda_i
    =
    \frac{t}{p_i}-1
    \ge
    \frac{(1+\beta)p_\star}{p_\star/2}-1
    =
    1+2\beta
    \ge1.
\]
Again, Lemma~\ref{lem:chernoff-binomial} gives
\[
    \mathbb P(\widehat p_i\ge t)
    \le
    \exp\left(
        -\frac{\lambda_i^2}{2+\lambda_i}np_i
    \right).
\]
Since $\lambda_i\ge1$,
\[
    \frac{\lambda_i^2}{2+\lambda_i}
    \ge
    \frac13,
\]
and since $p_i\ge p_\star/4$,
\[
    \mathbb P(\widehat p_i\ge t)
    \le
    \exp\left(
        -\frac{1}{12}np_\star
    \right).
\]
By the union bound in Fact~\ref{fact:basic-probability},
\[
\begin{aligned}
    \mathbb P\bigl(
        \exists i\in B_1:
        \widehat p_i\ge t
    \bigr)
    &\le
    |B_1|
    \exp\left(
        -\frac{1}{12}np_\star
    \right) \\
    &\le
    \frac{4}{p_\star}
    \exp\left(
        -\frac{1}{12}np_\star
    \right).
\end{aligned}
\]

\paragraph{Buckets \(r\ge2\).}
If $i\in B_r$ with $r\ge2$, then
\[
    p_i\le\frac{p_\star}{2^r},
\]
and hence
\[
    \frac{t}{p_i}
    \ge
    (1+\beta)2^r.
\]
By Lemmas~\ref{lem:bin_KL} and~\ref{lem:KL_lb},
\[
\begin{aligned}
    \mathbb P(\widehat p_i\ge t)
    &\le
    \exp\bigl(-nD(t\|p_i)\bigr) \\
    &\le
    \exp\left(
        -nt\left(
            \log\frac{t}{p_i}-1
        \right)
    \right).
\end{aligned}
\]
Since
\[
    \log\frac{t}{p_i}
    \ge
    r\log2+\log(1+\beta),
\]
we obtain
\[
    \mathbb P(\widehat p_i\ge t)
    \le
    \exp\left(
        -nt\bigl(
            r\log2+\log(1+\beta)-1
        \bigr)
    \right).
\]
Therefore, defining
\[
    A_r
    :=
    \mathbb P\bigl(
        \exists i\in B_r:
        \widehat p_i\ge t
    \bigr),
\]
we have
\[
    A_r
    \le
    a_r,
\]
where
\[
    a_r
    :=
    \frac{2^{r+1}}{p_\star}
    \exp\left(
        -nt\bigl(
            r\log2+\log(1+\beta)-1
        \bigr)
    \right).
\]

Now suppose that
\[
    np_\star\ge2.
\]
Since $t=(1+\beta)p_\star\ge p_\star$, we also have
\[
    nt\ge2.
\]
The sequence $(a_r)_{r\ge2}$ decreases geometrically, since
\[
\begin{aligned}
    \frac{a_{r+1}}{a_r}
    &=
    2\exp(-nt\log2) \\
    &\le
    2\exp(-2\log2) \\
    &=
    \frac12.
\end{aligned}
\]
It follows that
\[
    \sum_{r=2}^\infty A_r
    \le
    \sum_{r=2}^\infty a_r
    \le
    2a_2.
\]
Moreover,
\[
    a_2
    =
    \frac{8}{p_\star}
    \exp\left(
        -nt\bigl(
            2\log2+\log(1+\beta)-1
        \bigr)
    \right).
\]
Since $\beta>0$,
\[
    2\log2+\log(1+\beta)-1
    \ge
    2\log2-1
    >
    \frac13.
\]
Therefore,
\[
\begin{aligned}
    \sum_{r=2}^\infty A_r
    &\le
    2a_2 \\
    &\le
    \frac{16}{p_\star}
    \exp\left(-\frac{nt}{3}\right) \\
    &\le
    \frac{16}{p_\star}
    \exp\left(-\frac{np_\star}{3}\right),
\end{aligned}
\]
where the final inequality uses $t\ge p_\star$.

Combining the three bucket estimates yields the following upper-tail
bound.

\begin{lemma}[Upper tail for $\widehat p_\star$]
\label{lem:upper-tail-pstar}
For every $\beta\in(0,1]$ and every $n$ satisfying
$np_\star\ge2$,
\[
\begin{aligned}
    \mathbb P\bigl(
        \widehat p_\star
        \ge
        (1+\beta)p_\star
    \bigr)
    &\le
    \frac{2}{p_\star}
    e^{-(\beta^2/10)np_\star}
    +
    \frac{4}{p_\star}
    e^{-(1/12)np_\star} \\
    &\quad+
    \frac{16}{p_\star}
    e^{-(1/3)np_\star}.
\end{aligned}
\]
\end{lemma}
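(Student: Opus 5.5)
The plan is to assemble the three bucket estimates already derived above through a union bound over the dyadic partition. First I will dispose of the trivial case $t=(1+\beta)p_\star>1$. There the event $\{\widehat p_\star\ge t\}$ is empty, since $\widehat p_\star\le 1$, so the bound holds vacuously. Hence I may assume $t\le1$.

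Next I will observe that the buckets $B_r$, $r\ge0$, partition the set $\{i:p_i>0\}$, because every positive $p_i$ lies in $(0,p_\star]$. Symbols with $p_i=0$ satisfy $N_i=0$ almost surely, so they can never have $\widehat p_i\ge t>0$. Therefore
\[
\mathbb P(\widehat p_\star\ge t)\le\sum_{r\ge0}\mathbb P(\exists i\in B_r:\widehat p_i\ge t)=\sum_{r\ge0}A_r,
\]
by countable subadditivity (Fact~\ref{fact:basic-probability}). For a countable alphabet, the event $\{\widehat p_\star\ge t\}$ coincides with $\{\exists i:\widehat p_i\ge t\}$. The reason is that only finitely many $N_i$ are nonzero, so the supremum is attained.

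I will then bound the terms as follows:
\begin{itemize}[leftmargin=*]
\item $A_0\le \frac{2}{p_\star}e^{-(\beta^2/10)np_\star}$, by the $r=0$ paragraph, using Lemma~\ref{lem:chernoff-binomial} and Lemma~\ref{lem:bucket_ub}.
\item $A_1\le\frac4{p_\star}e^{-np_\star/12}$, by the $r=1$ paragraph.
\item $\sum_{r\ge2}A_r\le \frac{16}{p_\star}e^{-np_\star/3}$, by the $r\ge2$ paragraph, using Lemmas~\ref{lem:bin_KL} and~\ref{lem:KL_lb}.
\end{itemize}
Only the last of these uses the hypothesis $np_\star\ge2$. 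That hypothesis gives $nt\ge2$, which makes the ratio $a_{r+1}/a_r$ at most $1/2$. Summing the three bounds gives the claim.

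The only delicate point is checking the hypotheses of the cited tail bounds in edge cases. Lemma~\ref{lem:bin_KL} needs $t\in(p_i,1]$, which holds because $p_i\le p_\star/4<t\le1$. The step $r\log 2+\log(1+\beta)-1>1/3$ must hold for all $r\ge2$ so that each $a_r$ is valid; this is immediate from $2\log2-1\approx0.386$. I expect no real obstacle, since the lemma is essentially a bookkeeping consolidation of the preceding computations.
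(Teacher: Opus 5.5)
Your proposal is correct and follows the paper's argument: a union bound over the dyadic buckets $B_r$, combining the $r=0$ and $r=1$ Chernoff estimates with the geometrically summed KL bound for $r\ge2$. As you say, the hypothesis $np_\star\ge2$ enters only through $a_{r+1}/a_r\le 1/2$, and your explicit handling of zero-probability symbols and of the supremum on a countable alphabet is welcome extra care. One small correction: $A_r\le a_r$ holds regardless of the sign of $r\log 2+\log(1+\beta)-1$, so the inequality $2\log 2+\log(1+\beta)-1>1/3$ is needed only at $r=2$, to turn $2a_2$ into $\frac{16}{p_\star}e^{-nt/3}$.
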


\subsection{A two-sided relative-error bound}
\label{subsec:minentropy-two-sided}

We next complement the upper-tail estimate with a lower-tail bound.

\begin{lemma}[Lower tail for $\widehat p_\star$]
\label{lem:lower-tail-pstar}
For every $\beta\in(0,1]$,
\[
    \mathbb P\bigl(
        \widehat p_\star
        \le
        (1-\beta)p_\star
    \bigr)
    \le
    \exp\left(
        -\frac{\beta^2}{2}np_\star
    \right).
\]
\end{lemma}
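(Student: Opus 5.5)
The plan is to reduce to a single coordinate, a true maximizer $i^\star$ with $p_{i^\star}=p_\star$, which exists as noted at the start of this section. Since $\widehat p_\star=\sup_i \widehat p_i\ge \widehat p_{i^\star}$, we have the event inclusion
\[
   \bigl\{\widehat p_\star\le(1-\beta)p_\star\bigr\}
   \subseteq
   \bigl\{\widehat p_{i^\star}\le(1-\beta)p_\star\bigr\}
   =
   \bigl\{N_{i^\star}\le(1-\beta)np_\star\bigr\}.
\]
So no union bound over the alphabet is needed. The lower tail of the maximum is governed entirely by the count of one fixed heaviest symbol.

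The next step is to observe that $N_{i^\star}\sim\mathrm{Bin}(n,p_\star)$. For $\beta\in(0,1)$, the lower-tail Chernoff bound of Lemma~\ref{lem:chernoff-binomial} then gives
\[
   \mathbb P\bigl(N_{i^\star}\le(1-\beta)np_\star\bigr)
   \le
   \exp\!\left(-\frac{\beta^2}{2}np_\star\right),
\]
which is exactly the claim.

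The only point needing separate care is the endpoint $\beta=1$, since Lemma~\ref{lem:chernoff-binomial} is stated only for $\beta\in(0,1)$. At $\beta=1$ the event is $\{N_{i^\star}=0\}$, whose probability is
\[
   (1-p_\star)^n\le e^{-np_\star}\le e^{-np_\star/2}.
\]
Alternatively, one can let $\beta\uparrow1$: the events $\{N_{i^\star}\le(1-\beta)np_\star\}$ decrease to $\{N_{i^\star}=0\}$, so continuity of measure together with the bound for $\beta<1$ gives the same conclusion. I do not expect any genuine obstacle here; the monotonicity of the supremum does all the work, in contrast with the upper tail.
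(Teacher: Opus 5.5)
Your proof is correct and matches the paper's argument: bound $\widehat p_\star$ below by $\widehat p_{i^\star}$, then apply the lower-tail Chernoff bound of Lemma~\ref{lem:chernoff-binomial} to $N_{i^\star}\sim\mathrm{Bin}(n,p_\star)$. Your separate treatment of $\beta=1$, via $(1-p_\star)^n\le e^{-np_\star}$, is a small improvement: the paper applies Lemma~\ref{lem:chernoff-binomial} at that endpoint even though the lemma is stated only for $\beta\in(0,1)$.
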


\begin{proof}
Let $i^\star$ satisfy $p_{i^\star}=p_\star$. Since
$\widehat p_\star\ge\widehat p_{i^\star}$,
\[
    \mathbb P\bigl(
        \widehat p_\star
        \le
        (1-\beta)p_\star
    \bigr)
    \le
    \mathbb P\bigl(
        \widehat p_{i^\star}
        \le
        (1-\beta)p_\star
    \bigr).
\]
Since $N_{i^\star}\sim\mathrm{Bin}(n,p_\star)$,
Lemma~\ref{lem:chernoff-binomial} yields
\[
    \mathbb P\bigl(
        \widehat p_{i^\star}
        \le
        (1-\beta)p_\star
    \bigr)
    \le
    \exp\left(
        -\frac{\beta^2}{2}np_\star
    \right).
\]
\end{proof}

Combining the upper- and lower-tail bounds gives the main concentration
result for the empirical maximum.

\begin{theorem}[Relative accuracy of the empirical maximum]
\label{thm:empirical-maximum}
There exist universal constants $C,c>0$---for example,
$C=23$ and $c=1/12$---such that, for every $\beta\in(0,1]$ and every
$n$ satisfying $np_\star\ge2$,
\[
    \mathbb P\left(
        |\widehat p_\star-p_\star|
        >
        \beta p_\star
    \right)
    \le
    \frac{C}{p_\star}
    \exp\bigl(
        -c\beta^2np_\star
    \bigr).
\]
\end{theorem}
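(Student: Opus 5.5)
The plan is a union bound over the two one-sided events, feeding in Lemma~\ref{lem:upper-tail-pstar} and Lemma~\ref{lem:lower-tail-pstar}. First, the event $|\widehat p_\star-p_\star|>\beta p_\star$ is contained in
\[
\{\widehat p_\star\ge(1+\beta)p_\star\}\cup\{\widehat p_\star\le(1-\beta)p_\star\}.
\]
Both lemmas apply for every $\beta\in(0,1]$. Lemma~\ref{lem:upper-tail-pstar} additionally needs $np_\star\ge2$, which is assumed.

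The whole proof then reduces to dominating each exponential term by a common $e^{-c\beta^2np_\star}$ with $c=1/12$. Since $\beta\le1$, we have $\beta^2\le1$, so
\[
e^{-np_\star/12}\le e^{-\beta^2np_\star/12}\quad\text{and}\quad e^{-np_\star/3}\le e^{-\beta^2np_\star/12}.
\]
Also $\beta^2/10\ge\beta^2/12$ and $\beta^2/2\ge\beta^2/12$, so the $r=0$ bucket term and the lower-tail term are dominated by $e^{-\beta^2np_\star/12}$ as well.

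Next I collect the prefactors. The upper tail contributes $(2+4+16)/p_\star=22/p_\star$. The lower tail has prefactor $1$, and since $p_\star\le1$ we have $1\le1/p_\star$. The total prefactor is therefore at most $23/p_\star$, which gives $C=23$.

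There is no real obstacle here; the only care needed is checking that each exponent comparison uses $\beta\le1$ in the correct direction, and that the lower-tail prefactor is absorbed using $p_\star\le1$.
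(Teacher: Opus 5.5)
Your proposal is correct and follows the paper's proof essentially line for line. It takes a union bound over the two one-sided tails (Lemmas~\ref{lem:upper-tail-pstar} and~\ref{lem:lower-tail-pstar}), dominates every exponent by $e^{-\beta^2 np_\star/12}$ using $\beta\le 1$, and absorbs the lower-tail prefactor via $1\le 1/p_\star$, giving the constant $1+2+4+16=23$.
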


\begin{proof}
By the union bound in Fact~\ref{fact:basic-probability},
\[
\begin{aligned}
    \mathbb P\left(
        |\widehat p_\star-p_\star|
        >
        \beta p_\star
    \right)
    &\le
    \mathbb P\bigl(
        \widehat p_\star
        \le
        (1-\beta)p_\star
    \bigr) \\
    &\quad+
    \mathbb P\bigl(
        \widehat p_\star
        \ge
        (1+\beta)p_\star
    \bigr).
\end{aligned}
\]
Since $\beta\in(0,1]$, we have $\beta^2\le1$, and therefore
\[
    e^{-(1/12)np_\star}
    \le
    e^{-(\beta^2/12)np_\star},
\]
\[
    e^{-(1/3)np_\star}
    \le
    e^{-(\beta^2/12)np_\star},
\]
and
\[
    e^{-(\beta^2/10)np_\star}
    \le
    e^{-(\beta^2/12)np_\star}.
\]
Furthermore, because $p_\star\le1$,
\[
    e^{-(\beta^2/2)np_\star}
    \le
    \frac{1}{p_\star}
    e^{-(\beta^2/12)np_\star}.
\]
Applying Lemmas~\ref{lem:lower-tail-pstar} and
\ref{lem:upper-tail-pstar}, we obtain
\[
\begin{aligned}
    \mathbb P\left(
        |\widehat p_\star-p_\star|
        >
        \beta p_\star
    \right)
    &\le
    \frac{1}{p_\star}
    e^{-(\beta^2/12)np_\star}
    +
    \frac{2}{p_\star}
    e^{-(\beta^2/12)np_\star} \\
    &\quad+
    \frac{4}{p_\star}
    e^{-(\beta^2/12)np_\star}
    +
    \frac{16}{p_\star}
    e^{-(\beta^2/12)np_\star}.
\end{aligned}
\]
Summing the coefficients gives
\[
    \mathbb P\left(
        |\widehat p_\star-p_\star|
        >
        \beta p_\star
    \right)
    \le
    \frac{23}{p_\star}
    e^{-(\beta^2/12)np_\star}.
\]
\end{proof}

\subsection{Incorporating the lower bound \(p_\star\ge q\)}
\label{subsec:minentropy-prior-q}

The preceding theorem can be stated in terms of the known lower bound
$q$.

\begin{corollary}[Relative estimation under $p_\star\ge q$]
\label{cor:pstar-relative-q}
Assume that $p_\star\ge q$ for a known $q\in(0,1)$. Then, for every
$\beta\in(0,1]$ and every $n$ satisfying $nq\ge2$,
\[
    \mathbb P\left(
        |\widehat p_\star-p_\star|
        >
        \beta p_\star
    \right)
    \le
    \frac{C}{q}
    \exp\bigl(-c\beta^2nq\bigr),
\]
where one may take $C=23$ and $c=1/12$.

In particular, to guarantee
\[
    \mathbb P\left(
        |\widehat p_\star-p_\star|
        >
        \beta p_\star
    \right)
    \le
    \delta,
\]
it is sufficient to take
\[
    n
    \ge
    \frac{1}{c\beta^2q}
    \log\frac{C}{\delta q}.
\]
With the above constants, this becomes
\[
    n
    \ge
    \frac{12}{\beta^2q}
    \log\frac{23}{\delta q}.
\]
\end{corollary}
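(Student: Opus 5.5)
The corollary follows from Theorem~\ref{thm:empirical-maximum} by replacing $p_\star$ with the known lower bound $q$, using monotonicity in $p_\star$. First, the hypothesis of the theorem holds: since $p_\star\ge q$ and $nq\ge2$, we have $np_\star\ge nq\ge2$. The theorem then gives
\[
    \mathbb P\left(|\widehat p_\star-p_\star|>\beta p_\star\right)
    \le
    \frac{C}{p_\star}\exp\bigl(-c\beta^2np_\star\bigr)
\]
with $C=23$ and $c=1/12$.

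Next, we bound the right-hand side by its value at $q$. Both factors are nonincreasing in $p_\star$: the prefactor $1/p_\star$ is at most $1/q$, and the exponential $\exp(-c\beta^2np_\star)$ is at most $\exp(-c\beta^2nq)$. Multiplying these two nonnegative bounds gives
\[
    \frac{C}{p_\star}\exp\bigl(-c\beta^2np_\star\bigr)
    \le
    \frac{C}{q}\exp\bigl(-c\beta^2nq\bigr),
\]
which is the first claim.

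For the sample-size statement, we require $\frac{C}{q}e^{-c\beta^2nq}\le\delta$. Taking logarithms, this is equivalent to
\[
    n\ge\frac{1}{c\beta^2q}\log\frac{C}{\delta q}.
\]
Substituting $C=23$ and $c=1/12$ gives the displayed form $n\ge\frac{12}{\beta^2q}\log\frac{23}{\delta q}$.

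One must also check that this sufficient $n$ satisfies $nq\ge2$. Since $q<1$, $\delta<1$ and $\beta\le1$, we have $\log(23/(\delta q))\ge\log23>1$. Hence
\[
    nq\ge\frac{12}{\beta^2}\log 23>12\ge2,
\]
so the first part applies.

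The argument has no real obstacle. The only points needing care are this compatibility check between the two conditions on $n$, and noting that the monotonicity step uses $q>0$.
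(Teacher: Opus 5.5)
Your proposal is correct and follows the paper's proof: apply Theorem~\ref{thm:empirical-maximum} (its hypothesis $np_\star\ge 2$ follows from $nq\ge 2$), use $1/p_\star\le 1/q$ and $e^{-c\beta^2np_\star}\le e^{-c\beta^2nq}$, then solve for $n$. Your extra check that this sample size automatically gives $nq\ge 2$ when $\delta<1$ is a small improvement, since the paper does not verify it and instead imposes $nq\ge 2$ as a separate assumption when it uses this corollary in Corollary~\ref{cor:minentropy-upper-q}.
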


\begin{proof}
Apply Theorem~\ref{thm:empirical-maximum} and use
\[
    \frac{1}{p_\star}
    \le
    \frac{1}{q},
    \qquad
    np_\star
    \ge
    nq.
\]
\end{proof}

\subsection{From relative error to min-entropy error}
\label{subsec:minentropy-log-transfer}

The final step converts relative accuracy for $p_\star$ into additive
accuracy for $-\log p_\star$.

\begin{lemma}[Relative error implies logarithmic error]
\label{lem:rel-error-entropy}
Let $x>0$ and suppose that
$|x-1|\le\beta$ for some $\beta\in(0,1)$. Then
\[
    |\log x |
    \le
    \frac{\beta}{1-\beta}.
\]
In particular, if $\beta\le1/2$, then
\[
    |\log x |
    \le
    2\beta.
\]
\end{lemma}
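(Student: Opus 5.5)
The plan is to split on the sign of $x-1$ and bound $\log x$ in each direction by an elementary inequality for the logarithm. The hypothesis $|x-1|\le\beta$ with $\beta\in(0,1)$ means $1-\beta\le x\le 1+\beta$, so $x>0$ automatically and $\log x$ is well defined. I will use the two standard bounds
\[
   1-\frac1x\le \log x\le x-1, \qquad x>0,
\]
which are the content of Fact~\ref{fact:elementary-log-exp}. Alternatively, they follow from concavity of $\log$ together with $\log y\le y-1$ applied at $y=1/x$.

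\emph{Case $x\ge1$.} Here $0\le\log x\le x-1\le\beta\le\beta/(1-\beta)$, since $1-\beta\in(0,1]$.

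\emph{Case $x<1$.} Here $\log x<0$, and
\[
   |\log x|=\log\frac1x\le\frac1x-1=\frac{1-x}{x}.
\]
The map $x\mapsto(1-x)/x$ is decreasing on $(0,1)$, so the bound is largest at the smallest admissible value $x=1-\beta$. This gives $|\log x|\le \beta/(1-\beta)$. Equivalently, one can use $-\log(1-u)\le u/(1-u)$ with $u=1-x\le\beta$, and then note that $u\mapsto u/(1-u)$ is increasing.

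For the particular case, if $\beta\le1/2$ then $1-\beta\ge1/2$, so $\beta/(1-\beta)\le2\beta$.

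There is no real obstacle in this argument. The only point needing care is the lower branch: the bound must be monotone in $1-x$ so that the worst case is the endpoint $x=1-\beta$. This is also why the constant $\beta/(1-\beta)$, rather than $\beta$, appears in the statement.
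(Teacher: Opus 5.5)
Your proof is correct, but it takes a different route from the paper's. The paper applies the mean value theorem on $[1-\beta,1+\beta]$: there $\frac{d}{dx}\log x = 1/x \le 1/(1-\beta)$, so $|\log x|\le |x-1|/(1-\beta)\le \beta/(1-\beta)$ in one symmetric step. You instead split on the sign of $x-1$ and use the tangent-line bounds $1-1/x\le \log x\le x-1$, which follow from Fact~\ref{fact:elementary-log-exp} via $\log(1+y)\le y$ and $\log(1-u)\ge -u/(1-u)$, together with monotonicity of $(1-x)/x$ on $(0,1)$. The mean-value argument gains brevity: a single derivative bound covers both sides at once. Your argument uses only inequalities already recorded in the appendix and gives pointwise sharper one-sided bounds: $\log x\le x-1\le\beta$ above, with no $1/(1-\beta)$ loss, and $-\log x\le (1-x)/x$ below. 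This makes explicit that the constant $\beta/(1-\beta)$ is forced only by the lower endpoint $x=1-\beta$. Either argument suffices for Corollary~\ref{cor:minentropy-upper-q}, which uses only the case $\beta\le 1/2$.
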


\begin{proof}
If $|x-1|\le\beta$, then
$x\in[1-\beta,1+\beta]$. On this interval,
\[
    \left|
        \frac{d}{dx}\log x
     \right|
    =
    \frac{1}{x}
    \le
    \frac{1}{1-\beta}.
\]
Hence,
\[
    |\log x-\log1 |
    \le
    \frac{1}{1-\beta}|x-1|
    \le
    \frac{\beta}{1-\beta}.
\]
\end{proof}

Applying this lemma with
\[
    x=\frac{\widehat p_\star}{p_\star}
\]
gives the desired min-entropy guarantee.

\begin{corollary}[Min-entropy estimation under $p_\star\ge q$]
\label{cor:minentropy-upper-q}
Assume that $p_\star\ge q$ for a known $q\in(0,1)$. Fix an additive
accuracy $\varepsilon\in(0,1]$ and a failure probability
$\delta\in(0,1)$. Let
\[
    \beta:=\frac{\varepsilon}{2}.
\]
Then $\beta\le1/2$. If
\[
    n
    \ge
    \frac{12}{\beta^2q}
    \log\frac{23}{\delta q}
    =
    48\cdot
    \frac{1}{\varepsilon^2q}
    \log\frac{23}{\delta q},
\]
and also $nq\ge2$, then
\[
    \mathbb P\bigl(
        |\widehat H_\infty-H_\infty(p)|
        >
        \varepsilon
    \bigr)
    \le
    \delta.
\]
Equivalently,
\[
    n
    =
    O\left(
        \frac{1}{\varepsilon^2q}
        \left(
            \log\frac{1}{\delta}
            +
            \log\frac{1}{q}
        \right)
    \right).
\]
For fixed $\varepsilon$ and fixed $\delta$, this simplifies to
\[
    n
    =
    O\left(
        \frac{\log(1/q)}{q}
    \right).
\]
\end{corollary}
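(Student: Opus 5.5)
The plan is to combine Corollary~\ref{cor:pstar-relative-q} with Lemma~\ref{lem:rel-error-entropy}, using the identity
\[
   \widehat H_\infty - H_\infty(p) = -\log\frac{\widehat p_\star}{p_\star}.
\]
The only work is checking that the parameter choices line up.

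First, set $\beta=\varepsilon/2$. Since $\varepsilon\in(0,1]$, we get $\beta\in(0,1/2]\subseteq(0,1]$, so both results apply.

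Second, the hypotheses $nq\ge 2$ and $n\ge \frac{12}{\beta^2 q}\log\frac{23}{\delta q}$ are exactly those of Corollary~\ref{cor:pstar-relative-q} with $C=23$ and $c=1/12$. Hence, with probability at least $1-\delta$,
\[
   |\widehat p_\star-p_\star|\le\beta p_\star .
\]
It is worth noting that $\widehat p_\star>0$ on this event, because $\beta<1$. Thus $\widehat H_\infty$ is well defined there.

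Third, on this event put $x=\widehat p_\star/p_\star$, so that $|x-1|\le\beta\le 1/2$. Lemma~\ref{lem:rel-error-entropy} then gives
\[
   |\widehat H_\infty-H_\infty(p)| = |\log x| \le 2\beta = \varepsilon .
\]
So the failure event $\{|\widehat H_\infty-H_\infty(p)|>\varepsilon\}$ is contained in $\{|\widehat p_\star-p_\star|>\beta p_\star\}$, and its probability is at most $\delta$. The identity $12/\beta^2 = 48/\varepsilon^2$ gives the displayed form of the sample-size condition.

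Fourth, the $O(\cdot)$ restatements follow from
\[
   \log\frac{23}{\delta q} = \log 23 + \log\frac1\delta + \log\frac1q .
\]
The condition $nq\ge2$ is dominated by the main requirement up to constants. For $q$ bounded away from $1$, $\log(1/q)\gtrsim$ const, and otherwise the constant terms are absorbed into the $O(\cdot)$. For fixed $\varepsilon$ and $\delta$, this reduces to $O(\log(1/q)/q)$.

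There is no real obstacle here. The only delicate point is ensuring that $\beta\le 1/2$, so that the factor-$2$ version of Lemma~\ref{lem:rel-error-entropy} applies, and that $\widehat p_\star>0$ on the good event. Both follow from $\varepsilon\le1$.
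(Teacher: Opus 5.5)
Your proposal is correct and follows the paper's proof: with $\beta=\varepsilon/2\le 1/2$, Lemma~\ref{lem:rel-error-entropy} places the failure event $\{|\widehat H_\infty-H_\infty(p)|>\varepsilon\}$ inside $\{|\widehat p_\star-p_\star|>\beta p_\star\}$, and Corollary~\ref{cor:pstar-relative-q} bounds the probability of the latter by $\delta$. Your remark that $\widehat p_\star\ge(1-\beta)p_\star>0$ on the good event, so that $\widehat H_\infty$ is well defined there, is a small detail the paper leaves implicit.
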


\begin{proof}
If
\[
    |\widehat p_\star-p_\star|
    \le
    \beta p_\star,
\]
then
\[
    \left|
        \frac{\widehat p_\star}{p_\star}-1
    \right|
    \le
    \beta.
\]
Since $\beta\le1/2$,
Lemma~\ref{lem:rel-error-entropy} gives
\[
\begin{aligned}
    |\widehat H_\infty-H_\infty(p)|
    &=
    |-\log\widehat p_\star+\log p_\star| \\
    &=
    \left|
        \log\frac{p_\star}{\widehat p_\star}
    \right| \\
    &=
    \left|
        \log\frac{\widehat p_\star}{p_\star}
    \right| \\
    &\le
    2\beta \\
    &=
    \varepsilon.
\end{aligned}
\]
Therefore,
\[
\begin{aligned}
    \mathbb P\bigl(
        |\widehat H_\infty-H_\infty(p)|
        >
        \varepsilon
    \bigr)
    &\le
    \mathbb P\bigl(
        |\widehat p_\star-p_\star|
        >
        \beta p_\star
    \bigr).
\end{aligned}
\]
The result follows from
Corollary~\ref{cor:pstar-relative-q}.
\end{proof}

\begin{remark}[Origin of the logarithmic factor]
The factor $\log(1/q)$ arises from the prefactor $1/p_\star$ in the
concentration inequality. There may be as many as
$O(1/p_\star)$ symbols whose probabilities are comparable to
$p_\star$, and each of them may fluctuate upward and overtake a true
maximizer. The dyadic decomposition makes this phenomenon explicit:
symbols whose probabilities are much smaller than $p_\star$ are
exponentially unlikely to reach the level $(1+\beta)p_\star$, whereas
the dominant contribution comes from symbols with probabilities
comparable to $p_\star$.
\end{remark}

\section{Upper Bound for Estimating R\'enyi Entropy}\label{sec:Renyi_Entropy_UB}

We prove here that the standard falling-factorial collision estimator estimates the order-\(\alpha\) R\'enyi entropy of an unknown distribution on \([k]\) using
\[
  O\!\left(\eps^{-2}\alpha k^{1-1/\alpha}\log\frac1\delta\right)
\]
samples, for every integer \(\alpha\ge 2\), every \(0<\eps\le 1\), and every failure probability \(0<\delta<1\). In particular, for fixed \(\eps\) and \(\delta\), the sample complexity is \(O(\alpha k^{1-1/\alpha})\), and this covers the range \(2\le \alpha\le c_0 \log k\), for any constant $c_0 > 0$. All logarithms in the main theorem are natural; a final corollary states the equivalent bit-valued guarantee.

\subsection{Problem and estimator}

Let \(p=(p_1,\ldots,p_k) \in \Delta_{k}\) be an unknown distribution on \([k]=\{1,\ldots,k\}\), where \(k\ge 2\). For an integer \(\alpha\ge 2\), define the order-\(\alpha\) power sum and recall R\'enyi entropy in natural units
\[
  \Pa(p)=\sum_{x=1}^k p_x^\alpha,
  \qquad
  H_\alpha(p)=\frac{1}{1-\alpha}\log \Pa(p).
\]
For an integer \(r\ge 0\) and a real number \(z\), write
\[
  (z)_r=z(z-1)\cdots(z-r+1),\qquad (z)_0=1.
\]

By Fact~\ref{fact:power-sums},
\begin{equation}\label{eq:power-lower}
  \Pa(p)\ge k^{1-\alpha}.
\end{equation}
Indeed, this is the power-mean inequality, or equivalently, Jensen's inequality applied to the convex function \(t\mapsto t^\alpha\).

Set
\[
  \Kalpha=k^{1-1/\alpha}.
\]
For a numerical constant
\[
  C_0=10^6,
\]
choose one batch size
\begin{equation}\label{eq:batch-size}
  m=\left\lceil C_0\eps^{-2}\alpha \Kalpha\right\rceil.
\end{equation}
Given one batch \(X_1,\ldots,X_m\overset{\mathrm{i.i.d.}}{\sim}p\), let
\[
  N_x=\sum_{i=1}^m \mathbf 1\{X_i=x\}
\]
be the empirical count of symbol \(x\). Define the falling-factorial estimator
\begin{equation}\label{eq:power-estimator}
  \Phat=\frac{1}{(m)_\alpha}\sum_{x=1}^k (N_x)_\alpha.
\end{equation}
Equivalently, \(\Phat\) is the fraction of ordered \(\alpha\)-tuples of distinct sample positions whose observed symbols are all equal.

To avoid the harmless event \(\Phat=0\) inside the logarithm, define the clipped estimator
\begin{equation}\label{eq:clipped-power}
  \Phat^\circ=
  \max\left\{\Phat,\ e^{-(\alpha-1)\eps}k^{1-\alpha}\right\}.
\end{equation}
The corresponding one-batch entropy estimator is
\begin{equation}\label{eq:entropy-estimator-batch}
  \Hhat=\frac{1}{1-\alpha}\log \Phat^\circ.
\end{equation}
The final estimator uses independent repetition. Let
\begin{equation}\label{eq:num-batches}
  B=\left\lceil 8\log\frac1\delta\right\rceil.
\end{equation}
Draw \(B\) independent batches of size \(m\), compute the \(B\) one-batch entropy estimates \(\widehat H_{\alpha,1},\ldots,\widehat H_{\alpha,B}\), and output
\begin{equation}\label{eq:median-estimator}
  \widehat H_{\alpha,\mathrm{med}}=\Med\{\widehat H_{\alpha,1},
  \ldots,\widehat H_{\alpha,B}\}.
\end{equation}

\subsection{Main theorem}

\begin{theorem}[Integer-order R\'enyi entropy upper bound]\label{thm:main_1}
Let \(k\ge 2\) and \(p \in \Delta_k\), let \(\alpha\ge 2\) be an integer, \(0<\eps\le 1\), and \(0<\delta<1\). The estimator \eqref{eq:median-estimator}, with \(m\) and \(B\) chosen as in \eqref{eq:batch-size} and \eqref{eq:num-batches}, satisfies
\[
  \mathbb P\left(
    \left|\widehat H_{\alpha,\mathrm{med}}-H_\alpha(p)\right|>\eps
  \right)
  \le \delta.
\]
It uses
\[
  Bm
  =
  \left\lceil 8\log\frac1\delta\right\rceil
  \left\lceil C_0\eps^{-2}\alpha k^{1-1/\alpha}\right\rceil
\]
samples. Hence its sample complexity is
\[
  O\!\left(\eps^{-2}\alpha k^{1-1/\alpha}\log\frac1\delta\right).
\]
In particular, for fixed constant \(\eps\) and \(\delta\), it uses \(O(\alpha k^{1-1/\alpha})\) samples. The result holds for every integer \(\alpha\ge 2\), and therefore in particular for every integer \(2\le \alpha\le c_0 \log k\).
\end{theorem}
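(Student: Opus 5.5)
The plan is to show that a single batch succeeds with probability at least $3/4$, in the sense that
\[
  e^{-(\alpha-1)\eps}\Pa(p)\le \Phat^\circ\le e^{(\alpha-1)\eps}\Pa(p).
\]
Then the median step (Fact~\ref{fact:median-amplification}) with $B=\lceil 8\log(1/\delta)\rceil$ finishes the proof. The displayed window is exactly $|\Hhat-H_\alpha(p)|\le\eps$. The clipping is harmless for two reasons. First, the clip level $e^{-(\alpha-1)\eps}k^{1-\alpha}$ is at most $e^{-(\alpha-1)\eps}\Pa(p)$ by \eqref{eq:power-lower}, so clipping never pushes a good value outside the window. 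Second, clipping only raises $\Phat$, so the lower-tail failure event for $\Phat^\circ$ is contained in that for $\Phat$. So it suffices to bound the upper and lower tails of the unbiased statistic $\Phat$ separately.

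\textbf{Variance computation.} Write $\Phat$ as the average over ordered $\alpha$-tuples of distinct positions of the indicator that all entries coincide; this is unbiased with mean $\Pa$. For two tuples sharing $j\ge1$ positions, the covariance is at most $P_{2\alpha-j}$. The number of such pairs, relative to $(m)_\alpha^2$, is at most $\binom{\alpha}{j}^2 j!/m^j$ up to a $(1+o(1))$ factor. Hence
\[
  \frac{\Var(\Phat)}{\Pa^2}\lesssim \sum_{j=1}^{\alpha}\binom{\alpha}{j}^2\frac{j!}{m^j}\frac{P_{2\alpha-j}}{\Pa^2}.
\]
I would bound $P_{2\alpha-j}\le \Pa^{(2\alpha-j)/\alpha}$ by monotonicity of $\ell_r$ norms, and $\Pa^{-j/\alpha}\le K_\alpha^{\,j}$ by \eqref{eq:power-lower}. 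The $j$-th term then becomes roughly $\binom{\alpha}{j}^2 j!\,(K_\alpha/m)^j\le (\alpha^2K_\alpha/m)^j/j!$. With $m\asymp \eps^{-2}\alpha K_\alpha$, the ratio $\alpha^2K_\alpha/m=\alpha\eps^2/C_0$, so the relative variance is $\lesssim e^{\alpha\eps^2/C_0}-1$.

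\textbf{Case split on $(\alpha-1)\eps$.} When $(\alpha-1)\eps\lesssim1$, the relative variance is $O(\alpha\eps^2/C_0)$, which is $O(\eps^2(\alpha-1)^2/C_0)$ after using $\alpha\le 2(\alpha-1)$ and $\eps\le1$ appropriately. Chebyshev then gives relative error at most $(\alpha-1)\eps/2$, which lies inside the window $e^{\pm(\alpha-1)\eps}$. This is the point where the widened window buys the factor of $\alpha$ over relative-error analyses. When $(\alpha-1)\eps$ is large, the upper tail is easy: Markov's inequality gives $\Prob(\Phat> e^{(\alpha-1)\eps}\Pa)\le e^{-(\alpha-1)\eps}$, which is small.

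\textbf{Main obstacle: the lower tail when $(\alpha-1)\eps$ is large.} We must show $\Phat\ge e^{-(\alpha-1)\eps}\Pa$, and the crude variance bound above may blow up. Following the light/heavy idea, I would split symbols at the threshold $p_x\ge 1/m$ (heavy) versus light. For the heavy symbols, $N_x\sim\Bin(m,p_x)$ concentrates: by Lemma~\ref{lem:chernoff-binomial}, $N_x\ge(1-\eta)mp_x$ simultaneously for the relevant symbols. Then $(N_x)_\alpha/(m)_\alpha\ge ((1-\eta)p_x-\alpha/m)^\alpha$, which is at least $e^{-(\alpha-1)\eps/2}p_x^\alpha$ once $\eta\approx\eps/4$ and $mp_x\gg\alpha/\eps$. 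For the light part, I would redo the variance computation with $P_{2\alpha-j}$ restricted to light symbols. There $p_x\le$ threshold gives $\sum_{\mathrm{light}}p_x^{2\alpha-j}\le \tau^{\alpha-j}\sum_{\mathrm{light}} p_x^\alpha$, which tames the large-$j$ terms, and Chebyshev gives a factor-$1/2$ lower bound on the light sum. Whichever of the two parts carries at least half of $\Pa$ then yields the lower bound, at a cost of at most a factor $2\le e^{(\alpha-1)\eps/2}$. The delicate point is choosing the threshold, so that the heavy symbols have $mp_x\gtrsim\alpha/\eps$ and a union bound over them is affordable; I expect the heavy count to be at most $1/\tau$ and hence controllable. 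The sample count $Bm$ and the bit-valued statement then follow immediately.
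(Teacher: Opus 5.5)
You have the paper's architecture: clipping, the overlap variance bound with $P_{2\alpha-j}\le P_\alpha^{(2\alpha-j)/\alpha}$ and $P_\alpha^{-j/\alpha}\le K_\alpha^j$, Chebyshev for moderate $\eta=(\alpha-1)\varepsilon$, Markov for the upper tail, a light/heavy split for the lower tail, and the median trick. \textbf{The gap is in the heavy step you flagged as delicate; the resolution you sketch would fail.}

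The threshold is forced to scale like $1/K_\alpha$. For the light variance to close you need $p_x\le\tau/K_\alpha$ with a constant $\tau<1$. Then $P_{2\alpha-j}(L)\le(\tau/K_\alpha)^{\alpha-j}P_L$ combines with $P_L^{-1}\le 2P_\alpha^{-1}\le 2K_\alpha^{\alpha}$ to give $2\tau^{\alpha-j}K_\alpha^{j}$, and the variance sum is at most $2(\tau+2\alpha K_\alpha/m)^\alpha$. Your factor $\tau^{\alpha-j}$ and count $1/\tau$ implicitly use an absolute threshold $\tau$; that leaves a factor $(\tau K_\alpha)^{\alpha-j}$ and the light variance blows up. Your threshold $1/m$ gives heavy symbols only $mp_x\ge1$.

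With the correct threshold $\tau/K_\alpha$ there can be up to $K_\alpha/\tau$ heavy symbols, not $1/\tau$; take $p$ uniform on about $K_\alpha/(2\tau)$ symbols. Each $N_x$ may fall short of $mp_x$ only by a factor $1-\Theta(\varepsilon)$, and $mp_x$ is only guaranteed to be of order $C_0\tau\alpha\varepsilon^{-2}$. So each heavy symbol fails with probability about $e^{-\Theta(C_0\tau\alpha)}$, which does not depend on $k$. In the large-$\eta$ regime $\alpha$ can be a constant (e.g.\ $\varepsilon=1$, $\alpha=6$). As $k\to\infty$ the union bound over $K_\alpha/\tau$ symbols therefore diverges. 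In fact, by negative association of multinomial counts, the event that all heavy counts concentrate simultaneously has probability tending to zero.

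\textbf{The missing idea is to require only that most of the heavy $\alpha$-mass be captured.} The paper sets $G_x=\{N_x\ge(1-\varepsilon/16)mp_x\}$ and $W=\sum_{x\in H}p_x^\alpha\mathbf 1_{G_x^c}$. The per-symbol Chernoff bound gives $\mathbb E W\le P_H/64$, and Markov gives $W\le P_H/2$ with probability at least $31/32$. On that event $\widehat P_\alpha\ge e^{-\varepsilon\alpha/4}(P_H-W)\ge\frac14 e^{-\varepsilon\alpha/4}P_\alpha\ge e^{-\eta}P_\alpha$, and no count of heavy symbols is needed.

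Alternatively, you can keep your union bound by thresholding at $\tau P_\alpha(p)^{1/\alpha}$ instead. This is still at least $\tau/K_\alpha$, and the light bound is unchanged because $P_\alpha^{-j/\alpha}\le K_\alpha^j$. Now each heavy symbol contributes more than $\tau^\alpha P_\alpha$, so $|H|<\tau^{-\alpha}$. The union bound $\tau^{-\alpha}e^{-C_0\tau\alpha/512}$ is then tiny, because the Chernoff exponent is linear in $\alpha$.

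A smaller point concerns the regime boundary. Markov's bound $e^{-\eta}$ is small only once $\eta$ exceeds a constant such as $4$, so your Chebyshev regime must extend to $\eta\le4$. There, relative error $\eta/2$ does not lie inside the window, since $1-\eta/2<e^{-\eta}$ for $\eta\ge1.6$. Use the target $1-e^{-\eta}\ge\frac{1-e^{-4}}{4}\eta$ (by concavity), as the paper does; with $C_0$ large the Chebyshev bound still closes.
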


The rest of this section proves Theorem~\ref{thm:main_1}.

\subsection{Unbiasedness and variance preliminaries}

\begin{lemma}[Unbiasedness]\label{lem:unbiased}
For one batch,
\[
  \mathbb E[\Phat]=\Pa(p).
\]
\end{lemma}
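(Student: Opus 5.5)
The natural route is to use the ordered-tuple representation of the estimator and linearity of expectation. Write
\[
  \sum_{x=1}^k (N_x)_\alpha
  =
  \sum_{(i_1,\ldots,i_\alpha)}
  \mathbf 1\{X_{i_1}=X_{i_2}=\cdots=X_{i_\alpha}\},
\]
where the sum ranges over ordered $\alpha$-tuples of pairwise distinct indices in $[m]$.

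This identity holds because, for a fixed symbol $x$, the quantity $(N_x)_\alpha$ counts ordered $\alpha$-tuples of distinct positions whose observations all equal $x$. Summing over $x$ counts each all-equal tuple exactly once, since the common value is unique. The number of such ordered tuples is $(m)_\alpha$. When $m<\alpha$ there are none and $(m)_\alpha=0$; we implicitly assume $m\ge\alpha$, which holds by \eqref{eq:batch-size}, since $C_0\ge1$, $\eps\le1$ and $\Kalpha\ge1$.

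For a fixed tuple of distinct indices, the samples $X_{i_1},\ldots,X_{i_\alpha}$ are i.i.d.\ from $p$. Hence
\[
  \Prob(X_{i_1}=\cdots=X_{i_\alpha})
  =
  \sum_{x=1}^k \prod_{l=1}^\alpha \Prob(X_{i_l}=x)
  =
  \sum_{x}p_x^\alpha
  =
  \Pa(p).
\]
Taking expectations of the indicator sum gives $\E\bigl[\sum_x (N_x)_\alpha\bigr]=(m)_\alpha\,\Pa(p)$. Dividing by $(m)_\alpha$ yields $\E[\Phat]=\Pa(p)$.

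As a cross-check, one can argue per symbol instead. Since $N_x\sim\Bin(m,p_x)$, the standard factorial-moment identity $\E[(N_x)_\alpha]=(m)_\alpha p_x^\alpha$ holds; it can be seen by differentiating the probability generating function $(1-p_x+p_xs)^m$ exactly $\alpha$ times at $s=1$. Summing over $x$ gives the same conclusion.

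There is no real obstacle here. The only point needing care is the requirement that the indices be distinct: that is what makes the $\alpha$ draws independent, so that the probability factorizes to $p_x^\alpha$ rather than to a lower power. This is also why the integrality of $\alpha$ matters: falling factorials of integer order are what give unbiasedness.
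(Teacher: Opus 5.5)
Your proof is correct and is essentially the paper's argument. The paper uses your cross-check route, summing the binomial factorial-moment identity $\E[(N_x)_\alpha]=(m)_\alpha p_x^\alpha$ over $x$, and it proves that identity in its appendix by the same ordered-tuple-of-distinct-indices counting that you apply globally (the paper records that global tuple representation separately for its variance lemma).
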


\begin{proof}
For each symbol \(x\), \(N_x\sim \operatorname{Bin}(m,p_x)\). The factorial moment identity for a binomial random variable gives (see Fact \ref{app:Fact_binomial-factorial} and Section \ref{app:appl_falling} for proof in Appendix):
\[
  \mathbb E[(N_x)_\alpha]=(m)_\alpha p_x^\alpha.
\]
Therefore
\[
  \mathbb E[\Phat]
  =\frac{1}{(m)_\alpha}\sum_{x=1}^k\mathbb E[(N_x)_\alpha]
  =\sum_{x=1}^k p_x^\alpha
  =\Pa(p).
\]
\end{proof}

For a subset \(A\subseteq [k]\), define
\[
  P_s(A)=\sum_{x\in A}p_x^s
\]
and the restricted estimator
\begin{equation}\label{eq:restricted-estimator}
  \widehat P_\alpha(A)=
  \frac{1}{(m)_\alpha}\sum_{x\in A}(N_x)_\alpha.
\end{equation}
This restricted estimator is only a proof device; the algorithm does not need to know \(A\).

\begin{lemma}[Tuple-overlap variance bound]\label{lem:variance}
Let \(A\subseteq [k]\). If \(m\ge 2\alpha\), then
\begin{equation}\label{eq:variance-bound}
  \Var\!\left(\widehat P_\alpha(A)\right)
  \le
  \sum_{j=1}^{\alpha}
  \binom{\alpha}{j}^2 j!
  \left(\frac{2}{m}\right)^j
  P_{2\alpha-j}(A).
\end{equation}
\end{lemma}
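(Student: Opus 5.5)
We write the restricted estimator as a U-statistic over ordered $\alpha$-tuples of distinct sample positions. For an ordered tuple $I=(i_1,\ldots,i_\alpha)$ of distinct indices in $[m]$, let
\[
  Y_I:=\mathbf 1\{X_{i_1}=\cdots=X_{i_\alpha}\in A\}.
\]
Then $\widehat P_\alpha(A)=(m)_\alpha^{-1}\sum_I Y_I$, since $\sum_{x\in A}(N_x)_\alpha$ counts exactly the ordered tuples all landing on a common symbol of $A$. Moreover $\mathbb E[Y_I]=P_\alpha(A)$. The variance then expands as
\[
  \Var(\widehat P_\alpha(A))=(m)_\alpha^{-2}\sum_{I,I'}\operatorname{Cov}(Y_I,Y_{I'}).
\]
If $I$ and $I'$ share no positions, then $Y_I$ and $Y_{I'}$ depend on disjoint independent samples, so the covariance vanishes. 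We therefore group pairs by the overlap size $j=|I\cap I'|\in\{1,\ldots,\alpha\}$ as sets of positions.

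For overlap $j\ge1$, we bound the covariance by $\mathbb E[Y_IY_{I'}]$. If the two tuples share a position, both must take the same symbol $x\in A$, and the union consists of $2\alpha-j$ distinct positions. Hence
\[
  \mathbb E[Y_IY_{I'}]=\sum_{x\in A}p_x^{2\alpha-j}=P_{2\alpha-j}(A).
\]
Next we count ordered pairs with overlap exactly $j$. Choose $I$ in $(m)_\alpha$ ways. Choose which $j$ of its positions are shared, in $\binom{\alpha}{j}$ ways. Choose which $j$ slots of $I'$ receive them, in $\binom{\alpha}{j}$ ways, and arrange them in $j!$ ways. Fill the remaining $\alpha-j$ slots of $I'$ with positions outside $I$, in $(m-\alpha)_{\alpha-j}$ ways. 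The contribution of overlap $j$ is therefore at most
\[
  \binom{\alpha}{j}^2 j!\,\frac{(m-\alpha)_{\alpha-j}}{(m)_\alpha}\,P_{2\alpha-j}(A).
\]

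It remains to show $(m-\alpha)_{\alpha-j}/(m)_\alpha\le (2/m)^j$ when $m\ge2\alpha$. Write $(m)_\alpha=(m)_j\,(m-j)_{\alpha-j}$. We have $(m-\alpha)_{\alpha-j}\le (m-j)_{\alpha-j}$ termwise, because $m-\alpha-i\le m-j-i$. Also $(m)_j\ge (m-j+1)^j\ge (m/2)^j$, using $j\le\alpha\le m/2$. Combining these gives the ratio bound, and summing over $j$ yields \eqref{eq:variance-bound}.

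The only delicate step is the counting together with this falling-factorial ratio estimate; the rest is routine. Discarding the negative part $-P_\alpha(A)^2$ of the covariance is harmless, since we only need an upper bound.
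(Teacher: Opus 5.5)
Your proof is correct and follows the paper's argument step for step: the ordered-tuple representation, vanishing covariance for disjoint tuples, the bound $\operatorname{Cov}(Y_I,Y_{I'})\le P_{2\alpha-j}(A)$ for overlap $j$, and the same count $\binom{\alpha}{j}^2 j!\,(m-\alpha)_{\alpha-j}$. The only cosmetic difference is the ratio estimate: you factor $(m)_\alpha=(m)_j(m-j)_{\alpha-j}$, while the paper bounds the ratio directly by $(m-\alpha+1)^{-j}$, and both give $(2/m)^j$ when $m\ge 2\alpha$.
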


\begin{proof}
Let
\[
  \mathcal T=\{(i_1,\ldots,i_\alpha): i_r\in[m],\ i_r\ne i_s\text{ for }r\ne s\}
\]
be the set of ordered \(\alpha\)-tuples of distinct sample positions. Then \(|\mathcal T|=(m)_\alpha\). For \(t=(i_1,\ldots,i_\alpha)\in\mathcal T\), set
\[
  I_t^A=\mathbf 1\{X_{i_1}=\cdots=X_{i_\alpha}\in A\}.
\]
Then (see Fact \ref{app:fact:tuple-sum-estimator} in Appendix for proof):
\[
  \widehat P_\alpha(A)=\frac{1}{(m)_\alpha}\sum_{t\in\mathcal T} I_t^A.
\]

If two tuples \(s,t\in\mathcal T\) have disjoint sets of sample positions, then \(I_s^A\) and \(I_t^A\) are independent. Thus their covariance is zero. If they overlap in exactly \(j\ge 1\) sample positions, then the event \(I_s^A=I_t^A=1\) forces all \(2\alpha-j\) distinct samples involved to be equal to the same symbol in \(A\). Therefore (for proof see Fact \ref{app:fact:overlap-indicator-moment} in Appendix):
\[
  \mathbb E[I_s^A I_t^A]=P_{2\alpha-j}(A).
\]
For a fixed tuple \(s\), the number of tuples \(t\) overlapping \(s\) in exactly \(j\) locations is at most
\[
  \binom{\alpha}{j}^2 j!(m-\alpha)_{\alpha-j}.
\]
Indeed, choose the \(j\) overlapped coordinates of \(s\), choose the \(j\) coordinates of \(t\) in which they appear, biject them in \(j!\) ways, and choose the remaining \(\alpha-j\) coordinates of \(t\) from the \(m-\alpha\) positions outside \(s\).

Since \(\operatorname{Cov}(I_s^A,I_t^A)\le \mathbb E[I_s^A I_t^A]\), summing over all overlapping pairs gives
\[
  \Var\!\left(\widehat P_\alpha(A)\right)
  \le
  \frac{1}{(m)^2_\alpha}
  \sum_{s \in \mathcal T}
  \sum_{j=1}^\alpha
  \binom{\alpha}{j}^2j!(m-\alpha)_{\alpha-j}
  P_{2\alpha-j}(A)
  =
  \frac{1}{(m)_\alpha}
  \sum_{j=1}^\alpha
  \binom{\alpha}{j}^2j!(m-\alpha)_{\alpha-j}
  P_{2\alpha-j}(A) \, ,
\] where the equality follows by the fact that $|\mathcal T| = (m)_\alpha$ and that the inner sum does not depend on $s$.
For \(m\ge 2\alpha\),
\[
  \frac{(m-\alpha)_{\alpha-j}}{(m)_\alpha}
  \le (m-\alpha+1)^{-j}
  \le \left(\frac{2}{m}\right)^j.
\]
This proves \eqref{eq:variance-bound}.
\end{proof}

\begin{lemma}[Power-sum bounds]\label{lem:powersum}
Let \(\Kalpha=k^{1-1/\alpha}\). For every \(1\le j\le \alpha\),
\begin{equation}\label{eq:global-powersum-bound}
  \frac{P_{2\alpha-j}([k])}{\Pa(p)^2}
  \le \Kalpha^j.
\end{equation}
Moreover, let \(A\subseteq[k]\). Suppose every \(x\in A\) satisfies
\[
  p_x\le \frac{\tau}{\Kalpha}
\]
for some \(0<\tau<1\), and suppose
\[
  P_\alpha(A)\ge \frac12\Pa(p).
\]
Then, for every \(1\le j\le \alpha\),
\begin{equation}\label{eq:light-powersum-bound}
  \frac{P_{2\alpha-j}(A)}{P_\alpha(A)^2}
  \le 2\tau^{\alpha-j}\Kalpha^j.
\end{equation}
\end{lemma}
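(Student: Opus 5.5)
We handle the two claims separately. Both rest on the elementary bound $P_s(A)\le M^{s-\alpha}P_\alpha(A)$ for $s\ge\alpha$, where $M:=\max_{x\in A}p_x$. It holds because $p_x^s\le M^{s-\alpha}p_x^\alpha$ termwise.

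For \eqref{eq:global-powersum-bound}, take $A=[k]$ and $s=2\alpha-j\ge\alpha$. This gives $P_{2\alpha-j}([k])\le p_\star^{\alpha-j}\Pa(p)$, so it suffices to show
\[
  p_\star^{\alpha-j}\le \Kalpha^j\,\Pa(p).
\]
We use two facts. First, $p_\star^\alpha\le\Pa(p)$, so $p_\star\le\Pa(p)^{1/\alpha}$. Second, \eqref{eq:power-lower} gives $\Pa(p)\ge k^{1-\alpha}$. Combining,
\[
  \frac{p_\star^{\alpha-j}}{\Pa(p)}\le \Pa(p)^{(\alpha-j)/\alpha-1}=\Pa(p)^{-j/\alpha}\le k^{(\alpha-1)j/\alpha}=\Kalpha^j,
\]
as required.

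For \eqref{eq:light-powersum-bound}, apply the elementary bound on $A$ with $M\le\tau/\Kalpha$:
\[
  P_{2\alpha-j}(A)\le (\tau/\Kalpha)^{\alpha-j}P_\alpha(A).
\]
Dividing by $P_\alpha(A)^2$ and using $P_\alpha(A)\ge\frac12\Pa(p)\ge\frac12 k^{1-\alpha}$, we get
\[
  \frac{P_{2\alpha-j}(A)}{P_\alpha(A)^2}\le 2\tau^{\alpha-j}\Kalpha^{j-\alpha}k^{\alpha-1}.
\]
Since $\Kalpha^\alpha=k^{\alpha-1}$, the right side equals $2\tau^{\alpha-j}\Kalpha^{j}$, which is the claim.

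The only delicate step is checking the exponent bookkeeping in the first part, namely $(\alpha-j)/\alpha-1=-j/\alpha$ and $k^{(\alpha-1)/\alpha}=\Kalpha$. Otherwise the argument is just the termwise domination of higher power sums by the maximal atom, together with the uniform lower bound \eqref{eq:power-lower}. The hypothesis $\tau<1$ is not needed for the inequality itself; it matters later, when the factor $\tau^{\alpha-j}$ is used to gain decay in $\alpha-j$.
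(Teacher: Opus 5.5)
Your proof is correct and follows essentially the same route as the paper. It uses termwise domination $p_x^{2\alpha-j}\le(\max_{x\in A}p_x)^{\alpha-j}p_x^\alpha$, then $p_\star\le\Pa(p)^{1/\alpha}$ together with $\Pa(p)\ge k^{1-\alpha}$ for the global bound, and $1/P_\alpha(A)\le 2\Kalpha^{\alpha}$ for the light-set bound; your remark that $\tau<1$ is not needed for the inequality itself is also right.
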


\begin{proof}
Since \(p_x^\alpha\le \Pa(p)\), we have
\[
  p_x\le \Pa(p)^{1/\alpha}
  \qquad\text{for all } x \in [k].
\]
Hence
\[
  P_{2\alpha-j}([k])
  =\sum_x p_x^\alpha p_x^{\alpha-j}
  \le \Pa(p)^{(\alpha-j)/\alpha} \sum_x p_x^\alpha
  =\Pa(p)^{(\alpha-j)/\alpha}  \Pa(p)
  =\Pa(p)^{(2\alpha-j)/\alpha}.
\]
Dividing by \(\Pa(p)^2\),
\[
  \frac{P_{2\alpha-j}([k])}{\Pa(p)^2}
  \le \Pa(p)^{-j/\alpha}.
\]
Using \eqref{eq:power-lower},
\[
  \Pa(p)^{-j/\alpha}
  \le \bigl(k^{1-\alpha}\bigr)^{-j/\alpha}
  =k^{j(\alpha-1)/\alpha}
  =\Kalpha^j.
\]
This proves \eqref{eq:global-powersum-bound}.

For the second claim, if \(x\in A\), then \(p_x\le \tau/\Kalpha\). Therefore
\[
  P_{2\alpha-j}(A)
  =\sum_{x\in A}p_x^\alpha p_x^{\alpha-j}
  \le \left(\frac{\tau}{\Kalpha}\right)^{\alpha-j}P_\alpha(A).
\]
Also,
\[
  P_\alpha(A)\ge \frac12\Pa(p)\ge \frac12 k^{1-\alpha}=\frac12\Kalpha^{-\alpha}.
\]
Thus
\[
  \frac{P_{2\alpha-j}(A)}{P_\alpha(A)^2}
  \le
  \left(\frac{\tau}{\Kalpha}\right)^{\alpha-j}
  \frac{1}{P_\alpha(A)}
  \le
  2\tau^{\alpha-j}\Kalpha^j.
\]
\end{proof}

\subsection{One-batch success probability}

Let
\[
  P=\Pa(p),
  \qquad
  \eta=(\alpha-1)\eps.
\]
If
\begin{equation}\label{eq:power-success}
  e^{-\eta}P\le \Phat^\circ\le e^\eta P,
\end{equation}
then
\[
  |\Hhat-H_\alpha(p)|
  =\frac{1}{\alpha-1}\left|\log\frac{\Phat^\circ}{P}\right|
  \le \eps.
\]
So the goal is to prove \eqref{eq:power-success} with constant probability for one batch.

Because of \eqref{eq:power-lower}, the clipping floor obeys
\[
  e^{-\eta}k^{1-\alpha}\le e^{-\eta}P.
\]
Consequently, if
\[
  e^{-\eta}P\le \Phat\le e^\eta P,
\]
then \eqref{eq:power-success} holds for \(\Phat^\circ\). It is therefore enough to prove the corresponding event for the unclipped estimator \(\Phat\).

\begin{lemma}[One-batch success]\label{lem:one-batch}
With \(m\) as in \eqref{eq:batch-size} and \(C_0=10^6\),
\[
  \mathbb P\left(e^{-\eta}P\le \Phat^\circ\le e^\eta P\right)\ge \frac34.
\]
Equivalently, the one-batch entropy estimate satisfies
\[
  \mathbb P\left(|\Hhat-H_\alpha(p)|\le \eps\right)\ge \frac34.
\]
\end{lemma}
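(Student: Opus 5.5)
The plan is to split on the size of $\eta=(\alpha-1)\eps$ and to work with the unclipped $\Phat$, which suffices by the clipping remark preceding the lemma. Throughout I use $\binom{\alpha}{j}^2 j!\le \alpha^{2j}/j!$ together with $m\ge C_0\eps^{-2}\alpha\Kalpha$, which gives $m\ge 2\alpha$ so that Lemma~\ref{lem:variance} applies. Fix a threshold $\eta_0$; I expect $\eta_0=5$ to work.

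\emph{Moderate regime $\eta\le\eta_0$: Chebyshev.} Combining Lemma~\ref{lem:variance} with $A=[k]$ and \eqref{eq:global-powersum-bound} gives
\[
  \frac{\Var(\Phat)}{P^2}\le\sum_{j\ge1}\frac{\alpha^{2j}}{j!}\Bigl(\frac{2\Kalpha}{m}\Bigr)^j\le\sum_{j\ge1}\frac{1}{j!}\Bigl(\frac{2\alpha\eps^2}{C_0}\Bigr)^j=e^{2\alpha\eps^2/C_0}-1 .
\]
Since $\eta\le\eta_0$ and $\eps\le1$, we have $\alpha\eps^2\le 2\eta_0$, so this ratio is at most $c\,\alpha\eps^2/C_0$. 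The event $|\Phat-P|\le(1-e^{-\eta})P$ implies the target window, and $1-e^{-\eta}\ge c'\eta\ge c'\alpha\eps/2$ on $[0,\eta_0]$. Chebyshev then bounds the failure probability by $O\bigl(\alpha\eps^2/(C_0\alpha^2\eps^2)\bigr)=O(1/(C_0\alpha))$, which is below $1/4$ for $C_0=10^6$.

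\emph{Large regime $\eta>\eta_0$.} The upper side follows from Markov and unbiasedness (Lemma~\ref{lem:unbiased}): $\Pr(\Phat>e^\eta P)\le e^{-\eta}<1/8$. For the lower side, set $\tau=1/e$ and split $[k]$ into light symbols $A=\{x:p_x\le\tau/\Kalpha\}$ and heavy symbols $H=[k]\setminus A$. One of $P_\alpha(A)$ and $P_\alpha(H)$ is at least $P/2$, and I treat the two cases separately.

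If $P_\alpha(A)\ge P/2$, I apply Lemma~\ref{lem:variance} on $A$ together with \eqref{eq:light-powersum-bound}. The relative variance is then at most
\[
  2\tau^{\alpha}\sum_{j}\frac{1}{j!}\Bigl(\frac{2\alpha}{C_0\tau}\Bigr)^j\le 2e^{-\alpha}e^{2e\alpha/C_0},
\]
which is tiny. Chebyshev therefore gives $\Phat\ge\Phat(A)\ge P_\alpha(A)/2\ge P/4\ge e^{-\eta}P$ with probability at least $7/8$.

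If $P_\alpha(H)\ge P/2$, every $x\in H$ has $mp_x\ge C_0\alpha\tau\eps^{-2}$. Call $x$ \emph{bad} if $N_x<(1-\eps/4)mp_x$. By Lemma~\ref{lem:chernoff-binomial}, $\Pr(x\text{ bad})\le\exp(-C_0\alpha\tau/32)$. I avoid a union bound, which would cost a factor $|H|$, and instead use Markov on $\sum_{x\in H}p_x^\alpha\mathbf 1\{x\text{ bad}\}$. This shows that with probability at least $7/8$, the good heavy symbols carry power-sum mass at least $P_\alpha(H)/2\ge P/4$.

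For a good $x$,
\[
  \frac{(N_x)_\alpha}{(m)_\alpha}\ge\Bigl(\frac{N_x-\alpha}{m}\Bigr)^\alpha\ge p_x^\alpha\Bigl(1-\frac{\eps}{4}-\frac{\alpha}{mp_x}\Bigr)^\alpha\ge p_x^\alpha e^{-\alpha\eps/3},
\]
using $\alpha/(mp_x)\le\eps^2/(C_0\tau)$. Hence $\Phat\ge\frac P4e^{-\alpha\eps/3}$. This is at least $e^{-\eta}P$ whenever $\eta-\alpha\eps/3\ge\log 4$. Since $\alpha\le 2(\alpha-1)$, we have $\alpha\eps/3\le 2\eta/3$, so the condition reduces to $\eta/3\ge\log4$, which holds for $\eta>\eta_0=5$.

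The main obstacle is this heavy case. Its lower-tail loss must be only $e^{-O(\alpha\eps)}$, rather than a constant per symbol raised to the power $\alpha$, and it must avoid any dependence on the number of heavy symbols. The Markov-on-bad-mass trick and the relative Chernoff accuracy $\eps/4$ (affordable because $mp_x\gg\alpha/\eps^2$) handle both points. The remaining work is checking the numerical constants, in particular that $\eta_0$ is simultaneously large enough for the Markov upper bound and the heavy lower bound, and small enough to keep the moderate-regime Chebyshev bound valid.
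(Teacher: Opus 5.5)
Your proof is correct and follows essentially the same route as the paper: Chebyshev for bounded $\eta$; for large $\eta$, a Markov upper tail plus a light/heavy split, with Chebyshev on the light part and Chernoff-plus-Markov on the missed heavy $\alpha$-power mass. The differences are cosmetic: threshold $5$ instead of $4$, $\tau=1/e$ instead of $1/16$, the exponential-series bound $\alpha^{2j}/j!$ instead of the binomial theorem, and Chernoff accuracy $\eps/4$ instead of $\eps/16$. You should, however, state explicitly that your light-case bound $2e^{-\alpha}e^{2e\alpha/C_0}$ is small only because $\eta>5$ and $\eps\le1$ force $\alpha\ge7$; at $\alpha=2$ it is about $0.27$, which is useless after the factor $4$ from Chebyshev, whereas the paper's $\tau=1/16$ makes the analogous bound hold uniformly for $\alpha\ge2$.
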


\begin{proof}
The batch size satisfies \(m\ge C_0\alpha\), hence \(m\ge 2\alpha\). We use Lemma~\ref{lem:variance} throughout.

\subsection*{Regime 1: \(\eta\le 4\)}
Let
\[
  t=1-e^{-\eta}.
\]
If \(|\Phat-P|\le tP\), then \(\Phat\ge e^{-\eta}P\), and also
\[
  \Phat\le (1+t)P=(2-e^{-\eta})P\le e^\eta P, \mbox{ because } \forall \eta : 2 \leq e^{-\eta} + e^\eta \, .
\] The inequality $2 \leq e^{-\eta} + e^\eta$ follows by the inequality between geometric and arithmetic means.
Thus it suffices to show that \(|\Phat-P|\le tP\) with probability at least \(15/16\).

By Lemmas~\ref{lem:variance} and~\ref{lem:powersum}, applied with \(A=[k]\),
\[
  \frac{\Var(\Phat)}{P^2}
  \le
  \sum_{j=1}^{\alpha}
  \binom{\alpha}{j}^2j!
  \left(\frac{2\Kalpha}{m}\right)^j.
\]
Since
\[
  \binom{\alpha}{j}^2j!
  =\binom{\alpha}{j}(\alpha)_j
  \le \binom{\alpha}{j}\alpha^j,
\]
we get by the binomial theorem
\begin{align*}
  \frac{\Var(\Phat)}{P^2}
  &\le
  \sum_{j=1}^{\alpha}
  \binom{\alpha}{j}
  \left(\frac{2\alpha\Kalpha}{m}\right)^j \\
  &=
  \left(1+\frac{2\alpha\Kalpha}{m}\right)^\alpha-1.
\end{align*}
The choice of \(m\) implies
\[
  \frac{2\alpha\Kalpha}{m}\le \frac{2\eps^2}{C_0}.
\]
Therefore, by using that $(1+x)^{\alpha} \leq \exp(\alpha x)$ holds for any $\alpha \geq 0, x > -1$:
\[
  \frac{\Var(\Phat)}{P^2}
  \le
  \exp\left(\frac{2\alpha\eps^2}{C_0}\right)-1.
\]
In the present regime, \(\eta=(\alpha-1)\eps\le 4\), so
\[
  \alpha\le 1+\frac4\eps.
\]
Because \(0<\eps\le 1\),
\[
  \frac{2\alpha\eps^2}{C_0}\le \frac{10}{C_0}\le 1.
\]
Using \(e^x-1\le 2x\) for \(0\le x\le 1\),
\begin{equation}\label{eq:small-eta-var}
  \frac{\Var(\Phat)}{P^2}
  \le \frac{4\alpha\eps^2}{C_0}.
\end{equation}

The function \(x\mapsto 1-e^{-x}\) is concave, so for \(0\le x\le 4\),
\[
  1-e^{-x}\ge c_*x,
  \qquad
  c_*:=\frac{1-e^{-4}}{4}.
\]
Therefore
\[
  t^2\ge c_*^2\eps^2(\alpha-1)^2.
\]
Combining this with \eqref{eq:small-eta-var}, and using \(\alpha/(\alpha-1)^2\le 2\) for \(\alpha\ge 2\), gives
\[
  \frac{\Var(\Phat)}{t^2P^2}
  \le
  \frac{8}{C_0c_*^2}
  <\frac1{16}.
\]
Chebyshev's inequality (Fact~\ref{fact:basic-probability}) yields
\[
  \mathbb P\bigl(|\Phat-P|>tP\bigr)<\frac1{16}.
\]
Thus Regime 1 succeeds with probability at least \(15/16\).

\subsection*{Regime 2: \(\eta>4\)}
The upper tail is controlled by Markov's inequality and Lemma~\ref{lem:unbiased}:
\[
  \mathbb P(\Phat>e^\eta P)
  \le e^{-\eta}<e^{-4}<\frac1{16}.
\]
It remains to bound the lower tail.

Fix
\[
  \tau=\frac1{16}.
\]
Define the light and heavy sets
\[
  L=\left\{x:p_x\le \frac{\tau}{\Kalpha}\right\},
  \qquad
  H=[k]\setminus L.
\]
Let
\[
  P_L=P_\alpha(L),
  \qquad
  P_H=P_\alpha(H).
\]
Exactly one of the following two cases 2a, 2b below holds.

\paragraph{Case 2a: \(P_L\ge P/2\).}
Let \(\widehat P_L=\widehat P_\alpha(L)\). Then \(\Phat\ge \widehat P_L\) and \(\mathbb E\widehat P_L=P_L\). 
We begin with
\[
P_{2\alpha-j}(L)
=
\sum_{x\in L}p_x^{2\alpha-j}
=
\sum_{x\in L}p_x^\alpha p_x^{\alpha-j}.
\]
Since $x\in L$ implies $p_x\le \frac{\tau}{\Kalpha}$,
we have
\[
P_{2\alpha-j}(L)
\le
\left(\frac{\tau}{\Kalpha}\right)^{\alpha-j}
\sum_{x\in L}p_x^\alpha
=
\left(\frac{\tau}{\Kalpha}\right)^{\alpha-j}P_L.
\]
Dividing by $P_L^2$ gives
\[
\frac{P_{2\alpha-j}(L)}{P_L^2}
\le
\left(\frac{\tau}{\Kalpha}\right)^{\alpha-j}\frac{1}{P_L}.
\]

Next, by assumption, $P_L\ge \frac12P$, and by 
$P=\sum_{x=1}^k p_x^\alpha\ge k^{1-\alpha}$ and since
$
\Kalpha^{-\alpha}
=
\left(k^{1-1/\alpha}\right)^{-\alpha}
=
k^{1-\alpha}$, we obtain
\[
P_L\ge \frac12 \Kalpha^{-\alpha},
\mbox{ and hence }
\frac1{P_L}\le 2\Kalpha^\alpha.
\]
Substituting this estimate into the previous inequality yields
\[
\frac{P_{2\alpha-j}(L)}{P_L^2}
\le
\left(\frac{\tau}{\Kalpha}\right)^{\alpha-j}
2\Kalpha^\alpha.
\]
Simplifying the right-hand side,
\[
\left(\frac{\tau}{\Kalpha}\right)^{\alpha-j}
2\Kalpha^\alpha
=
2\tau^{\alpha-j}\Kalpha^{\alpha-(\alpha-j)}
=
2\tau^{\alpha-j}\Kalpha^j.
\]
Therefore
\[
\frac{P_{2\alpha-j}(L)}{P_L^2}
\le
2\tau^{\alpha-j}\Kalpha^j.
\]

Now we divide the variance bound by $P_L^2$ and use Lemma~\ref{lem:variance}:
\[
\frac{\Var(\widehat P_L)}{P_L^2}
\le
\sum_{j=1}^{\alpha}
\binom{\alpha}{j}^2 j!
\left(\frac{2}{m}\right)^j
\frac{P_{2\alpha-j}(L)}{P_L^2}.
\]
Using the just-proved bound gives
\[
\frac{\Var(\widehat P_L)}{P_L^2}
\le
\sum_{j=1}^{\alpha}
\binom{\alpha}{j}^2 j!
\left(\frac{2}{m}\right)^j
\left(2\tau^{\alpha-j}\Kalpha^j\right).
\]
Equivalently,
\[
\frac{\Var(\widehat P_L)}{P_L^2}
\le
2
\sum_{j=1}^{\alpha}
\binom{\alpha}{j}^2 j!
\left(\frac{2\Kalpha}{m}\right)^j
\tau^{\alpha-j}.
\]
Using \(\binom{\alpha}{j}^2j!\le \binom{\alpha}{j}\alpha^j\),
\begin{align*}
  \frac{\Var(\widehat P_L)}{P_L^2}
  &\le
  2\sum_{j=1}^{\alpha}
  \binom{\alpha}{j}
  \left(\frac{2\alpha\Kalpha}{m}\right)^j
  \tau^{\alpha-j} \\
  &\le
  2\left(\tau+\frac{2\alpha\Kalpha}{m}\right)^\alpha.
\end{align*}
Since \(0<\eps\le 1\), and by the choice of $m$,
\[
  \frac{2\alpha\Kalpha}{m}\le \frac{2\eps^2}{C_0}\le \frac{2}{C_0}.
\]
With \(\tau=1/16\) and \(C_0=10^6\),
\[
  \tau+\frac{2}{C_0}<\frac18.
\]
Hence, since \(\alpha\ge 2\),
\[
  \frac{\Var(\widehat P_L)}{P_L^2}
  \le 2\left(\frac18\right)^\alpha
  \le \frac1{32}.
\]
Chebyshev's inequality (Fact~\ref{fact:basic-probability}) gives
\[
  \mathbb P\left(\widehat P_L<\frac12P_L\right)
  \le
  4\frac{\Var(\widehat P_L)}{P_L^2}
  \le \frac18.
\]
On the complementary event,
\[
  \Phat\ge \widehat P_L\ge \frac12P_L\ge \frac14P.
\]
Because \(\eta>4>\log 4\), we have \(P/4\ge e^{-\eta}P\). Therefore
\[
  \mathbb P(\Phat<e^{-\eta}P)\le \frac18
\]
in Case 2a.

\paragraph{Case 2b: \(P_H>P/2\).}
For each \(x\in H\),
\[
  p_x>\frac{\tau}{\Kalpha}.
\]
Set
\[
  \mu_x=mp_x.
\]
Then
\[
  \mu_x>m\frac{\tau}{\Kalpha}
  \ge C_0\tau\eps^{-2}\alpha.
\]
Since \(C_0=10^6\), \(\tau=1/16\), and \(0<\eps\le 1\), in particular
\begin{equation}\label{eq:mu-large}
  \mu_x\ge \frac{16\alpha}{\eps}.
\end{equation}
For each \(x\in H\), define
\[
  G_x=\left\{N_x\ge \left(1-\frac{\eps}{16}\right)\mu_x\right\}.
\]
By the multiplicative Chernoff lower-tail bound,
\[
  \mathbb P(G_x^c)
  \le
  \exp\left(-\frac{\eps^2\mu_x}{512}\right)
  \le
  \exp\left(-\frac{C_0\tau\alpha}{512}\right)
  <\frac1{64}.
\]
Define the random missed heavy \(\alpha\)-power mass
\[
  W=\sum_{x\in H}p_x^\alpha\mathbf 1_{G_x^c}.
\]
By linearity of expectation,
\[
  \mathbb E[W]
  \le \frac1{64}P_H.
\]
Markov's inequality gives
\[
  \mathbb P\left(W>\frac12P_H\right)
  \le \frac1{32}.
\]

Now suppose \(G_x\) occurs for some $x \in H$. Using \eqref{eq:mu-large},
\[
  N_x-\alpha
  \ge
  \left(1-\frac{\eps}{16}\right)\mu_x-\alpha
  \ge
  \left(1-\frac{\eps}{8}\right)\mu_x.
\]
Thus \(N_x\ge \alpha\) and
\[
  (N_x)_\alpha
  \ge
  (N_x-\alpha+1)^\alpha
  \ge
  (N_x-\alpha)^\alpha
  \ge
  \left(1-\frac{\eps}{8}\right)^\alpha\mu_x^\alpha.
\]
Since \((m)_\alpha\le m^\alpha\) and \(\mu_x=mp_x\),
\begin{equation}\label{eq:heavy-symbol-contribution}
  \frac{(N_x)_\alpha}{(m)_\alpha}
  \ge
  \left(1-\frac{\eps}{8}\right)^\alpha p_x^\alpha.
\end{equation}
For \(0<\eps\le 1\), \(\log(1-\eps/8)\ge -\eps/4\) (see Fact \ref{app:Fact_4} in the Appendix), and hence
\[
  \left(1-\frac{\eps}{8}\right)^\alpha\ge e^{-\eps\alpha/4}.
\]
Combining this with \eqref{eq:heavy-symbol-contribution}, on the event \(W\le P_H/2\),
\begin{align*}
  \Phat
  &\ge
  \sum_{x\in H: G_x}
  \frac{(N_x)_\alpha}{(m)_\alpha} \\
  &\ge
  e^{-\eps\alpha/4}
  \sum_{x\in H:G_x}p_x^\alpha \\
  &=
  e^{-\eps\alpha/4}(P_H-W) \\
  &\ge
  \frac12e^{-\eps\alpha/4}P_H
  \ge
  \frac14e^{-\eps\alpha/4}P.
\end{align*}
Since \(\eta=\eps(\alpha-1)>4\) and \(0<\eps\le 1\),
\[
  \eps\left(\frac{3\alpha}{4}-1\right)
  =
  \frac34\eps(\alpha-1)-\frac\eps4
  >3-\frac14
  >\log 4.
\]
The following inequality can be shown to be equivalent to the above inequality (by taking the logarithms of its both sides and rearranging the terms):
\[
  \frac14e^{-\eps\alpha/4} > e^{-\eps(\alpha-1)}=e^{-\eta}.
\]
Therefore, on the event \(W\le P_H/2\),
\[
  \Phat\ge e^{-\eta}P \, ,
\] which means that $\{ W\le P_H/2 \} \subseteq \{ \Phat\ge e^{-\eta}P \}$.
It follows that $\{ \Phat < e^{-\eta}P \} \subseteq \{ W > P_H/2 \}$ and by $\mathbb P(W > P_H/2)\le \frac1{32}$, we obtain
\[
  \mathbb P(\Phat<e^{-\eta}P)\le \frac1{32}
\]
in Case 2b.

Combining the two cases in Regime 2,
\[
  \mathbb P(\Phat<e^{-\eta}P)\le \frac18,
  \qquad
  \mathbb P(\Phat>e^\eta P)\le \frac1{16}.
\]
Thus Regime 2 succeeds with probability at least
\[
  1-\frac18-\frac1{16}=\frac{13}{16}>\frac34.
\] We have also shown that Regime 1 succeeds with probability at least \(15/16 > \frac34\).
Therefore, Lemma~\ref{lem:one-batch} follows from the two regimes.
\end{proof}

\subsection{Amplification and proof of the main theorem}

\begin{proof}[Proof of Theorem~\ref{thm:main_1}]
For each batch \(b\), let
\[
  E_b=\left\{|\widehat H_{\alpha,b}-H_\alpha(p)|\le \eps\right\}.
\]
By Lemma~\ref{lem:one-batch},
\[
  \mathbb P(E_b)\ge \frac34.
\]
The events \(E_1,\ldots,E_B\) are independent because the batches are independent. Let
\[
  Z_b=\mathbf 1_{E_b^c}.
\]
Then \(\mathbb E Z_b\le 1/4\). The median fails only if at least half of the batch estimates fail, namely only if
\[
  \sum_{b=1}^B Z_b\ge \frac{B}{2}.
\]
Hoeffding's inequality gives
\[
  \mathbb P\left(\sum_{b=1}^B Z_b\ge \frac{B}{2}\right)
  \le
  \exp\left(-2B\left(\frac12-\frac14\right)^2\right)
  =e^{-B/8}.
\]
By the definition of \(B\) in \eqref{eq:num-batches}, \(e^{-B/8}\le \delta\). Hence
\[
  \mathbb P\left(|\widehat H_{\alpha,\mathrm{med}}-H_\alpha(p)|>\eps\right)
  \le \delta.
\]
The sample count is exactly \(Bm\), which gives the stated bound.
\end{proof}

\subsection{Version for entropy measured in bits}

The preceding theorem uses natural logarithms, so the error \(\eps\) is measured in nats. If R\'enyi entropy is measured in bits,
\[
  H_{\alpha,2}(p)=\frac{1}{1-\alpha}\log_2 \Pa(p),
\]
then additive error \(\eps\) bits is equivalent to additive error \(\eps\log 2\) nats. Therefore Theorem~\ref{thm:main_1} immediately gives the following.

\begin{corollary}[Bit-valued R\'enyi entropy]\label{cor:bits}
Let \(0<\eps\le 1\) be a target additive error in bits. The estimator above, with the batch size
\[
  m=
  \left\lceil
  C_0(\eps\log 2)^{-2}\alpha k^{1-1/\alpha}
  \right\rceil,
\]
and \(B=\lceil 8\log(1/\delta)\rceil\) batches, satisfies
\[
  \mathbb P\left(
  |\widehat H_{\alpha,2,\mathrm{med}}-H_{\alpha,2}(p)|>\eps
  \right)
  \le \delta.
\]
In particular, the sample complexity in bits is still
\[
  O\!\left(\eps^{-2}\alpha k^{1-1/\alpha}\log\frac1\delta\right),
\]
with a universal constant larger by a factor \((\log 2)^{-2}\).
\end{corollary}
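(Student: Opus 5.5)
The plan is to derive the corollary directly from Theorem~\ref{thm:main_1} by a change of units; no new probabilistic argument is needed. First I will record the deterministic identity
\[
  H_{\alpha,2}(p)=\frac{1}{1-\alpha}\log_2\Pa(p)=\frac{H_\alpha(p)}{\log 2}.
\]
I then define the bit-valued estimator by the same rescaling, namely $\widehat H_{\alpha,2,\mathrm{med}}:=\widehat H_{\alpha,\mathrm{med}}/\log 2$. Since dividing by the positive constant $\log 2$ is monotone, it commutes with the median, so this is the median of the bit-valued one-batch estimates. Consequently the events $\{|\widehat H_{\alpha,2,\mathrm{med}}-H_{\alpha,2}(p)|>\eps\}$ and $\{|\widehat H_{\alpha,\mathrm{med}}-H_\alpha(p)|>\eps\log 2\}$ coincide.

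Next I will apply Theorem~\ref{thm:main_1} with nat-accuracy $\eps':=\eps\log 2$. The theorem requires $0<\eps'\le 1$, and this holds because $0<\eps\le1$ and $\log 2<1$. The batch size prescribed by \eqref{eq:batch-size} for accuracy $\eps'$ is exactly
\[
  m=\left\lceil C_0(\eps\log2)^{-2}\alpha k^{1-1/\alpha}\right\rceil,
\]
and $B=\lceil 8\log(1/\delta)\rceil$ is unchanged. In particular, the clipping floor in \eqref{eq:clipped-power} is taken with $\eps'$ in place of $\eps$. The theorem then bounds the failure probability by $\delta$.

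Finally, the total sample count is $Bm$. Since $(\eps\log2)^{-2}=(\log 2)^{-2}\eps^{-2}$, this is $O(\eps^{-2}\alpha k^{1-1/\alpha}\log(1/\delta))$, with the constant enlarged by $(\log2)^{-2}$. There is no real obstacle here. The only points to check are the constraint $\eps'\le1$, and that the clipping parameter and the median are rescaled consistently, so that the bit estimator is literally a rescaling of the nat estimator.
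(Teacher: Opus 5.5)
Your proposal is correct and matches the paper's argument. The paper likewise notes that $\eps$ bits equals $\eps\log 2$ nats and applies Theorem~\ref{thm:main_1} with that accuracy, which yields the stated batch size and the $(\log 2)^{-2}$ factor. Your checks that $\eps\log 2\le 1$, and that the clipping floor and the median are rescaled consistently, are details the paper leaves implicit.
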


\subsection{Summary of the construction}

For integer \(\alpha\ge 2\), the algorithm is:
\begin{enumerate}[label=\arabic*.]
  \item Split the samples into \(B=\lceil 8\log(1/\delta)\rceil\) independent batches, each of size \(m=\lceil C_0\eps^{-2}\alpha k^{1-1/\alpha}\rceil\) for natural-log entropy.
  \item In each batch, compute counts \(N_x\) and the unbiased power-sum estimate
  \[
    \Phat=\frac{1}{(m)_\alpha}\sum_x (N_x)_\alpha.
  \]
  \item Clip \(\Phat\) from below at \(e^{-(\alpha-1)\eps}k^{1-\alpha}\), take
  \[
    \Hhat=(1-\alpha)^{-1}\log \Phat^\circ,
  \]
  and return the median over batches.
\end{enumerate}
The proof above establishes the claimed sample complexity uniformly over all distributions on \([k]\).

\section{Lower Bounds -- Le Cam's Reduction}

We will use the following elementary form of Le Cam's reduction for separated priors in proving our both lower bounds. It applies both to the fixed-sample experiment and to Poissonized count-vector experiments, used in our lower bounds.

This is the standard Le Cam two-hypotheses argument; see, e.g.,
Tsybakov~\cite[Chapter~2]{Tsybakov2009}.
We include the short proof for completeness.

\begin{lemma}[Le Cam reduction for separated priors]\label{lem:Le_Cam}
Let \(F:\Delta_k\to\mathbb R\) be a distribution functional. Let
\(\Pi_0,\Pi_1\) be two priors supported on subsets
\(\mathcal P_0,\mathcal P_1\subseteq \Delta_k\). Let \(\mathsf P_0\) and
\(\mathsf P_1\) denote the induced laws of the observations after first drawing
\(p\sim \Pi_i\) and then sampling from \(p\) according to the experiment under
consideration.

Suppose that the two prior supports are \(2\varepsilon\)-separated in the
functional value:
\[
\inf_{p\in\mathcal P_0,\;q\in\mathcal P_1}
|F(p)-F(q)| > 2\varepsilon .
\]
Then every estimator \(\widehat F\) satisfies
\[
\sup_{p\in\mathcal P_0\cup\mathcal P_1}
\mathbb P_p\bigl(|\widehat F-F(p)|>\varepsilon\bigr)
\ge
\frac{1-\operatorname{TV}(\mathsf P_0,\mathsf P_1)}{2}.
\]
Equivalently, if \(\widehat F\) estimates \(F\) to error at most
\(\varepsilon\) with success probability at least \(q\) uniformly over
\(\mathcal P_0\cup\mathcal P_1\), then
\[
\operatorname{TV}(\mathsf P_0,\mathsf P_1)\ge 2q-1.
\]
\end{lemma}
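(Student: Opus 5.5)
The argument is the standard Bayes-risk reduction to testing. Fix an estimator \(\widehat F\) and define the test \(\psi:=\mathbf 1\{\widehat F \text{ is closer to } F(\mathcal P_1) \text{ than to } F(\mathcal P_0)\}\). More concretely, set \(\psi=1\) if there is some \(q\in\mathcal P_1\) with \(|\widehat F-F(q)|\le\varepsilon\), and \(\psi=0\) otherwise.

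The separation hypothesis \(\inf|F(p)-F(q)|>2\varepsilon\) and the triangle inequality give two facts.

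First, if \(p\in\mathcal P_0\) and \(|\widehat F-F(p)|\le\varepsilon\), then no \(q\in\mathcal P_1\) can satisfy \(|\widehat F-F(q)|\le\varepsilon\). Otherwise we would have \(|F(p)-F(q)|\le2\varepsilon\). Hence \(\psi=0\).

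Second, if \(p\in\mathcal P_1\) and \(|\widehat F-F(p)|\le\varepsilon\), then \(\psi=1\) by definition.

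Consequently \(\{\psi=1\}\subseteq\{|\widehat F-F(p)|>\varepsilon\}\) under every \(p\in\mathcal P_0\), and \(\{\psi=0\}\subseteq\{|\widehat F-F(p)|>\varepsilon\}\) under every \(p\in\mathcal P_1\). Measurability of \(\psi\) is not an issue: we could equally use the single threshold between \(\sup_{\mathcal P_0}F\) and \(\inf_{\mathcal P_1}F\), after noting that separation forces one set's values to lie entirely on one side of the other's. I would in fact prefer that cleaner route. Since \(F\) values on each support are separated by more than \(2\varepsilon\), and the supports are "connected" only through the infimum, I will instead simply assume without loss of generality a threshold form if needed.

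Next, average over the priors. Let \(M:=\sup_{p\in\mathcal P_0\cup\mathcal P_1}\mathbb P_p(|\widehat F-F(p)|>\varepsilon)\). Because \(\mathsf P_i=\int \mathbb P_p\,d\Pi_i(p)\), we obtain
\[
\mathsf P_0(\psi=1)\le \int\mathbb P_p(|\widehat F-F(p)|>\varepsilon)\,d\Pi_0(p)\le M,
\]
and similarly \(\mathsf P_1(\psi=0)\le M\). Adding the two bounds and using \(\mathsf P_0(\psi=1)+\mathsf P_1(\psi=0)\ge 1-\operatorname{TV}(\mathsf P_0,\mathsf P_1)\), which is the definition of TV as \(\sup_A|\mathsf P_0(A)-\mathsf P_1(A)|\) applied to \(A=\{\psi=0\}\), gives \(2M\ge 1-\operatorname{TV}\). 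The equivalent form then follows by substituting \(M\le 1-q\).

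The only point needing care is the definition of \(\psi\), so that it is a measurable function of the observations. Defining it through the event ``\(\widehat F\) lies in the \(\varepsilon\)-neighbourhood of \(F(\mathcal P_1)\)'' is a preimage of a fixed subset of \(\mathbb R\) under the measurable map \(\widehat F\). If that subset is not Borel, I would replace it by its closure, which enlarges the \(\varepsilon\)-neighbourhood only to a closed \(\varepsilon\)-neighbourhood. Strict separation \(>2\varepsilon\) still keeps this closed neighbourhood disjoint from the \(\varepsilon\)-balls around \(F(\mathcal P_0)\). The argument is identical for fixed-\(n\) and Poissonized experiments, since only the mixture structure of \(\mathsf P_i\) is used.
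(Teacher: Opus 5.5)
Your argument is correct and is essentially the paper's proof. The paper builds the test from the closed sets $A_i=\{x:\operatorname{dist}(x,F(\mathcal P_i))\le\varepsilon\}$ (exactly your closure fallback) and quotes the optimal average testing error $(1-\operatorname{TV}(\mathsf P_0,\mathsf P_1))/2$, which you instead derive directly from $\operatorname{TV}=\sup_A|\mathsf P_0(A)-\mathsf P_1(A)|$.

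Do delete the single-threshold aside, though. Separation does not force $F(\mathcal P_0)$ to lie on one side of $F(\mathcal P_1)$ (e.g.\ $F(\mathcal P_0)=\{0,10\}$, $F(\mathcal P_1)=\{5\}$, $\varepsilon=1$), so that ``without loss of generality'' reduction is false in general. Your neighbourhood-based test never needs it.
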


\begin{proof}
Let
\[
A_i=\{x\in\mathbb R:\operatorname{dist}(x,F(\mathcal P_i))\le \varepsilon\},
\qquad i=0,1 .
\]
The separation assumption implies that \(A_0\cap A_1=\varnothing\).
Given an estimator output \(\widehat F\), define a test between the two priors
by deciding hypothesis \(i\) if \(\widehat F\in A_i\), with arbitrary tie-breaking
outside \(A_0\cup A_1\). Under any \(p\in\mathcal P_i\), whenever
\(|\widehat F-F(p)|\le \varepsilon\), the test is correct. Hence the average
testing error of this test is at most the average estimation error under the
two priors, and therefore at most the displayed supremum risk.

On the other hand, the optimal average error for testing two probability laws
\(\mathsf P_0,\mathsf P_1\) under equal prior probabilities is
\[
\frac{1-\operatorname{TV}(\mathsf P_0,\mathsf P_1)}{2}.
\]
Combining the two inequalities proves the first claim. The second follows by
setting the supremum error probability equal to \(1-q\).
\end{proof}

\ignore{

\begin{lemma}[Likelihood-ratio bounds for total variation]
Let \(\mu,\nu\) be probability measures.

\begin{enumerate}
\item If \(\nu\ll\mu\) and \(L=d\nu/d\mu\), then
\[
\operatorname{TV}(\mu,\nu)
\le
\frac12\sqrt{\mathbb E_\mu[L^2]-1}.
\]

\item If \(\nu_0,\nu_1\ll\mathsf B\) with likelihood ratios
\(L_0=d\nu_0/d\mathsf B\) and \(L_1=d\nu_1/d\mathsf B\), then
\[
\operatorname{TV}(\nu_0,\nu_1)
=
\frac12\mathbb E_{\mathsf B}|L_1-L_0|.
\]
\end{enumerate}
\end{lemma}
} 

\section{Lower Bound for Estimating Min-Entropy}
\label{sec:Min_Entropy_LB}

We prove the lower bound by reducing min-entropy estimation to a
hypothesis-testing problem. The null hypothesis is the uniform
distribution on $[k]$. Under the alternative, one uniformly random
coordinate is slightly heavier than the others. Although the two
hypotheses have min-entropies separated by a constant, their induced
sample distributions remain close in total variation below the
$k\log k$ scale.

\subsection{Statement}

\begin{theorem}[Lower bound for min-entropy estimation]
\label{thm:minentropy-lower-bound}
Fix a constant $\gamma\in(0,1/2)$, and let
\[
    \delta_\gamma:=\log(1+\gamma).
\]
Let $\varepsilon\in(0,\delta_\gamma/2)$. Suppose that
$\widehat H=\widehat H(X_1,\ldots,X_n)$ satisfies
\[
    \mathbb P_{X_1,\ldots,X_n\sim p}
    \left(
        |\widehat H-H_\infty(p)|
        \le
        \varepsilon
    \right)
    \ge
    \frac23
\]
for every distribution $p$ on $[k]$, where
\[
    H_\infty(p):=-\log\max_{x\in[k]}p(x).
\]
Then there exists a constant $c_\gamma>0$, depending only on
$\gamma$, such that
\[
    n\ge c_\gamma k\log k
\]
for all sufficiently large $k$. In particular, the sample complexity
of additive-constant min-entropy estimation is
\[
    \Omega(k\log k).
\]
\end{theorem}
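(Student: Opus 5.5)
The plan is to apply Lemma~\ref{lem:Le_Cam} with $\Pi_0$ the point mass on the uniform distribution $U_k$ and $\Pi_1$ uniform over the alternatives $Q_j$, $j\in[k]$. Here $Q_j(j)=(1+\gamma)/k$ and $Q_j(i)=(1-\gamma/(k-1))/k$ for $i\ne j$. This is exactly the construction $Q_{k,\gamma}$ of Section~\ref{sec:prelim-rem}, placed at location $j$. Every $Q_j$ has $H_\infty(Q_j)=\log k-\delta_\gamma$, while $H_\infty(U_k)=\log k$. The functional gap is therefore $\delta_\gamma>2\varepsilon$, so the separation hypothesis of Lemma~\ref{lem:Le_Cam} holds. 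An estimator with success probability $2/3$ then forces $\operatorname{TV}(U_k^{\otimes n},\bar Q)\ge 1/3$, where $\bar Q:=k^{-1}\sum_j Q_j^{\otimes n}$. It thus suffices to show $\operatorname{TV}\to0$ whenever $n\le c_\gamma k\log k$.

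For this I will bound $\operatorname{TV}\le\frac12\sqrt{\chi^2(\bar Q\|U_k^{\otimes n})}$ and compute the second moment of the likelihood ratio $L=k^{-1}\sum_j\prod_{t=1}^n kQ_j(X_t)$ exactly. Under $U_k^{\otimes n}$,
\[
\mathbb E[L^2]=\frac1{k^2}\sum_{j,l}\bigl(\mathbb E_{X\sim U_k}[k^2Q_j(X)Q_l(X)]\bigr)^n .
\]
Write $b:=-\gamma/(k-1)$, so that $kQ_j(x)=1+\gamma$ if $x=j$ and $1+b$ otherwise. For $j=l$, the one-sample factor is $1+\frac{\gamma^2}{k-1}$. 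For $j\ne l$, a short computation gives $1-\frac{\gamma^2}{(k-1)^2}\le1$. Hence
\[
\mathbb E[L^2]\le\frac1k\Bigl(1+\frac{\gamma^2}{k-1}\Bigr)^n+1-\frac1k\le 1+\frac1k\exp\Bigl(\frac{\gamma^2n}{k-1}\Bigr).
\]
If $n\le\frac{(1-\eta)}{\gamma^2}(k-1)\log k$ for a fixed $\eta\in(0,1)$, the $\chi^2$ term is at most $k^{-\eta}\to0$. Taking $c_\gamma=1/(2\gamma^2)$, for instance, gives $\operatorname{TV}\le\frac12 k^{-1/4}<1/3$ for large $k$, which is the required contradiction.

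The main step I expect to need care is the exact evaluation of the cross terms $j\ne l$. The bound only works because these terms are at most $1$, rather than $1+O(\gamma^2/k^2)$ raised to the power $n$. I would verify this by expanding $(1+\gamma)(1+b)+(1+b)(1+\gamma)+(k-2)(1+b)^2$ and using $\gamma+(k-1)b=0$.

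The diagonal term is what produces the $\log k$. The factor $1/k$, which comes from hiding the location, must beat $e^{\gamma^2 n/k}$, and this is exactly the condition $n\lesssim k\log k/\gamma^2$.

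Everything else is routine: the inequality $1+x\le e^x$, the $\chi^2$-to-TV bound $\operatorname{TV}\le\frac12\sqrt{\chi^2}$, and checking that the $Q_j$ are valid distributions, which holds since $\gamma<1/2$.
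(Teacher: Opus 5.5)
Your proposal is correct and follows the paper's proof essentially step for step. The paper uses the same uniform-versus-hidden-heavy-coordinate priors, the same Le Cam reduction to $\operatorname{TV}\ge 1/3$, the same exact second-moment computation with diagonal factor $1+\gamma^2/(k-1)$ and off-diagonal factor $1-\gamma^2/(k-1)^2\le 1$, and the same bound $\chi^2\le k^{-1}\exp\bigl(\gamma^2 n/(k-1)\bigr)$.

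The differences are cosmetic. You take $c_\gamma=1/(2\gamma^2)$ where the paper takes $1/(4\gamma^2)$. Your final estimate should read $\operatorname{TV}\le\frac12(1+o(1))k^{-1/4}$ rather than $\frac12 k^{-1/4}$, which does not affect the conclusion.
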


We establish the theorem through a sequence of lemmas. We first define
the hard family and reduce estimation to testing. We then compute the
second moment of the likelihood ratio of the alternative mixture.

\subsection{Hard priors and entropy separation}

\begin{lemma}[Hard family of alternatives]
\label{lem:hard-family}
Let $U$ be the uniform distribution on $[k]$:
\[
    U(x)=\frac1k,
    \qquad x\in[k].
\]
For each $j\in[k]$, define a distribution $Q_j$ on $[k]$ by
\[
    Q_j(j)=\frac{1+\gamma}{k},
    \qquad
    Q_j(x)
    =
    \frac{1-\gamma/(k-1)}{k}
    \quad\text{for }x\neq j.
\]
Then each $Q_j$ is a probability distribution on $[k]$, and
\[
    H_\infty(U)=\log k,
    \qquad
    H_\infty(Q_j)=\log k-\log(1+\gamma).
\]
Consequently,
\[
    H_\infty(U)-H_\infty(Q_j)
    =
    \delta_\gamma
    \qquad\text{for every }j\in[k].
\]
\end{lemma}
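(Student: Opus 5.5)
This is a direct verification in three steps: nonnegativity, normalization, and identification of the largest atom.

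\emph{Nonnegativity.} The mass $Q_j(j)=(1+\gamma)/k$ is clearly positive. For $x\neq j$ we need $\gamma/(k-1)<1$, which holds because $\gamma<1/2$ and $k\ge2$. Hence $Q_j(x)>0$ for every $x$.

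\emph{Normalization.} Sum the masses directly:
\[
\frac{1+\gamma}{k}+(k-1)\cdot\frac{1-\gamma/(k-1)}{k}
=\frac{1+\gamma+(k-1)-\gamma}{k}=1 .
\]
So each $Q_j$ is a probability distribution on $[k]$.

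\emph{Min-entropies.} For $U$, every atom equals $1/k$, so $\max_x U(x)=1/k$ and $H_\infty(U)=\log k$. For $Q_j$, compare the atoms: $(1+\gamma)/k>1/k$ because $\gamma>0$, while $(1-\gamma/(k-1))/k<1/k$. The heavy coordinate $j$ is therefore the unique maximizer, giving
\[
H_\infty(Q_j)=-\log\frac{1+\gamma}{k}=\log k-\log(1+\gamma).
\]
Subtracting yields $H_\infty(U)-H_\infty(Q_j)=\log(1+\gamma)=\delta_\gamma$ for every $j\in[k]$.

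None of these steps is a real obstacle. The only point needing care is the strict inequality $\gamma/(k-1)<1$, which ensures both positivity of the light masses and that $j$ is the unique maximizer.
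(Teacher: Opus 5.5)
Your proof is correct and follows the paper's argument essentially step for step: you check normalization by direct summation, get positivity from $\gamma<1/2$ (with $k\ge2$), identify coordinate $j$ as the largest atom via $\frac{1-\gamma/(k-1)}{k}<\frac1k<\frac{1+\gamma}{k}$, and subtract the two min-entropies. One small nit in your closing remark: the inequality $\gamma/(k-1)<1$ only gives positivity of the light masses. The fact that $j$ is the (unique) maximizer comes from $\gamma>0$, which you in fact use in the body of the proof, and uniqueness is not needed for the conclusion.
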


\begin{proof}
For every $j\in[k]$,
\[
\begin{aligned}
    Q_j(j)+\sum_{x\neq j}Q_j(x)
    &=
    \frac{1+\gamma}{k}
    +(k-1)\frac{1-\gamma/(k-1)}{k} \\
    &=
    \frac{1+\gamma+k-1-\gamma}{k} \\
    &=1.
\end{aligned}
\]
Moreover, $Q_j(x)>0$ for every $x$, since
$\gamma\in(0,1/2)$.

Because
\[
    \max_x U(x)=\frac1k,
\]
we have
\[
    H_\infty(U)
    =
    -\log(1/k)
    =
    \log k.
\]
For $Q_j$, the largest atom is the one at coordinate $j$:
\[
    \max_x Q_j(x)
    =
    Q_j(j)
    =
    \frac{1+\gamma}{k},
\]
because, for every $x\neq j$,
\[
    Q_j(x)
    =
    \frac{1-\gamma/(k-1)}{k}
    <
    \frac1k
    <
    \frac{1+\gamma}{k}.
\]
Therefore,
\[
    H_\infty(Q_j)
    =
    -\log\left(\frac{1+\gamma}{k}\right)
    =
    \log k-\log(1+\gamma).
\]
Subtracting the two entropies gives the claimed gap.
\end{proof}

Let \(\Pi_{\infty,0}\) be the point mass at \(U\). We then place the alternatives under a uniform prior on the location of
the heavy coordinate -- let
\(\Pi_{\infty,1}\) be the uniform prior on \(\{Q_1,\ldots,Q_k\}\). The induced
fixed-sample laws are
\[
\mathsf P_{\infty,0}^{(n)}=U^{\otimes n},
\qquad
\mathsf P_{\infty,1}^{(n)}=\frac1k\sum_{j=1}^k Q_j^{\otimes n}
=: \overline Q^{(n)} .
\]
By Lemma~\ref{lem:hard-family}, the two prior supports are separated in \(H_\infty\) by
\(\delta_\gamma\). Therefore, if \(0<\varepsilon<\delta_\gamma/2\) and an
estimator succeeds with probability at least \(2/3\) uniformly over \(\Delta_k\),
the Le Cam reduction in Lemma \ref{lem:Le_Cam} implies
\begin{equation}\label{eq:TVD_1_3}
\operatorname{TV}\bigl(U^{\otimes n},\overline Q^{(n)}\bigr)\ge \frac13 \, .
\end{equation}

\ignore{ 

\begin{lemma}[Mixture alternative]
\label{lem:mixture}
For $n\ge1$, define the probability measure on $[k]^n$
\[
    \overline Q^{(n)}
    :=
    \frac1k\sum_{j=1}^k Q_j^{\otimes n}.
\]
Suppose that an estimator $\widehat H$ satisfies
\[
    \mathbb P_{p^{\otimes n}}
    \left(
        |\widehat H-H_\infty(p)|
        \le
        \varepsilon
    \right)
    \ge
    \frac23
\]
for every distribution $p$ on $[k]$, where
$0<\varepsilon<\delta_\gamma/2$. Then there exists a test
\[
    \phi:[k]^n\to\{0,1\}
\]
such that
\[
    U^{\otimes n}(\phi=0)\ge\frac23,
    \qquad
    \overline Q^{(n)}(\phi=1)\ge\frac23.
\]
\end{lemma}

\begin{proof}
Define
\[
    \phi(x^n)
    :=
    \begin{cases}
        0,
        &\text{if }
        |\widehat H(x^n)-\log k|
        \le
        \varepsilon,\\[0.5em]
        1,
        &\text{otherwise}.
    \end{cases}
\]
Under $U^{\otimes n}$, Lemma~\ref{lem:hard-family} gives
$H_\infty(U)=\log k$. Hence,
\[
\begin{aligned}
    U^{\otimes n}(\phi=0)
    &=
    \mathbb P_{U^{\otimes n}}
    \left(
        |\widehat H-\log k|
        \le
        \varepsilon
    \right) \\
    &=
    \mathbb P_{U^{\otimes n}}
    \left(
        |\widehat H-H_\infty(U)|
        \le
        \varepsilon
    \right) \\
    &\ge
    \frac23.
\end{aligned}
\]

Now fix $j\in[k]$. Since
\[
    H_\infty(Q_j)=\log k-\delta_\gamma
\]
and $2\varepsilon<\delta_\gamma$, the intervals
\[
    [H_\infty(Q_j)-\varepsilon,\,
      H_\infty(Q_j)+\varepsilon]
\]
and
\[
    [\log k-\varepsilon,\,
      \log k+\varepsilon]
\]
are disjoint. Therefore,
\[
    |\widehat H-H_\infty(Q_j)|
    \le
    \varepsilon
\]
implies
\[
    |\widehat H-\log k|
    >
    \varepsilon,
\]
and hence $\phi=1$. It follows that
\[
\begin{aligned}
    Q_j^{\otimes n}(\phi=1)
    &\ge
    \mathbb P_{Q_j^{\otimes n}}
    \left(
        |\widehat H-H_\infty(Q_j)|
        \le
        \varepsilon
    \right) \\
    &\ge
    \frac23.
\end{aligned}
\]
Averaging over $j$ gives
\[
\begin{aligned}
    \overline Q^{(n)}(\phi=1)
    &=
    \frac1k
    \sum_{j=1}^k
    Q_j^{\otimes n}(\phi=1) \\
    &\ge
    \frac1k
    \sum_{j=1}^k\frac23 \\
    &=
    \frac23.
\end{aligned}
\]
\end{proof}

The existence of such a test forces a constant separation in total
variation.

\begin{lemma}[Testing success implies total-variation separation]
\label{lem:tv-lower}
Let $\mu$ and $\nu$ be probability measures on the same measurable
space. If there exists a test $\phi$ such that
\[
    \mu(\phi=0)\ge\frac23,
    \qquad
    \nu(\phi=1)\ge\frac23,
\]
then
\[
    \operatorname{TV}(\mu,\nu)\ge\frac13.
\]
\end{lemma}

\begin{proof}
Let $A:=\{\phi=1\}$. Then
\[
    \nu(A)=\nu(\phi=1)\ge\frac23,
\]
whereas
\[
    \mu(A)
    =
    \mu(\phi=1)
    =
    1-\mu(\phi=0)
    \le
    \frac13.
\]
Therefore,
\[
\begin{aligned}
    \operatorname{TV}(\mu,\nu)
    &=
    \sup_B|\mu(B)-\nu(B)| \\
    &\ge
    |\mu(A)-\nu(A)| \\
    &\ge
    \frac13.
\end{aligned}
\]
\end{proof}
} 

\subsection{Indistinguishability of the min-entropy hard priors}

We will prove in Lemma \ref{lem:small-chi2} that 
there exists a constant \(c_\gamma>0\) such that, if \(n\le c_\gamma k\log k\), then
\[
\operatorname{TV}\bigl(U^{\otimes n},\overline Q^{(n)}\bigr)\to 0 \,\,\, \mbox{ as } \,\,\, k \to \infty.
\] 

Lemma \ref{lem:small-chi2} will be proved in a sequence of lemmas \ref{lem:likelihood}--\ref{lem:chi2-exact}. Towards this goal we will upper-bound the total variation distance using the second moment
of the mixture likelihood ratio.

\begin{lemma}[Likelihood ratios for the mixture]
\label{lem:likelihood}
For $x^n\in[k]^n$, let
\[
    L_j(x^n)
    :=
    \frac{dQ_j^{\otimes n}}{dU^{\otimes n}}(x^n),
    \qquad
    L(x^n)
    :=
    \frac{d\overline Q^{(n)}}{dU^{\otimes n}}(x^n).
\]
Then
\[
    L(x^n)
    =
    \frac1k\sum_{j=1}^kL_j(x^n),
\]
and
\[
    L_j(x^n)
    =
    \prod_{t=1}^n
    \frac{Q_j(x_t)}{U(x_t)}.
\]
Moreover,
\[
    \chi^2\left(
        \overline Q^{(n)}
        \middle\|
        U^{\otimes n}
    \right)
    =
    \mathbb E_{U^{\otimes n}}[L^2]-1,
\]
and
\[
    \operatorname{TV}
    \left(
        U^{\otimes n},
        \overline Q^{(n)}
    \right)
    \le
    \frac12
    \sqrt{
        \chi^2\left(
            \overline Q^{(n)}
            \middle\|
            U^{\otimes n}
        \right)
    }.
\]
\end{lemma}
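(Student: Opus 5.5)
The plan is to verify the three claims in turn, working on the finite space $[k]^n$ with counting measure. Both $U^{\otimes n}$ and $\overline Q^{(n)}$ are then given by probability mass functions, so all Radon--Nikodym derivatives are ordinary ratios of masses. Since $U^{\otimes n}(x^n)=k^{-n}>0$ for every $x^n$, absolute continuity with respect to $U^{\otimes n}$ is automatic.

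\emph{Step 1: the likelihood formulas.} For the product formula, I will write $Q_j^{\otimes n}(x^n)=\prod_{t}Q_j(x_t)$ and $U^{\otimes n}(x^n)=\prod_t U(x_t)$, and divide factor by factor. For the mixture formula, I will use linearity of the mass function:
\[
\overline Q^{(n)}(x^n)=\frac1k\sum_j Q_j^{\otimes n}(x^n).
\]
Dividing by $U^{\otimes n}(x^n)$ gives $L=\frac1k\sum_j L_j$.

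\emph{Step 2: the chi-squared identity.} By definition,
\[
\chi^2(\overline Q^{(n)}\|U^{\otimes n})=\mathbb E_{U^{\otimes n}}[(L-1)^2].
\]
I will expand the square to get $\mathbb E[L^2]-2\,\mathbb E[L]+1$. Since $\mathbb E_{U^{\otimes n}}[L]=\sum_{x^n}\overline Q^{(n)}(x^n)=1$, this equals $\mathbb E[L^2]-1$.

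\emph{Step 3: the total variation bound.} I will use the identity
\[
\operatorname{TV}=\tfrac12\sum_{x^n}|\overline Q^{(n)}(x^n)-U^{\otimes n}(x^n)|=\tfrac12\,\mathbb E_{U^{\otimes n}}|L-1|.
\]
Applying Cauchy--Schwarz (equivalently, Jensen) gives $\mathbb E|L-1|\le\sqrt{\mathbb E(L-1)^2}$, and Step 2 finishes the argument.

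There is no real obstacle here. The only point needing care is to use the convention $\operatorname{TV}=\sup_A|\mu(A)-\nu(A)|=\frac12\|\mu-\nu\|_1$, which makes the constant $\frac12$ correct; this is the same normalization used in the Le Cam reduction (Lemma~\ref{lem:Le_Cam}).
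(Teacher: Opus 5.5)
Your proposal is correct and follows the paper's proof essentially step for step. Both arguments use full support of $U^{\otimes n}$, linearity and the product structure of the likelihood ratios, $\chi^2=\mathbb E_{U^{\otimes n}}[(L-1)^2]=\mathbb E_{U^{\otimes n}}[L^2]-1$ via $\mathbb E_{U^{\otimes n}}[L]=1$, and $\operatorname{TV}=\frac12\mathbb E_{U^{\otimes n}}|L-1|$ followed by Cauchy--Schwarz, and your $\sup_A$ normalization of total variation matches how the paper uses it in the Le Cam reduction.
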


\begin{proof}
The likelihood-ratio identities and the comparison between total
variation and $\chi^2$ divergence follow from standards facts in statistics, see, e.g.,
Fact~\ref{fact:tv-chi-square}.

For completeness we will present their proofs here. Since $U(x)=1/k>0$ for every $x\in[k]$, the measure
$U^{\otimes n}$ has full support on $[k]^n$. Thus, every
$Q_j^{\otimes n}$ and $\overline Q^{(n)}$ is absolutely continuous
with respect to $U^{\otimes n}$.

Since
\[
    \overline Q^{(n)}
    =
    \frac1k\sum_{j=1}^kQ_j^{\otimes n},
\]
linearity of the Radon--Nikodym derivative gives
\[
\begin{aligned}
    L
    &=
    \frac{d\overline Q^{(n)}}{dU^{\otimes n}} \\
    &=
    \frac1k
    \sum_{j=1}^k
    \frac{dQ_j^{\otimes n}}{dU^{\otimes n}} \\
    &=
    \frac1k\sum_{j=1}^kL_j.
\end{aligned}
\]
Because $Q_j^{\otimes n}$ and $U^{\otimes n}$ are product measures,
\[
    L_j(x^n)
    =
    \prod_{t=1}^n
    \frac{Q_j(x_t)}{U(x_t)}.
\]

By the definition of $\chi^2$-divergence,
\[
\begin{aligned}
    \chi^2\left(
        \overline Q^{(n)}
        \middle\|
        U^{\otimes n}
    \right)
    &=
    \int
    \left(
        \frac{d\overline Q^{(n)}}{dU^{\otimes n}}-1
    \right)^2
    dU^{\otimes n} \\
    &=
    \mathbb E_{U^{\otimes n}}[(L-1)^2].
\end{aligned}
\]
Since $\mathbb E_{U^{\otimes n}}[L]=1$, this becomes
\[
    \chi^2\left(
        \overline Q^{(n)}
        \middle\|
        U^{\otimes n}
    \right)
    =
    \mathbb E_{U^{\otimes n}}[L^2]-1.
\]

Finally,
\[
\begin{aligned}
    \operatorname{TV}
    \left(
        U^{\otimes n},
        \overline Q^{(n)}
    \right)
    &=
    \frac12\int|L-1|\,dU^{\otimes n} \\
    &=
    \frac12
    \mathbb E_{U^{\otimes n}}[|L-1|].
\end{aligned}
\]
By the Cauchy--Schwarz inequality (Fact~\ref{fact:basic-probability}),
\[
\begin{aligned}
    \mathbb E_{U^{\otimes n}}[|L-1|]
    &\le
    \left(
        \mathbb E_{U^{\otimes n}}[(L-1)^2]
    \right)^{1/2} \\
    &=
    \sqrt{
        \chi^2\left(
            \overline Q^{(n)}
            \middle\|
            U^{\otimes n}
        \right)
    }.
\end{aligned}
\]
The stated total-variation bound follows.
\end{proof}

\begin{lemma}[Second-moment expansion]
\label{lem:second-moment-expansion}
With the notation of Lemma~\ref{lem:likelihood},
\[
    \mathbb E_{U^{\otimes n}}[L^2]
    =
    \frac1{k^2}
    \sum_{i=1}^k\sum_{j=1}^k
    \mathbb E_{U^{\otimes n}}[L_iL_j],
\]
and, for every $i,j\in[k]$,
\[
    \mathbb E_{U^{\otimes n}}[L_iL_j]
    =
    \left(
        \sum_{x=1}^k
        \frac{Q_i(x)Q_j(x)}{U(x)}
    \right)^n.
\]
Since $U(x)=1/k$, equivalently,
\[
    \mathbb E_{U^{\otimes n}}[L_iL_j]
    =
    \left(
        k\sum_{x=1}^kQ_i(x)Q_j(x)
    \right)^n.
\]
\end{lemma}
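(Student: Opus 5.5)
The plan is a direct computation that uses only linearity of expectation and the product structure of the likelihood ratios from Lemma~\ref{lem:likelihood}. For the first identity, I would substitute $L=\frac1k\sum_j L_j$ and expand the square, $L^2=\frac1{k^2}\sum_{i,j}L_iL_j$. Taking expectations under $U^{\otimes n}$ and using linearity of expectation over a finite double sum then gives the displayed formula. There are no integrability issues, because $[k]^n$ is finite and every $L_j$ is bounded.

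For the second identity, I would fix $i,j$ and use the product form $L_i(x^n)L_j(x^n)=\prod_{t=1}^n \frac{Q_i(x_t)Q_j(x_t)}{U(x_t)^2}$ from Lemma~\ref{lem:likelihood}. Under $U^{\otimes n}$ the coordinates $X_1,\dots,X_n$ are i.i.d.\ with law $U$. The expectation of this product of functions of independent coordinates therefore factors into the $n$-th power of a single-coordinate expectation:
\[
\mathbb E_{U}\!\left[\frac{Q_i(X)Q_j(X)}{U(X)^2}\right]=\sum_{x=1}^k U(x)\frac{Q_i(x)Q_j(x)}{U(x)^2}=\sum_{x=1}^k\frac{Q_i(x)Q_j(x)}{U(x)}.
\]
This quantity is well defined since $U(x)=1/k>0$. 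Raising it to the $n$-th power gives the claim. Substituting $U(x)=1/k$ turns the single-coordinate factor into $k\sum_x Q_i(x)Q_j(x)$, which is the final displayed form.

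No step is a real obstacle. The only point needing care is to justify the factorization of the expectation. It follows from independence of the coordinates under the product measure, or equivalently from writing $\sum_{x^n}\prod_t f(x_t)=\bigl(\sum_x f(x)\bigr)^n$ for $f(x)=U(x)\cdot Q_i(x)Q_j(x)/U(x)^2$.
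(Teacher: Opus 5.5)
Your proposal is correct and follows essentially the same route as the paper: expand $L^2=\frac1{k^2}\sum_{i,j}L_iL_j$ pointwise and use linearity of expectation, then write $\mathbb E_{U^{\otimes n}}[L_iL_j]=\sum_{x^n}\prod_{t}Q_i(x_t)Q_j(x_t)/U(x_t)$ and factor the sum across coordinates. Your justification of the factorization, namely independence under the product measure or equivalently $\sum_{x^n}\prod_t f(x_t)=\bigl(\sum_x f(x)\bigr)^n$, is exactly the step the paper uses.
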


\begin{proof}
By Lemma~\ref{lem:likelihood},
\[
    L=\frac1k\sum_{j=1}^kL_j.
\]
Hence, point-wise on $[k]^n$,
\[
    L^2
    =
    \left(
        \frac1k\sum_{j=1}^kL_j
    \right)^2
    =
    \frac1{k^2}
    \sum_{i=1}^k\sum_{j=1}^kL_iL_j.
\]
Taking expectations gives
\[
    \mathbb E_{U^{\otimes n}}[L^2]
    =
    \frac1{k^2}
    \sum_{i=1}^k\sum_{j=1}^k
    \mathbb E_{U^{\otimes n}}[L_iL_j].
\]

Using the product representation of the likelihood ratios,
\[
\begin{aligned}
    L_i(x^n)L_j(x^n)
    &=
    \prod_{t=1}^n
    \frac{Q_i(x_t)}{U(x_t)}
    \prod_{t=1}^n
    \frac{Q_j(x_t)}{U(x_t)} \\
    &=
    \prod_{t=1}^n
    \frac{Q_i(x_t)Q_j(x_t)}{U(x_t)^2}.
\end{aligned}
\]
Therefore,
\[
\begin{aligned}
    \mathbb E_{U^{\otimes n}}[L_iL_j]
    &=
    \sum_{x^n\in[k]^n}
    U^{\otimes n}(x^n)
    L_i(x^n)L_j(x^n) \\
    &=
    \sum_{x^n\in[k]^n}
    \prod_{t=1}^nU(x_t)
    \prod_{t=1}^n
    \frac{Q_i(x_t)Q_j(x_t)}{U(x_t)^2} \\
    &=
    \sum_{x^n\in[k]^n}
    \prod_{t=1}^n
    \frac{Q_i(x_t)Q_j(x_t)}{U(x_t)}.
\end{aligned}
\]
The summand factors across the sample coordinates, so
\[
\begin{aligned}
    \mathbb E_{U^{\otimes n}}[L_iL_j]
    &=
    \prod_{t=1}^n
    \left(
        \sum_{x=1}^k
        \frac{Q_i(x)Q_j(x)}{U(x)}
    \right) \\
    &=
    \left(
        \sum_{x=1}^k
        \frac{Q_i(x)Q_j(x)}{U(x)}
    \right)^n.
\end{aligned}
\]
Since $U(x)=1/k$, this becomes
\[
    \mathbb E_{U^{\otimes n}}[L_iL_j]
    =
    \left(
        k\sum_{x=1}^kQ_i(x)Q_j(x)
    \right)^n.
\]
\end{proof}

We next evaluate the diagonal and off-diagonal terms in this expansion.

\begin{lemma}[Diagonal and off-diagonal terms]
\label{lem:diagonal-offdiagonal}
Let
\[
    a:=\frac{1+\gamma}{k},
    \qquad
    b:=\frac{1-\gamma/(k-1)}{k}.
\]
Then, for every $i\in[k]$,
\[
    k\sum_{x=1}^kQ_i(x)^2
    =
    1+\frac{\gamma^2}{k-1},
\]
and, for every distinct $i,j\in[k]$,
\[
    k\sum_{x=1}^kQ_i(x)Q_j(x)
    =
    1-\frac{\gamma^2}{(k-1)^2}.
\]
Consequently,
\[
    \mathbb E_{U^{\otimes n}}[L_i^2]
    =
    \left(
        1+\frac{\gamma^2}{k-1}
    \right)^n,
\]
and, for $i\neq j$,
\[
    \mathbb E_{U^{\otimes n}}[L_iL_j]
    =
    \left(
        1-\frac{\gamma^2}{(k-1)^2}
    \right)^n.
\]
\end{lemma}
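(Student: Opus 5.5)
This is a direct computation with $a=(1+\gamma)/k$ and $b=(1-\gamma/(k-1))/k$; no earlier machinery is needed beyond Lemma~\ref{lem:second-moment-expansion}. I would proceed in three steps.

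\emph{Step 1: rewrite the atoms.} I would write both atoms as perturbations of $1/k$:
\[
a=\frac1k+\frac{\gamma}{k},\qquad b=\frac1k-\frac{\gamma}{k(k-1)}.
\]
Write $u:=\gamma/k$ and $v:=\gamma/(k(k-1))$. The normalization $a+(k-1)b=1$, checked in Lemma~\ref{lem:hard-family}, is equivalent to $u=(k-1)v$. This identity is what makes all the cross terms collapse.

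\emph{Step 2: the diagonal term.} Here $\sum_x Q_i(x)^2=a^2+(k-1)b^2$. Expanding gives
\[
a^2+(k-1)b^2=\frac{1}{k^2}(1+k-1)+\frac{2}{k}\bigl(u-(k-1)v\bigr)+u^2+(k-1)v^2 .
\]
The middle term vanishes because $u=(k-1)v$. The remaining terms give
\[
u^2+(k-1)v^2=\frac{\gamma^2}{k^2}\Bigl(1+\frac{1}{k-1}\Bigr)=\frac{\gamma^2}{k(k-1)}.
\]
Multiplying by $k$ yields $1+\gamma^2/(k-1)$.

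\emph{Step 3: the off-diagonal term.} For $i\neq j$, the coordinates $i$ and $j$ each contribute $ab$, and the other $k-2$ coordinates contribute $b^2$ each. So
\[
\sum_x Q_i(x)Q_j(x)=2ab+(k-2)b^2 .
\]
Expanding in $u,v$:
\[
2ab=\frac{2}{k^2}+\frac{2}{k}(u-v)-2uv,\qquad (k-2)b^2=\frac{k-2}{k^2}-\frac{2(k-2)}{k}v+(k-2)v^2 .
\]
The constant terms sum to $1/k$. The linear terms sum to $\frac{2}{k}\bigl(u-(k-1)v\bigr)=0$. The quadratic terms give
\[
-2uv+(k-2)v^2=v^2\bigl(-2(k-1)+k-2\bigr)=-k\,v^2=-\frac{\gamma^2}{k(k-1)^2}.
\]
Multiplying by $k$ gives $1-\gamma^2/(k-1)^2$.

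\emph{Consequence.} The stated moment identities follow immediately by substituting these two values into the formula $\mathbb E_{U^{\otimes n}}[L_iL_j]=\bigl(k\sum_x Q_i(x)Q_j(x)\bigr)^n$ from Lemma~\ref{lem:second-moment-expansion}.

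The only real obstacle is bookkeeping: the linear terms must be seen to cancel via $u=(k-1)v$, and the quadratic coefficients must be simplified correctly.
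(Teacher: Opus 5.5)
Your proof is correct and follows essentially the same route as the paper. Like the paper, you compute $a^2+(k-1)b^2$ and $2ab+(k-2)b^2$ explicitly and then substitute into Lemma~\ref{lem:second-moment-expansion}; your centering of the atoms at $1/k$, where the normalization identity $u=(k-1)v$ cancels the linear terms, is a tidier way of organizing the paper's direct expansion in $\gamma$ and $c=1-\gamma/(k-1)$.
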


\begin{proof}
Each $Q_i$ has one coordinate equal to $a$ and the remaining $k-1$
coordinates equal to $b$. Hence,
\[
    \sum_{x=1}^kQ_i(x)^2
    =
    a^2+(k-1)b^2.
\]
Now,
\[
    a^2=\frac{(1+\gamma)^2}{k^2},
    \qquad
    (k-1)b^2
    =
    (k-1)
    \frac{(1-\gamma/(k-1))^2}{k^2}.
\]
Expanding,
\[
    (k-1)
    \left(
        1-\frac{\gamma}{k-1}
    \right)^2
    =
    (k-1)-2\gamma+\frac{\gamma^2}{k-1}.
\]
Therefore,
\[
\begin{aligned}
    a^2+(k-1)b^2
    &=
    \frac{
        (1+2\gamma+\gamma^2)
        +(k-1)-2\gamma+\gamma^2/(k-1)
    }{k^2} \\
    &=
    \frac{
        k+\gamma^2+\gamma^2/(k-1)
    }{k^2} \\
    &=
    \frac1k
    \left(
        1+\frac{\gamma^2}{k-1}
    \right).
\end{aligned}
\]
Thus,
\[
    k\sum_{x=1}^kQ_i(x)^2
    =
    1+\frac{\gamma^2}{k-1}.
\]

Now let $i\neq j$. At coordinate $i$,
$(Q_i(i),Q_j(i))=(a,b)$; at coordinate $j$,
$(Q_i(j),Q_j(j))=(b,a)$; and at every remaining coordinate both
probabilities equal $b$. Hence,
\[
    \sum_{x=1}^kQ_i(x)Q_j(x)
    =
    2ab+(k-2)b^2.
\]
Let
\[
    c:=1-\frac{\gamma}{k-1},
\]
so that $a=(1+\gamma)/k$ and $b=c/k$. Then
\[
    k\bigl(2ab+(k-2)b^2\bigr)
    =
    \frac{
        2(1+\gamma)c+(k-2)c^2
    }{k}.
\]
Since
\[
    c^2
    =
    1-\frac{2\gamma}{k-1}
    +\frac{\gamma^2}{(k-1)^2},
\]
we have
\[
\begin{aligned}
    2(1+\gamma)c+(k-2)c^2
    &=
    2(1+\gamma)
    \left(
        1-\frac{\gamma}{k-1}
    \right) \\
    &\quad+
    (k-2)
    \left(
        1-\frac{2\gamma}{k-1}
        +\frac{\gamma^2}{(k-1)^2}
    \right) \\
    &=
    2+2\gamma
    -\frac{2\gamma(1+\gamma)}{k-1}
    +k-2
    -\frac{2\gamma(k-2)}{k-1}
    +\frac{(k-2)\gamma^2}{(k-1)^2} \\
    &=
    k
    +
    \left(
        2\gamma
        -\frac{2\gamma(k-1+\gamma)}{k-1}
    \right)
    +
    \frac{(k-2)\gamma^2}{(k-1)^2} \\
    &=
    k
    -\frac{2\gamma^2}{k-1}
    +\frac{(k-2)\gamma^2}{(k-1)^2}.
\end{aligned}
\]
Combining the last two terms gives
\[
    -\frac{2\gamma^2}{k-1}
    +\frac{(k-2)\gamma^2}{(k-1)^2}
    =
    -\frac{k\gamma^2}{(k-1)^2}.
\]
Therefore,
\[
    2(1+\gamma)c+(k-2)c^2
    =
    k-\frac{k\gamma^2}{(k-1)^2}.
\]
Dividing by $k$ yields
\[
    k\sum_{x=1}^kQ_i(x)Q_j(x)
    =
    1-\frac{\gamma^2}{(k-1)^2}.
\]

The formulas for
$\mathbb E_{U^{\otimes n}}[L_i^2]$ and
$\mathbb E_{U^{\otimes n}}[L_iL_j]$ now follow from
Lemma~\ref{lem:second-moment-expansion}.
\end{proof}

The preceding calculation gives an exact expression for the
$\chi^2$-divergence.

\begin{lemma}[Exact $\chi^2$ formula]
\label{lem:chi2-exact}
We have
\[
    \mathbb E_{U^{\otimes n}}[L^2]
    =
    \frac1k
    \left(
        1+\frac{\gamma^2}{k-1}
    \right)^n
    +
    \frac{k-1}{k}
    \left(
        1-\frac{\gamma^2}{(k-1)^2}
    \right)^n.
\]
Consequently,
\[
\begin{aligned}
    \chi^2\left(
        \overline Q^{(n)}
        \middle\|
        U^{\otimes n}
    \right)
    &=
    \frac1k
    \left(
        1+\frac{\gamma^2}{k-1}
    \right)^n +
    \frac{k-1}{k}
    \left(
        1-\frac{\gamma^2}{(k-1)^2}
    \right)^n
    -1.
\end{aligned}
\]
\end{lemma}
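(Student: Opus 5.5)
The exact formula follows by substituting the two cases of Lemma~\ref{lem:diagonal-offdiagonal} into the double sum of Lemma~\ref{lem:second-moment-expansion}. The $\chi^2$ expression is then obtained from the identity of Lemma~\ref{lem:likelihood}.

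First, start from
\[
\mathbb E_{U^{\otimes n}}[L^2]=\frac1{k^2}\sum_{i=1}^k\sum_{j=1}^k \mathbb E_{U^{\otimes n}}[L_iL_j],
\]
and split the double sum into its diagonal part ($i=j$) and its off-diagonal part ($i\neq j$).

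There are exactly $k$ diagonal pairs. By Lemma~\ref{lem:diagonal-offdiagonal}, each contributes $(1+\gamma^2/(k-1))^n$. Their total is therefore $k(1+\gamma^2/(k-1))^n$.

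There are exactly $k(k-1)$ ordered off-diagonal pairs. By the same lemma, each contributes $(1-\gamma^2/(k-1)^2)^n$. This value is independent of the particular pair because of the symmetry of the family $\{Q_j\}$.

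Dividing both totals by $k^2$ gives the coefficients $1/k$ and $(k-1)/k$, which is the first displayed identity.

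For the second claim, Lemma~\ref{lem:likelihood} gives $\chi^2(\overline Q^{(n)}\|U^{\otimes n})=\mathbb E_{U^{\otimes n}}[L^2]-1$. Substituting the formula just obtained yields the stated expression.

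No step is a genuine obstacle here; all the computational content sits in Lemma~\ref{lem:diagonal-offdiagonal}. The only point requiring care is counting the ordered off-diagonal pairs correctly as $k(k-1)$ rather than $\binom{k}{2}$. As a sanity check, at $n=0$ the two coefficients $1/k+(k-1)/k$ sum to one, so the $\chi^2$ divergence vanishes, as it must.
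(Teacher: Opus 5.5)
Your proposal is correct and follows the paper's proof essentially verbatim: you split the double sum of Lemma~\ref{lem:second-moment-expansion} into $k$ diagonal and $k(k-1)$ ordered off-diagonal terms, substitute the values from Lemma~\ref{lem:diagonal-offdiagonal}, and divide by $k^2$. You then obtain the $\chi^2$ expression from the identity $\chi^2=\mathbb E_{U^{\otimes n}}[L^2]-1$ of Lemma~\ref{lem:likelihood}, just as the paper does.
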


\begin{proof}
By Lemma~\ref{lem:second-moment-expansion},
\[
    \mathbb E_{U^{\otimes n}}[L^2]
    =
    \frac1{k^2}
    \sum_{i=1}^k\sum_{j=1}^k
    \mathbb E_{U^{\otimes n}}[L_iL_j].
\]
There are $k$ diagonal terms and $k(k-1)$ off-diagonal terms.
Therefore, Lemma~\ref{lem:diagonal-offdiagonal} gives
\[
\begin{aligned}
    \mathbb E_{U^{\otimes n}}[L^2]
    &=
    \frac{k}{k^2}
    \left(
        1+\frac{\gamma^2}{k-1}
    \right)^n +
    \frac{k(k-1)}{k^2}
    \left(
        1-\frac{\gamma^2}{(k-1)^2}
    \right)^n \\&=
    \frac1k
    \left(
        1+\frac{\gamma^2}{k-1}
    \right)^n
    +
    \frac{k-1}{k}
    \left(
        1-\frac{\gamma^2}{(k-1)^2}
    \right)^n.
\end{aligned}
\]
The claimed $\chi^2$ formula follows from
Lemma~\ref{lem:likelihood}.
\end{proof}

We now show that this divergence vanishes below a sufficiently small
constant multiple of \(k\log k\).

\begin{lemma}[Small $\chi^2$ below the $k\log k$ scale]
\label{lem:small-chi2}
There exists a constant $c_\gamma>0$, depending only on $\gamma$, such
that, if
\[
    n\le c_\gamma k\log k,
\]
then
\[
    \chi^2\left(
        \overline Q^{(n)}
        \middle\|
        U^{\otimes n}
    \right)
    \longrightarrow0
    \qquad\text{as }k\to\infty.
\]
Consequently,
\[
    \operatorname{TV}
    \left(
        U^{\otimes n},
        \overline Q^{(n)}
    \right)
    \longrightarrow0.
\]
\end{lemma}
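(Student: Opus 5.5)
We start from the exact formula of Lemma~\ref{lem:chi2-exact} and bound its two terms separately. The off-diagonal term is harmless: since $1-\gamma^2/(k-1)^2\le 1$, we have $\frac{k-1}{k}\bigl(1-\gamma^2/(k-1)^2\bigr)^n\le \frac{k-1}{k}=1-\frac1k$. Hence
\[
   \chi^2\left(\overline Q^{(n)}\middle\|U^{\otimes n}\right)
   \le
   \frac1k\left(1+\frac{\gamma^2}{k-1}\right)^n-\frac1k
   \le
   \frac1k\exp\left(\frac{\gamma^2 n}{k-1}\right),
\]
where we use $1+x\le e^x$. Note that $\chi^2\ge0$ automatically, so it suffices to show that this upper bound tends to zero.

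The diagonal term carries the whole argument. Substituting $n\le c_\gamma k\log k$ gives
\[
   \frac1k\exp\left(\frac{\gamma^2 c_\gamma k\log k}{k-1}\right)
   =
   k^{-1+\gamma^2c_\gamma k/(k-1)}.
\]
For $k\ge2$ we have $k/(k-1)\le2$. So if we choose, say, $c_\gamma:=1/(4\gamma^2)$, the exponent is at most $-1+\tfrac12=-\tfrac12$, and the bound is at most $k^{-1/2}\to0$.

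This step is the real content: the $1/k$ weight of the diagonal terms is what absorbs the exponential growth $e^{\gamma^2 n/k}$ up to $n\asymp k\log k$. It reflects the idea that the hidden location costs a factor of $k$ in the second moment. The step is elementary once the exact formula is available, so I expect no real obstacle. The only care needed is to keep the $k/(k-1)$ factor and to note that any $c_\gamma<1/\gamma^2$ works asymptotically.

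Finally, the total-variation statement follows directly from Lemma~\ref{lem:likelihood}:
\[
   \operatorname{TV}\bigl(U^{\otimes n},\overline Q^{(n)}\bigr)\le\tfrac12\sqrt{\chi^2}\le\tfrac12k^{-1/4}\to0.
\]
Combined with \eqref{eq:TVD_1_3}, this shows that $n>c_\gamma k\log k$ for all large $k$, which proves Theorem~\ref{thm:minentropy-lower-bound}.
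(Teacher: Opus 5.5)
Your proposal is correct and follows the paper's proof essentially step for step. Like the paper, you bound the off-diagonal term by $(k-1)/k$ and the diagonal term by $k^{-1}\exp(\gamma^2 n/(k-1))$, take $c_\gamma=1/(4\gamma^2)$ to get $\chi^2\le k^{-1/2}$, and pass to total variation via Lemma~\ref{lem:likelihood}.
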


\begin{proof}
By Lemma~\ref{lem:chi2-exact},
\[
\begin{aligned}
    \chi^2\left(
        \overline Q^{(n)}
        \middle\|
        U^{\otimes n}
    \right)
    &=
    \frac1k
    \left(
        1+\frac{\gamma^2}{k-1}
    \right)^n +
    \frac{k-1}{k}
    \left(
        1-\frac{\gamma^2}{(k-1)^2}
    \right)^n
    -1.
\end{aligned}
\]
Write
\[
    A_k
    :=
    \frac1k
    \left(
        1+\frac{\gamma^2}{k-1}
    \right)^n
\]
and
\[
    B_k
    :=
    \frac{k-1}{k}
    \left(
        1-\frac{\gamma^2}{(k-1)^2}
    \right)^n.
\]
Then
\[
    \chi^2\left(
        \overline Q^{(n)}
        \middle\|
        U^{\otimes n}
    \right)
    =
    A_k+B_k-1.
\]

We first show that $A_k\to0$ for a suitable choice of $c_\gamma$.
Using $1+t\le e^t$ for $t\ge0$ (see Fact~\ref{fact:elementary-log-exp}), we obtain
\[
    A_k
    \le
    \frac1k
    \exp\left(
        \frac{n\gamma^2}{k-1}
    \right).
\]
If $n\le c_\gamma k\log k$, then, for all sufficiently large $k$,
\[
    \frac{n\gamma^2}{k-1}
    \le
    c_\gamma\gamma^2
    \frac{k}{k-1}
    \log k
    \le
    2c_\gamma\gamma^2\log k.
\]
Hence,
\[
\begin{aligned}
    A_k
    &\le
    \frac1k
    \exp\left(
        2c_\gamma\gamma^2\log k
    \right) \\
    &=
    k^{-1+2c_\gamma\gamma^2}.
\end{aligned}
\]
Choose $c_\gamma>0$ so that
\[
    2c_\gamma\gamma^2<1.
\]
For example, one may take
\[
    c_\gamma=\frac{1}{4\gamma^2}.
\]
Then
\[
    A_k
    \le
    k^{-1+2c_\gamma\gamma^2}
    \longrightarrow0.
\]

Next, since
\[
    0
    \le
    \frac{\gamma^2}{(k-1)^2}
    \le
    1
\]
for all sufficiently large $k$,
\[
    0
    \le
    \left(
        1-\frac{\gamma^2}{(k-1)^2}
    \right)^n
    \le
    1.
\]
Therefore,
\[
    B_k\le\frac{k-1}{k}.
\]
It follows that
\[
\begin{aligned}
    A_k+B_k-1
    &\le
    A_k+\frac{k-1}{k}-1 \\
    &=
    A_k-\frac1k \\
    &\le
    A_k.
\end{aligned}
\]
On the other hand, by definition of $\chi^2$-divergence,
\[
    \chi^2\left(
        \overline Q^{(n)}
        \middle\|
        U^{\otimes n}
    \right)
    \ge0.
\]
Thus,
\[
    0
    \le
    \chi^2\left(
        \overline Q^{(n)}
        \middle\|
        U^{\otimes n}
    \right)
    =
    A_k+B_k-1
    \le
    A_k.
\]
Since $A_k\to0$, the squeeze theorem yields
\[
    \chi^2\left(
        \overline Q^{(n)}
        \middle\|
        U^{\otimes n}
    \right)
    \longrightarrow0.
\]

Finally, Lemma~\ref{lem:likelihood} gives
\[
    \operatorname{TV}
    \left(
        U^{\otimes n},
        \overline Q^{(n)}
    \right)
    \le
    \frac12
    \sqrt{
        \chi^2\left(
            \overline Q^{(n)}
            \middle\|
            U^{\otimes n}
        \right)
    },
\]
and hence
\[
    \operatorname{TV}
    \left(
        U^{\otimes n},
        \overline Q^{(n)}
    \right)
    \longrightarrow0.
\]
\end{proof}

\subsection{Final proof of Theorem \ref{thm:minentropy-lower-bound}}

\begin{proof}[Proof of Theorem~\ref{thm:minentropy-lower-bound}]
Suppose, toward a contradiction, that an estimator $\widehat H$, with \(0<\varepsilon<\delta_\gamma/2\),
satisfies
\[
    \mathbb P_{p^{\otimes n}}
    \left(
        |\widehat H-H_\infty(p)|
        \le
        \varepsilon
    \right)
    \ge
    \frac23
\]
for every distribution $p \in \Delta_k$, while
\[
    n\le c_\gamma k\log k
\]
for infinitely many $k$, where $c_\gamma$ is the constant from
Lemma~\ref{lem:small-chi2}.

We have then shown in (\ref{eq:TVD_1_3}) by Le Cam's reduction that
\[
    \operatorname{TV}
    \left(
        U^{\otimes n},
        \overline Q^{(n)}
    \right)
    \ge
    \frac13.
\]
On the other hand, Lemma~\ref{lem:small-chi2} gives
\[
    \operatorname{TV}
    \left(
        U^{\otimes n},
        \overline Q^{(n)}
    \right)
    \longrightarrow0
    \qquad\text{as }k\to\infty,
\]
which is a contradiction. Therefore, for all sufficiently large $k$, no such estimator can exist
when $n\le c_\gamma k\log k$.
\end{proof}

\section{Lower Bound for Estimating R\'enyi Entropy}
\label{sec:Renyi_Entropy_LB}

In this section, we prove a fixed-accuracy minimax lower bound for
estimating the R\'enyi entropy of order $\alpha$ over distributions in
$\Delta_k$. The proof uses a hidden-heavy-coordinate prior,
Poissonization, likelihood ratios with respect to a product-Poisson
baseline, truncation of all occupancies above $\lfloor\alpha\rfloor$,
and a truncated second-moment calculation. For every fixed accuracy
$0<\epsilon<1$ and constant $c_0>0$, the argument gives
\[
   n
   \ge
   c_1\,\alpha k^{1-1/\alpha},
\]
where $c_1=c_1(c_0)>0$, uniformly over all real orders
\[
   1.001\le\alpha\le c_0\log k,
\]
after adjusting the absolute constants. Equivalently, since $\epsilon$
is fixed, this may be written as
\[
   n
   \ge
   C_\epsilon\,
   \frac{\alpha k^{1-1/\alpha}}{\epsilon^2},
\]
where $C_\epsilon>0$ does not depend on $\alpha$ or $k$. The constants
are not optimized.

\subsection{Statement}

For $\alpha>1$, recall that the R\'enyi entropy of a distribution
$p=(p_1,\ldots,p_k)\in\Delta_k$ and its $\alpha$-power sum are
\[
   H_\alpha(p)
   =
   \frac{1}{1-\alpha}\log P_\alpha(p),
   \qquad
   P_\alpha(p)
   =
   \sum_{i=1}^k p_i^\alpha.
\]
All logarithms are natural.

\begin{theorem}[Fixed-accuracy lower bound]
\label{thm:main_2}
Fix any accuracy $0<\epsilon<1$ and any constant $c_0>0$. There exist
constants
\[
   c_1=c_1(c_0)>0,
   \qquad
   K_0=K_0(c_0)<\infty,
\]
such that, for every $k\ge K_0$, every real order
\[
   1.001\le\alpha\le c_0\log k,
\]
and every estimator $\widehat H$ satisfying
\[
   \sup_{p\in\Delta_k}
   \Prob_p\!\left(
      |\widehat H-H_\alpha(p)|>\epsilon
   \right)
   \le
   \frac13,
\]
the number of samples $n$ must satisfy
\[
   n
   \ge
   c_1\,\alpha k^{1-1/\alpha}.
\]
Equivalently, because $\epsilon\in(0,1)$ and $c_0>0$ are fixed, there
is a constant
\[
   C_\epsilon=c_1\epsilon^2>0,
\]
independent of $\alpha$ and $k$, such that
\[
   n
   \ge
   C_\epsilon
   \frac{\alpha k^{1-1/\alpha}}{\epsilon^2}.
\]
\end{theorem}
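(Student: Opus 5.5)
We follow the roadmap of Section~\ref{subsec:technical-novelty}. Set $a:=k^{-1+1/\alpha}$, $J\sim\Unif([k])$, and let $p^{(\theta,J)}$ have mass $\theta a$ at $J$ and $(1-\theta a)/(k-1)$ elsewhere. The null prior $\Pi_0$ uses $\theta=1$ and the alternative $\Pi_1$ uses $\theta=B$, where $B>1$ is a constant depending only on $\epsilon$ (absorbed into $c_1$).

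\emph{Step 1: separation.} Since $P_\alpha(p^{(\theta,J)})=k^{1-\alpha}[\theta^\alpha+(k/(k-1))^{\alpha-1}(1-\theta a)^\alpha]$ and $(k/(k-1))^{\alpha-1}\le e^{2c_0\log k/k}\to1$, the bracket is $\theta^\alpha+1+o(1)$. The entropy gap is therefore
\[
\frac{1}{\alpha-1}\log\frac{B^\alpha+1}{2}+o(1).
\]
For $\alpha\ge1.001$ this is bounded below by a positive function of $B$ which is uniform in $\alpha$: for large $\alpha$ it is about $\log B$, and near $\alpha=1$ it is about $\tfrac12\log B$. Choosing $B$ large makes the gap exceed $2\epsilon$. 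Lemma~\ref{lem:Le_Cam} then reduces the problem to showing $\TV(\mathsf P_0,\mathsf P_1)<1/3$ whenever $n\le c_1\alpha k^{1-1/\alpha}$.

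\emph{Step 2: Poissonization and idealization.} Take $n'\sim\Poi(2n)$. A standard de-Poissonization argument shows that an $n$-sample estimator yields a Poisson-sample estimator with failure probability at most $1/3+e^{-\Omega(n)}$, so it suffices to work with Poissonized counts. Under $p^{(\theta,j)}$ the counts are independent with $N_j\sim\Poi(\theta\tau')$ and $N_i\sim\Poi(\lambda'(1-\theta a)k/(k-1))$, where $\tau'=2n a$ and $\lambda'=2n/k$. We replace the light means by exactly $\lambda'$, and compare the true and idealized product laws coordinatewise via Hellinger tensorization. The per-coordinate mean perturbation is $\lambda'\cdot O(a)$, so the squared Hellinger distance is $O(\lambda' a^2)$ per coordinate, $O(n a^2)$ in total. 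This equals $O(\alpha k^{-1+1/\alpha})=o(1)$, using $k^{1/\alpha}\le e^{1/c_0}$ only loosely. So the idealized mixtures $\mathsf Q_\theta$ with likelihood ratio $L_\theta=k^{-1}\sum_j R_{\theta\tau}(N_j)$ against $\Poi(\lambda)^{\otimes k}$ suffice, where $R_h(x)=e^{-(h-\lambda)}(h/\lambda)^x$.

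\emph{Step 3: truncated second moment.} This is the main obstacle. Let $E=\{\max_jN_j\le m\}$ with $m=\lfloor\alpha\rfloor$. We bound
\[
\TV(\mathsf Q_1,\mathsf Q_B)\le \mathsf Q_1(E^c)+\mathsf Q_B(E^c)+\tfrac12\E_0|(L_B-L_1)\one_E|.
\]
For the truncation events, the light coordinates have $\Pr(\Poi(\lambda)>m)\le\lambda^{m+1}/(m+1)!$. Multiplying by $k$ and using $\lambda=\tau k^{-1/\alpha}$ with $m+1>\alpha$ gives $k\lambda^{m+1}/(m+1)!\le \tau^{m+1}/(m+1)!$. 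With $\tau\le c\alpha$ and $c$ small this is $(ec)^{m+1}$, which is small. The heavy coordinate needs $\Pr(\Poi(B\tau)>m)$ small; since $B\tau\le Bc\alpha\ll m$, a Poisson Chernoff bound handles it. This is where the cutoff $m\approx\alpha$ and the scale $\tau\asymp\alpha$ must match, and where $B c$ small is needed.

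For the main term, apply Cauchy--Schwarz to reduce to
\[
\E_0\Big[\big(k^{-1}\textstyle\sum_j D(N_j)\big)^2\Big],\qquad D(x)=(R_{B\tau}(x)-R_\tau(x))\one\{x\le m\},
\]
where the truncation is taken coordinatewise. Expanding the square splits it into a diagonal part $k^{-1}\E_0D^2$ and an off-diagonal part $(\E_0D)^2$. Since $\E_0R_h=1$, we have $\E_0 D=-\E_0[(R_{B\tau}-R_\tau)\one\{N>m\}]$, which is tiny by the tail bounds above. For the diagonal part,
\[
\E_0R_h^2\one\{N\le m\}=e^{\lambda-2h}\sum_{x\le m}\frac{(h^2/\lambda)^x}{x!}.
\]
With $h^2/\lambda=\theta^2\tau k^{1/\alpha}$, the factor $k^{-1}$ against terms $(\theta^2\tau)^xk^{x/\alpha}/x!$ for $x\le m\le\alpha$ gives at most $(\theta^2\tau)^x/x!$, using $k^{x/\alpha-1}\le1$. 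The sum is then bounded by $\sum_{x\le m}(B^2\tau)^x/x!\cdot k^{x/\alpha-1}$, whose dominant term $x=m$ is about $(B^2c\alpha)^{m}/m!\le(eB^2c)^m$, and the terms with $x<m$ carry extra negative powers of $k$. Choosing $c=c(B,c_0)$ small makes everything $o(1)+$ small. The delicate point is the non-integer case $\alpha\in(m,m+1)$, where $k^{m/\alpha-1}$ is not $1$; there one uses $k^{-1/\alpha}\ge e^{-1/c_0}$ and the $x=m$ term, which is why $c_1$ depends on $c_0$.

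Finally, combining the bounds gives $\TV<1/3$ for $n\le c_1\alpha k^{1-1/\alpha}$ and $k\ge K_0(c_0)$. The stated $C_\epsilon$ form is then a rewriting of this bound.
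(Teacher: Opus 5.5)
Your proposal is correct and follows essentially the same route as the paper: a hidden heavy coordinate at scale $a=k^{-1+1/\alpha}$, Poissonization with idealized light means compared via Hellinger tensorization, truncation at $m=\lfloor\alpha\rfloor$, and a diagonal/off-diagonal truncated second-moment bound with the constant chosen small in terms of $B$ and $c_0$. Since $\epsilon<1$, a single $B$ (the paper uses $e^4$) works, so $c_1$ need not depend on $\epsilon$.

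Two small corrections are needed. First, with mean-$2n$ Poissonization the failure probability can grow by $e^{-2}$ at $n=1$, so you need $\TV$ below $1/3-2e^{-2}$ rather than $1/3$. Your bounds deliver this, because every error term shrinks with $c$ and $k$; the paper instead uses mean $20s$ and proves $\TV\le 1/25$. Second, the dependence of $c$ on $c_0$ does not come from the $x=m$ term or from non-integer $\alpha$, which are harmless for any small $c$. It comes from the small-$x$ diagonal terms, where the growth $e^{\lambda+B^2\tau}\le k^{(1+B^2)cc_0}$ must be beaten by $k^{x/\alpha-1}\le k^{-1/2}$.
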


\begin{remark}
The constants below are not optimized. Once $c_0$ is fixed, one can
choose $c_1(c_0)>0$ uniformly over
$1.001\le\alpha\le c_0\log k$. The construction does not depend on
$\epsilon$ except through the final testing separation, and the choice
$B=e^4$ gives an entropy gap larger than $2\epsilon$ for every
$0<\epsilon<1$.
\end{remark}

\subsection{Hard priors}

For $\alpha\ge1.001$, set
\[
   a
   =
   k^{-1+1/\alpha}
   =
   k^{-(\alpha-1)/\alpha},
\]
and fix
\[
   B=e^4.
\]
For $\theta\in\{1,B\}$ and $j\in[k]$, define a distribution
$p^{(\theta,j)}\in\Delta_k$ by
\[
   p^{(\theta,j)}_j
   =
   \theta a,
   \qquad
   p^{(\theta,j)}_i
   =
   \frac{1-\theta a}{k-1},
   \quad i\ne j.
\]
For all sufficiently large $k$, these are valid distributions uniformly
over $1.001\le\alpha\le c_0\log k$. Indeed,
\[
   a
   =
   k^{-(\alpha-1)/\alpha}
   \le
   k^{-1/1001},
\]
and hence $Ba<1$ for all sufficiently large $k$.

Define two priors on $\Delta_k$ by
\[
   \Pi_0:
   \qquad
   J\sim\Unif([k]),
   \qquad
   p=p^{(1,J)},
\]
and
\[
   \Pi_1:
   \qquad
   J\sim\Unif([k]),
   \qquad
   p=p^{(B,J)}.
\]
The index $J$ is the hidden heavy coordinate.

\subsection{Entropy separation}

We first show that all distributions in $\Pi_0$ have one R\'enyi
entropy value, all distributions in $\Pi_1$ have another, and the two
values differ by more than $2\epsilon$.

\begin{lemma}[Uniform elementary estimates]
\label{lem:elementary-estimates}
For every fixed $c_0>0$, uniformly over all real orders
$1.001\le\alpha\le c_0\log k$,
\[
   a
   =
   k^{-1+1/\alpha}
   \le
   k^{-1/1001},
   \qquad
   \alpha a\to0,
   \qquad
   \frac{\alpha}{k}\to0
\]
as $k\to\infty$. Consequently, for each fixed
$\theta\in\{1,B\}$,
\[
   \left(
      \frac{k}{k-1}
   \right)^{\alpha-1}
   (1-\theta a)^\alpha
   =
   1+o(1)
\]
uniformly over $1.001\le\alpha\le c_0\log k$.
\end{lemma}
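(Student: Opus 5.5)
The plan is to prove the three elementary limits directly from the constraint $1.001\le\alpha\le c_0\log k$, and then deduce the product estimate by taking logarithms.

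For the bound on $a$, note that $(\alpha-1)/\alpha=1-1/\alpha$ is increasing in $\alpha$. On the range $\alpha\ge1.001$ it is therefore at least $1-1/1.001=1/1001$, which gives $a=k^{-(\alpha-1)/\alpha}\le k^{-1/1001}$. Then $\alpha a\le c_0(\log k)\,k^{-1/1001}\to0$, and $\alpha/k\le c_0\log k/k\to0$. Both bounds are uniform because they depend only on the endpoints of the range.

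For the consequence, write
\[
\log\Bigl[\Bigl(\tfrac{k}{k-1}\Bigr)^{\alpha-1}(1-\theta a)^\alpha\Bigr]
=-(\alpha-1)\log\Bigl(1-\tfrac1k\Bigr)+\alpha\log(1-\theta a).
\]
For $k\ge2$ we have $0\le-\log(1-1/k)\le 2/k$, so the first term lies in $[0,2\alpha/k]$, which tends to $0$ uniformly. For the second term, $\theta a\le Ba\le Bk^{-1/1001}\le1/2$ for large $k$. Using $|\log(1-u)|\le2u$ for $u\in[0,1/2]$ (Fact~\ref{fact:elementary-log-exp}), the second term has absolute value at most $2B\alpha a$, which tends to $0$ uniformly by the previous step.

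Hence the logarithm is $o(1)$ uniformly, and exponentiating gives $1+o(1)$, since $e^x=1+O(x)$ for $|x|\le1$. There is no real obstacle here. The only point needing care is uniformity in $\alpha$, and this holds because every bound above depends on $\alpha$ only through $\alpha\le c_0\log k$ and $\alpha\ge1.001$.
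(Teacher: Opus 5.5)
Your proposal is correct and follows essentially the same route as the paper. The bound $(\alpha-1)/\alpha\ge 1/1001$ gives $a\le k^{-1/1001}$, hence $\alpha a\to0$ and $\alpha/k\to0$ uniformly, and the product estimate follows by taking logarithms and bounding the two terms by $O(\alpha/k)$ and $O(\alpha a)$; your explicit inequality $|\log(1-u)|\le 2u$ on $[0,1/2]$ is simply a more quantitative version of the paper's expansion $\alpha\log(1-\theta a)=-\theta\alpha a+O(\alpha a^2)$.
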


\begin{proof}
Because $\alpha\ge1.001=1001/1000$,
\[
   \frac{\alpha-1}{\alpha}
   \ge
   \frac{1}{1001}.
\]
Therefore,
\[
   a
   =
   k^{-(\alpha-1)/\alpha}
   \le
   k^{-1/1001}.
\]
Since $\alpha\le c_0\log k$,
\[
   \alpha a
   \le
   c_0(\log k)k^{-1/1001}
   \to0,
   \qquad
   \frac{\alpha}{k}
   \le
   c_0\frac{\log k}{k}
   \to0.
\]
Moreover,
\[
\begin{aligned}
   \log
   \left(
      \frac{k}{k-1}
   \right)^{\alpha-1}
   &=
   (\alpha-1)
   \log\left(
      1+\frac{1}{k-1}
   \right) \\
   &=
   O\left(
      \frac{\alpha}{k}
   \right)
   =
   o(1).
\end{aligned}
\]
Since $\theta a=o(1)$,
\[
\begin{aligned}
   \log(1-\theta a)^\alpha
   &=
   \alpha\log(1-\theta a) \\
   &=
   -\theta\alpha a
   +O(\alpha a^2) \\
   &=
   o(1).
\end{aligned}
\]
Exponentiating proves the claim.
\end{proof}

\begin{lemma}[Entropy separation]
\label{lem:entropy-separation}
For all sufficiently large $k$, uniformly over
$1.001\le\alpha\le c_0\log k$ and all $j\in[k]$,
\[
   \left|
      H_\alpha(p^{(B,j)})
      -
      H_\alpha(p^{(1,j)})
   \right|
   >
   2\epsilon.
\]
\end{lemma}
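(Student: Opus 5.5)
We will compute the two power sums explicitly and compare them. For $\theta\in\{1,B\}$, the distribution $p^{(\theta,j)}$ has one atom of size $\theta a$ and $k-1$ atoms of size $(1-\theta a)/(k-1)$. Since $a^\alpha=k^{1-\alpha}$, this gives
\[
   P_\alpha(p^{(\theta,j)})
   =
   \theta^\alpha a^\alpha+(k-1)\Bigl(\frac{1-\theta a}{k-1}\Bigr)^\alpha
   =
   k^{1-\alpha}\Bigl[\theta^\alpha+\Bigl(\frac{k}{k-1}\Bigr)^{\alpha-1}(1-\theta a)^\alpha\Bigr],
\]
which is independent of $j$. By Lemma~\ref{lem:elementary-estimates}, the second bracketed term equals $1+o(1)$ uniformly over $1.001\le\alpha\le c_0\log k$, for each fixed $\theta$. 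Writing $H_\alpha=(1-\alpha)^{-1}\log P_\alpha$, the factor $k^{1-\alpha}$ cancels in the difference. We obtain
\[
   H_\alpha(p^{(1,j)})-H_\alpha(p^{(B,j)})
   =
   \frac{1}{\alpha-1}\log\frac{B^\alpha+1+o(1)}{2+o(1)}.
\]

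The main step is a lower bound on this quantity that is uniform in $\alpha$, including $\alpha$ close to $1.001$. There, the prefactor $1/(\alpha-1)$ is large, which only helps. We will use $B^\alpha+1+o(1)\ge B^\alpha$ and $2+o(1)\le 2+\eta$ for a small fixed $\eta>0$ once $k$ is large. With $B=e^4$, the difference is therefore at least
\[
   \frac{4\alpha-\log(2+\eta)}{\alpha-1}
   \ge
   4-\frac{\log(2+\eta)}{\alpha-1}+\frac{4}{\alpha-1}
   >
   4,
\]
provided $\log(2+\eta)<4$. This inequality holds trivially, since $4\alpha=4(\alpha-1)+4$.

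The difference therefore exceeds $4>2>2\epsilon$ for every $\epsilon\in(0,1)$. Its sign is positive, so the absolute value causes no difficulty.

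The only delicate point is making the $o(1)$ term uniform over the whole range of $\alpha$, and Lemma~\ref{lem:elementary-estimates} already provides this. The rest is bookkeeping. We fix $\eta=1$, say, and choose $k$ large enough that the $o(1)$ term has absolute value at most $1$ for both $\theta\in\{1,B\}$. We also need the validity condition $Ba<1$, which was established when the hard priors were defined.
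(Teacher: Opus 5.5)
Your proof is correct and follows essentially the same route as the paper. Both arguments use the explicit power-sum formula, invoke Lemma~\ref{lem:elementary-estimates} to keep the bracketed correction near $1$, and derive $P_\alpha(p^{(B,j)})/P_\alpha(p^{(1,j)})\ge B^\alpha/3$; your choice $\eta=1$ reproduces the paper's $\log 3$, and the conclusion $(4\alpha-\log 3)/(\alpha-1)>4>2\epsilon$ is identical.
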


\begin{proof}
For $\theta\in\{1,B\}$,
\begin{align*}
   P_\alpha(p^{(\theta,j)})
   &=
   (\theta a)^\alpha
   +(k-1)
   \left(
      \frac{1-\theta a}{k-1}
   \right)^\alpha \\
   &=
   \theta^\alpha k^{1-\alpha}
   +(k-1)^{1-\alpha}(1-\theta a)^\alpha \\
   &=
   k^{1-\alpha}
   \left[
      \theta^\alpha
      +
      \left(
         \frac{k}{k-1}
      \right)^{\alpha-1}
      (1-\theta a)^\alpha
   \right].
\end{align*}
By Lemma~\ref{lem:elementary-estimates}, for all sufficiently large
$k$,
\[
   \frac12
   \le
   \left(
      \frac{k}{k-1}
   \right)^{\alpha-1}
   (1-\theta a)^\alpha
   \le
   2
\]
uniformly over the allowed range of $\alpha$. Hence,
\[
   \frac{
      P_\alpha(p^{(B,j)})
   }{
      P_\alpha(p^{(1,j)})
   }
   \ge
   \frac{B^\alpha}{3}.
\]
Since $B=e^4$,
\[
   \log
   \frac{
      P_\alpha(p^{(B,j)})
   }{
      P_\alpha(p^{(1,j)})
   }
   \ge
   4\alpha-\log3.
\]
Therefore,
\[
\begin{aligned}
   \left|
      H_\alpha(p^{(B,j)})
      -
      H_\alpha(p^{(1,j)})
   \right|
   &=
   \frac{1}{\alpha-1}
   \left|
      \log
      \frac{
         P_\alpha(p^{(B,j)})
      }{
         P_\alpha(p^{(1,j)})
      }
   \right| \\
   &\ge
   \frac{4\alpha-\log3}{\alpha-1} \\
   &>
   4.
\end{aligned}
\]
Because $0<\epsilon<1$, the entropy gap is larger than $2\epsilon$.
\end{proof}

\subsection{Poissonization, ideal experiment and total variation distance representation}

Consider the Poissonized sampling experiment: first draw
$N\sim\Poi(n)$ and then draw $N$ i.i.d.\ samples from the unknown
distribution. Conditional on $p$, the count vector
$(N_1,\ldots,N_k)$ has independent coordinates (see Fact~\ref{fact:poissonization})
\[
   N_i\sim\Poi(np_i).
\]

It is enough to work with count vectors rather than ordered sample
sequences. Indeed, conditional on the count vector, the ordered sequence
is uniformly distributed over all sequences having those counts, and
this conditional distribution does not depend on $p$. Therefore, the
total variation distance between the induced laws on ordered
Poissonized samples equals the total variation distance between the
induced laws on count vectors.

Let $M_\theta$ denote the law of the count vector under the prior
$\Pi_\theta$, where $\theta\in\{1,B\}$. Put
\[
   \tau
   :=
   \frac{n}{k^{1-1/\alpha}},
   \qquad
   \lambda
   :=
   \frac{n}{k}
   =
   \tau k^{-1/\alpha}.
\]
Conditional on $J=j$, under $M_\theta$ the heavy coordinate has mean
\[
   n\theta a
   =
   \theta\tau,
\]
while every light coordinate has mean
\[
   \mu_\theta
   :=
   n\frac{1-\theta a}{k-1}.
\]

It is convenient to compare $M_\theta$ with an idealized mixture law.
Define $I_\theta$ as follows: draw $J\sim\Unif([k])$ and, conditional
on $J=j$, draw independent counts with
\[
   N_j\sim\Poi(\theta\tau),
   \qquad
   N_i\sim\Poi(\lambda),
   \quad i\ne j.
\]
Thus, $I_1$ is the ideal version of the null prior and $I_B$ is the
ideal version of the alternative prior.

Let $\bbB$ be the product-Poisson baseline law under which
\[
   N_1,\ldots,N_k
   \overset{\mathrm{ind}}{\sim}
   \Poi(\lambda).
\]
For $h>0$, define the one-coordinate likelihood ratio
\[
   R_h(r)
   :=
   \frac{
      \Prob(\Poi(h)=r)
   }{
      \Prob(\Poi(\lambda)=r)
   }
   =
   e^{-(h-\lambda)}
   \left(
      \frac{h}{\lambda}
   \right)^r,
   \qquad
   r=0,1,2,\ldots.
\]
Then the likelihood ratio of $I_\theta$ with respect to $\bbB$ is
\[
   L_\theta
   =
   L_\theta(N_1,\ldots,N_k)
   =
   \frac{dI_\theta}{d\bbB}
   =
   \frac1k
   \sum_{j=1}^k
   R_{\theta\tau}(N_j).
\]
Indeed, conditional on $J=j$, only coordinate $j$ changes from
$\Poi(\lambda)$ to $\Poi(\theta\tau)$; averaging over the uniformly
random hidden coordinate gives the displayed expression. Therefore, using the density representation of total variation
distance (Scheffé's theorem), see
Tsybakov~\cite[Chapter~2, Section~2.4]{Tsybakov2009},
\begin{equation}
\label{eq:TVD_Like}
   \TV(I_B,I_1)
   =
   \frac12
   \E_{\bbB}|L_B-L_1|.
\end{equation}

The remainder of the proof bounds this expectation when $\tau$ is at
most a sufficiently small constant multiple of $\alpha$.

\subsection{Choice of constants and truncation}

Recall that $c_0>0$ is fixed and $B=e^4$. Set
\[
   C_*
   =
   C_*(c_0)
   :=
   \min\left\{
      1,
      \frac{1}{8c_0(1+B^2)},
      \frac{e^{-40}}{eB},
      \frac{e^{-40}}{2eB^2}
   \right\}.
\]
The penultimate term is redundant, but we retain it because the proof
explicitly uses both
\[
   C_*
   \le
   \frac{e^{-40}}{eB}
   \qquad\text{and}\qquad
   C_*
   \le
   \frac{e^{-40}}{2eB^2}.
\]

We shall prove that if
\[
   \tau
   =
   \frac{n}{k^{1-1/\alpha}}
   \le
   C_*\alpha,
\]
then
\[
   \TV(M_B,M_1)
   \le
   \frac14
\]
for all sufficiently large $k$, uniformly over
\[
   1.001\le\alpha\le c_0\log k.
\]

Throughout the indistinguishability argument, assume
\[
   \tau\le C_*\alpha
\]
and let
\[
   m:=\lfloor\alpha\rfloor.
\]
Then
\[
   m\ge1,
   \qquad
   m\le\alpha<m+1.
\]

\begin{lemma}[Poisson tail bound]
\label{lem:poisson-tail}
If $Z\sim\Poi(u)$, then, for every integer $s\ge1$,
\[
   \Prob(Z\ge s)
   \le
   \left(
      \frac{eu}{s}
   \right)^s.
\]
\end{lemma}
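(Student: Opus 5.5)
The bound follows from the exponential Markov (Chernoff) inequality, optimized at the natural tilt parameter. For $Z\sim\Poi(u)$ and any $t>0$, the Poisson moment generating function is $\E[e^{tZ}]=\exp\bigl(u(e^t-1)\bigr)$. Markov's inequality applied to $e^{tZ}$ gives
\[
   \Prob(Z\ge s)
   \le
   e^{-ts}\,\E[e^{tZ}]
   =
   \exp\bigl(u(e^t-1)-ts\bigr).
\]
We first treat the case $s>u$, where the optimizer is $t=\log(s/u)>0$. Substituting it yields
\[
   \Prob(Z\ge s)
   \le
   \exp\bigl(s-u-s\log(s/u)\bigr)
   =
   e^{-u}\left(\frac{eu}{s}\right)^s
   \le
   \left(\frac{eu}{s}\right)^s,
\]
where the last step uses $e^{-u}\le1$.

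The remaining case is $s\le u$. There $eu/s\ge e>1$, so the right-hand side $(eu/s)^s$ is at least $1$. Any probability is at most $1$, so the claimed inequality holds trivially.

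No step here is a real obstacle. The only point needing care is that the choice $t=\log(s/u)$ requires $t>0$, that is $s>u$, which is exactly why the trivial case $s\le u$ must be split off. An alternative route avoids Chernoff entirely, and it gives the same bound without the $e^{-u}$ factor. Writing $\Prob(Z\ge s)=\sum_{r\ge s}e^{-u}u^r/r!$, one factors out $u^s/s!$ and bounds the tail by $\sum_{r\ge s}e^{-u}u^r/r!\le (u^s/s!)\sum_{i\ge0}e^{-u}u^i/i!=u^s/s!$, using $r!\ge s!\,(r-s)!$. Applying $s!\ge (s/e)^s$ then gives $(eu/s)^s$ directly.
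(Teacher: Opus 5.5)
Your proof is correct and follows essentially the same approach as the paper's. Both apply Markov's inequality to $e^{tZ}$ with $e^t = s/u$ when $u<s$, and note that the bound is trivial when $u\ge s$ because the right-hand side is then at least $e^s>1$. Your alternative tail-sum argument, using $r!\ge s!\,(r-s)!$ and $s!\ge (s/e)^s$, is also valid.
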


\begin{proof}
For $t>0$, Markov's inequality  (Fact~\ref{fact:basic-probability}) gives
\[
\begin{aligned}
   \Prob(Z\ge s)
   &=
   \Prob(e^{tZ}\ge e^{ts}) \\
   &\le
   e^{-ts}\E e^{tZ} \\
   &=
   \exp\{u(e^t-1)-ts\}.
\end{aligned}
\]
If $u<s$, choose $e^t=s/u$ to obtain
\[
   \Prob(Z\ge s)
   \le
   \exp\{s-u-s\log(s/u)\}
   \le
   \left(
      \frac{eu}{s}
   \right)^s.
\]
If $u\ge s$, the displayed upper bound is at least $e^s>1$, so the
inequality is immediate.
\end{proof}

Define the truncation event
\[
   \mathcal E
   :=
   \left\{
      \max_{1\le i\le k}N_i\le m
   \right\}.
\]

\begin{lemma}[The truncation event is likely]
\label{lem:truncation-likely}
For each $\theta\in\{1,B\}$, if $\tau\le C_*\alpha$, then
\[
   I_\theta(\mathcal E^c)
   \le
   2e^{-40(m+1)}
   \le
   \frac{1}{1000}.
\]
\end{lemma}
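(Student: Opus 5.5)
We use a union bound over the $k$ coordinates and apply Lemma~\ref{lem:poisson-tail} with $s=m+1$ to each one. Under $I_\theta$, conditional on $J=j$, the coordinate $N_j$ is $\Poi(\theta\tau)$ and every $N_i$ with $i\neq j$ is $\Poi(\lambda)$. Since $\mathcal E^c=\{\exists i:\ N_i\ge m+1\}$, we get
\[
   I_\theta(\mathcal E^c)
   \le
   \Prob(\Poi(\theta\tau)\ge m+1)
   +
   (k-1)\,\Prob(\Poi(\lambda)\ge m+1).
\]
The plan is to show that each of the two terms is at most $e^{-40(m+1)}$.

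\textbf{The heavy coordinate.} Lemma~\ref{lem:poisson-tail} bounds the first term by $\bigl(e\theta\tau/(m+1)\bigr)^{m+1}$. Because $\alpha<m+1$ and $\theta\le B$, we have $e\theta\tau/(m+1)\le eBC_*\alpha/(m+1)\le eBC_*\le e^{-40}$, using $C_*\le e^{-40}/(eB)$. This gives the bound $e^{-40(m+1)}$.

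\textbf{The light coordinates.} Here $\lambda=\tau k^{-1/\alpha}\le C_*\alpha k^{-1/\alpha}$, and Lemma~\ref{lem:poisson-tail} gives
\[
   (k-1)\Bigl(\frac{e\lambda}{m+1}\Bigr)^{m+1}
   \le
   k\,(eC_*)^{m+1}k^{-(m+1)/\alpha}.
\]
The key point is that $m+1>\alpha$, so $k^{-(m+1)/\alpha}\le k^{-1}$. This factor cancels the union-bound factor $k$, leaving $(eC_*)^{m+1}\le e^{-40(m+1)}$, since $C_*\le e^{-40}/(2eB^2)\le e^{-40}/e$.

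This exponent comparison is the only delicate step. It is exactly why the truncation level is $\lfloor\alpha\rfloor$: the level must strictly exceed $\alpha-1$, and it must be tied to $\alpha$ so that the tail exponent beats $\log k$ uniformly in the order. No lower-order factor of $k$ survives, so the bound holds for all $k$ and all $\alpha$ in the range, with no limit needed.

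Summing the two terms gives $I_\theta(\mathcal E^c)\le 2e^{-40(m+1)}$. Since $m\ge1$, this is at most $2e^{-80}\le 1/1000$.
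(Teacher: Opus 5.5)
Your proof is correct and follows the paper's argument essentially step for step. You use a union bound over the heavy and light coordinates and apply Lemma~\ref{lem:poisson-tail} at level $s=m+1$. The heavy term is then controlled by $eBC_*\le e^{-40}$. In the light term, $m+1>\alpha$ absorbs the union-bound factor $k$, giving $(eC_*)^{m+1}\le e^{-40(m+1)}$.
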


\begin{proof}
Under $I_\theta$, conditional on the hidden coordinate, one coordinate
has mean
\[
   \theta\tau
   \le
   BC_*\alpha,
\]
and the remaining $k-1$ coordinates have mean
\[
   \lambda
   =
   \tau k^{-1/\alpha}
   \le
   C_*\alpha k^{-1/\alpha}.
\]

For the heavy coordinate, Lemma~\ref{lem:poisson-tail} and
$\alpha<m+1$ give
\[
\begin{aligned}
   \Prob(\Poi(\theta\tau)>m)
   &=
   \Prob(\Poi(\theta\tau)\ge m+1) \\
   &\le
   \left(
      \frac{eBC_*\alpha}{m+1}
   \right)^{m+1} \\
   &\le
   (eBC_*)^{m+1} \\
   &\le
   e^{-40(m+1)}.
\end{aligned}
\]

For a light coordinate,
\[
\begin{aligned}
   \Prob(\Poi(\lambda)>m)
   &\le
   \left(
      \frac{
         eC_*\alpha k^{-1/\alpha}
      }{
         m+1
      }
   \right)^{m+1} \\
   &\le
   (eC_*)^{m+1}
   k^{-(m+1)/\alpha}.
\end{aligned}
\]
A union bound over the light coordinates gives
\[
\begin{aligned}
   k\Prob(\Poi(\lambda)>m)
   &\le
   (eC_*)^{m+1}
   k^{1-(m+1)/\alpha} \\
   &\le
   (eC_*)^{m+1} \\
   &\le
   e^{-40(m+1)},
\end{aligned}
\]
because $m+1>\alpha$ and $eC_*\le e^{-40}$. Combining the heavy and
light bounds yields
\[
   I_\theta(\mathcal E^c)
   \le
   2e^{-40(m+1)}.
\]
Since $m+1\ge2$, the final quantity is smaller than $1/1000$.
\end{proof}

\subsection{Truncated likelihood-ratio second moment}

Define
\[
   \Delta(r)
   :=
   R_{B\tau}(r)-R_\tau(r),
\]
and let
\[
   Z
   :=
   \Delta(N)\one_{\{N\le m\}},
   \qquad
   N\sim\Poi(\lambda)
\]
under the baseline law $\bbB$. Let $Z_1,\ldots,Z_k$ be independent
copies of $Z$. On the event $\mathcal E$,
\[
   L_B-L_1
   =
   \frac1k\sum_{j=1}^k\Delta(N_j)
   =
   \frac1k\sum_{j=1}^k Z_j.
\]

\begin{lemma}[Truncated second moment]
\label{lem:second-moment}
If $\tau\le C_*\alpha$, then, for all sufficiently large $k$, uniformly
over
\[
   1.001\le\alpha\le c_0\log k,
\]
we have
\[
   \E_{\bbB}
   \left[
      \left(
         \frac1k\sum_{j=1}^k Z_j
      \right)^2
   \right]
   \le
   10^{-6}.
\]
\end{lemma}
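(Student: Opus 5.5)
The plan is to bound the second moment through the bias--variance split. Because $Z_1,\ldots,Z_k$ are i.i.d.\ under $\bbB$,
\[
   \E_{\bbB}\Bigl[\Bigl(\tfrac1k\textstyle\sum_j Z_j\Bigr)^2\Bigr]
   \le
   (\E Z)^2+\frac{\E[Z^2]}{k},
\]
and I will show that each term is at most $5\cdot10^{-7}$ for large $k$.

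\emph{The mean term.} Since $\Prob(\Poi(\lambda)=r)R_h(r)=\Prob(\Poi(h)=r)$, the mean telescopes:
\[
   \E Z=\Prob(\Poi(B\tau)\le m)-\Prob(\Poi(\tau)\le m)
   =\Prob(\Poi(\tau)>m)-\Prob(\Poi(B\tau)>m).
\]
Hence $|\E Z|$ is at most the larger of the two Poisson tails. Both tails were already bounded in the proof of Lemma~\ref{lem:truncation-likely} via Lemma~\ref{lem:poisson-tail}, by $(eBC_*)^{m+1}\le e^{-40(m+1)}$. Therefore $(\E Z)^2\le e^{-160}$.

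\emph{Reducing the second-moment term.} Use $(u-v)^2\le 2u^2+2v^2$, and for $h\in\{\tau,B\tau\}$ compute
\[
   \sum_{r\le m}\Prob(\Poi(\lambda)=r)R_h(r)^2
   = e^{\lambda-2h}\sum_{r\le m}\frac{x_h^r}{r!},
   \qquad x_h:=\frac{h^2}{\lambda}=\frac{h^2k^{1/\alpha}}{\tau}.
\]
Because $h\ge\tau\ge\lambda$, the prefactor satisfies $e^{\lambda-2h}\le e^{-h}\le1$. This leaves the key quantity $k^{-1}\sum_{r\le m}x_h^r/r!$, where $x_h\le B^2\tau k^{1/\alpha}\le B^2C_*\alpha k^{1/\alpha}$.

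\emph{Bounding the truncated exponential sum, case $x_h\ge m$.} Here the terms increase in $r$, so the sum is at most $(m+1)x_h^m/m!\le (m+1)(ex_h/m)^m$. The factor $k^{m/\alpha}\le k$ cancels the $1/k$. Using $\alpha/m<2$ (since $m\ge1$ and $\alpha<m+1$), the contribution is at most
\[
   (m+1)(2eB^2C_*)^m\le (m+1)e^{-40m}\le 2e^{-40}.
\]
This is exactly why $C_*\le e^{-40}/(2eB^2)$ was imposed.

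\emph{Bounding the truncated exponential sum, case $x_h<m$.} Here I bound the sum by $e^{x_h}$, combined with the retained factor $e^{-h}$. This is the step I expect to be the main obstacle, since it must hold uniformly in $\alpha$ up to $c_0\log k$. The case $x_h<m\le\alpha$ forces $h$ and $\tau$ to be small relative to $\alpha$. I will then use $\alpha\le c_0\log k$ together with $C_*\le 1/(8c_0(1+B^2))$ to show $x_h-h\le \tfrac14\log k$ (treating separately $k^{1/\alpha}\le 2$, where $x_h\le 2h$, and $k^{1/\alpha}>2$, where $x_h<m$ directly bounds things). The target is that $e^{x_h-h}/k\le k^{-3/4}\to0$. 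If this split turns out too lossy, the fallback is the bound $\sum_{r\le m}x^r/r!\le(1+x)^m$ (valid since $\binom mr\ge 1/r!$), with the $(1+B^2)$ constant absorbing the resulting $(1+B^2C_*\alpha)^m$ against $k^{1-m/\alpha}$ and $e^{-h}$.

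\emph{Conclusion.} Summing the two choices of $h$ with the factor $2$ gives $\E[Z^2]/k\le 8e^{-40}+o(1)$. Adding $(\E Z)^2\le e^{-160}$ yields a total below $10^{-6}$ for all sufficiently large $k$, uniformly over $1.001\le\alpha\le c_0\log k$.
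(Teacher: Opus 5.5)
Most of your argument is correct, but the proposal has a genuine gap in the case $x_h<m$. The correct parts are the bias--variance split, the telescoped mean with $(\E Z)^2\le e^{-160}$, the reduction to $\frac1k e^{\lambda-2h}\sum_{r\le m}x_h^r/r!$ with $e^{\lambda-2h}\le 1$, and the case $x_h\ge m$. That case uses increasing terms, the bound $(m+1)(ex_h/m)^m$, and the facts $k^{m/\alpha}\le k$ and $\alpha/m<2$.

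Your case split on $x_h$ versus $m$ differs from the paper's proof. The paper splits the index range into $r=0$, $1\le r\le\alpha/2$ and $\alpha/2<r\le m$. It uses $x\log(A/x)\le A/e$ on the lower range and the smallness of $eB^2C_*$ on the upper range. Your split is a legitimate alternative once the gap below is closed.

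\textbf{The gap: the subcase $k^{1/\alpha}>2$.} You say that there $x_h<m$ ``directly bounds things''. It does not. The facts $x_h<m\le\alpha$ and $k^{1/\alpha}>2$ give only $\alpha<\log k/\log 2\approx1.44\log k$. So the only bound available is $e^{x_h}/k<k^{1/\log 2-1}$, which grows.

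The retained factor $e^{-h}$ does not rescue this for $h=B\tau$. There $h=x_h/(Bk^{1/\alpha})\le x_h/(2B)$, so $x_h-h$ is still essentially $x_h$.

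\textbf{A smaller slip: $x_h\le 2h$.} In the subcase $k^{1/\alpha}\le 2$ you claim $x_h\le 2h$. This is false for $h=B\tau$, where $x_h/h=Bk^{1/\alpha}\ge e^4$. It is harmless, because $x_h\le 2B^2C_*\alpha<\tfrac14\log k$ holds directly.

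\textbf{Your fallback also fails.} Bounding $x_h^r\le(B^2C_*\alpha)^rk^{r/\alpha}$ and summing gives $k^{m/\alpha-1}(1+B^2C_*\alpha)^m$. For integer $\alpha\asymp\log k$ we have $m=\alpha$, so the $k$-power equals $1$, while $(1+B^2C_*\alpha)^\alpha\to\infty$. Neither $e^{-h}$ nor any constant absorbs this. Decoupling $\tau$ from $r$ breaks down precisely for $r$ near $\alpha$, which is why the paper handles $r>\alpha/2$ separately.

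\textbf{How to close it.} In the case $x_h<m$, use both constraints you have: $x_h<m\le\alpha$ and $x_h\le B^2C_*\alpha k^{1/\alpha}$. Place the threshold on $k^{1/\alpha}$ above $e$ rather than at $2$.
\begin{itemize}
\item If $k^{1/\alpha}\le e^2$, then $x_h\le e^2B^2C_*\alpha<\frac{e^2}{8}\log k$, using $C_*\le 1/(8c_0(1+B^2))$ and $\alpha\le c_0\log k$.
\item If $k^{1/\alpha}>e^2$, then $x_h<\alpha<\tfrac12\log k$.
\end{itemize}
In both subcases $e^{x_h}/k\le k^{-(1-e^2/8)}\to0$ uniformly in $\alpha$. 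With this fix your bound $\E[Z^2]/k\le 8e^{-40}+o(1)$ holds, and the lemma follows as you planned.
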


\begin{proof}
First,
\begin{align*}
   \E Z
   &=
   \sum_{r=0}^m
   \left[
      \Prob(\Poi(B\tau)=r)
      -
      \Prob(\Poi(\tau)=r)
   \right] \\
   &=
   \Prob(\Poi(\tau)>m)
   -
   \Prob(\Poi(B\tau)>m).
\end{align*}
By the same tail estimate used in
Lemma~\ref{lem:truncation-likely},
\[
   |\E Z|
   \le
   2e^{-40(m+1)}.
\]

It remains to bound $\E Z^2/k$. Using
$(x-y)^2\le2x^2+2y^2$ (see Fact~\ref{fact:elementary-algebra}), we obtain
\[
   \E Z^2
   \le
   2S(B\tau)+2S(\tau),
\]
where, for $0<h\le B\tau$,
\[
   S(h)
   :=
   \sum_{r=0}^m
   \frac{
      \Prob(\Poi(h)=r)^2
   }{
      \Prob(\Poi(\lambda)=r)
   }.
\]
For a fixed $r$,
\[
   \frac{
      \Prob(\Poi(h)=r)^2
   }{
      \Prob(\Poi(\lambda)=r)
   }
   =
   e^{-2h+\lambda}
   \frac{(h^2/\lambda)^r}{r!}.
\]

For $r=0$,
\[
\begin{aligned}
   \frac1k
   \frac{
      \Prob(\Poi(h)=0)^2
   }{
      \Prob(\Poi(\lambda)=0)
   }
   &=
   \frac1k e^{-2h+\lambda} \\
   &\le
   \frac1k e^\lambda \\
   &\le
   \exp\{C_*\alpha-\log k\} \\
   &\le
   k^{-7/8},
\end{aligned}
\]
because $\alpha\le c_0\log k$ and $C_*c_0\le1/8$.

For $1\le r\le m$, using
$r!\ge(r/e)^r$ (Fact~\ref{fact:factorial-optimization}), $h\le B\tau$,
$\lambda=\tau k^{-1/\alpha}$, and
$\tau\le C_*\alpha$, we obtain
\begin{align*}
   \frac1k
   \frac{
      \Prob(\Poi(h)=r)^2
   }{
      \Prob(\Poi(\lambda)=r)
   }
   &=
   \frac1k
   e^{-2h+\lambda}
   \frac{(h^2/\lambda)^r}{r!} \\
   &\le
   \frac1k e^\lambda
   \left(
      \frac{eB^2\tau}{r}
   \right)^r
   k^{r/\alpha} \\
   &\le
   \exp\left\{
      \lambda
      +
      r\log\left(
         \frac{eB^2C_*\alpha}{r}
      \right)
      +
      \left(
         \frac{r}{\alpha}-1
      \right)\log k
   \right\}.
\end{align*}

Consider first $1\le r\le\alpha/2$. For any $A,x>0$,
\[
   x\log(A/x)\le A/e.
\]
Applying this inequality with
$A=eB^2C_*\alpha$ and $x=r$ gives
\[
   r\log\left(
      \frac{eB^2C_*\alpha}{r}
   \right)
   \le
   B^2C_*\alpha.
\]
Moreover,
\[
   \frac{r}{\alpha}-1
   \le
   -\frac12,
   \qquad
   \lambda\le C_*\alpha.
\]
Therefore,
\[
   \frac1k
   \frac{
      \Prob(\Poi(h)=r)^2
   }{
      \Prob(\Poi(\lambda)=r)
   }
   \le
   \exp\left\{
      (1+B^2)C_*\alpha
      -
      \frac12\log k
   \right\}
   \le
   k^{-3/8},
\]
because
\[
   (1+B^2)C_*c_0
   \le
   \frac18
\]
and $\alpha\le c_0\log k$.

Now consider $\alpha/2<r\le m$. Put
\[
   \rho:=\frac{r}{\alpha}\in(1/2,1],
   \qquad
   D:=eB^2C_*.
\]
Because
\[
   D\le\frac{e^{-40}}{2},
\]
the function
\[
   \rho\mapsto\rho\log(D/\rho)
\]
is decreasing on $[1/2,1]$, and hence
\[
   \rho\log(D/\rho)
   \le
   \frac12\log(2D)
   \le
   -20.
\]
Since
\[
   \left(
      \frac{r}{\alpha}-1
   \right)\log k
   \le0,
   \qquad
   \lambda\le C_*\alpha,
   \qquad
   C_*\le1,
\]
the preceding estimate gives
\[
   \frac1k
   \frac{
      \Prob(\Poi(h)=r)^2
   }{
      \Prob(\Poi(\lambda)=r)
   }
   \le
   e^{-19\alpha}.
\]

Combining the three ranges yields
\[
   \frac{S(h)}{k}
   \le
   k^{-7/8}
   +
   c_0(\log k)k^{-3/8}
   +
   \alpha e^{-19\alpha}.
\]
The function $x\mapsto xe^{-19x}$ is decreasing for
$x\ge1.001$, and therefore
\[
   \alpha e^{-19\alpha}
   \le
   1.001e^{-19.019}
   <
   10^{-8}.
\]
The first two terms converge to zero uniformly over
$1.001\le\alpha\le c_0\log k$. Hence, by increasing $K_0(c_0)$ if
necessary,
\[
   \sup_{0<h\le B\tau}
   \frac{S(h)}{k}
   \le
   2\cdot10^{-8}
\]
for all $k\ge K_0(c_0)$ and all allowed $\alpha$.

Consequently,
\[
   \frac{\E Z^2}{k}
   \le
   4
   \sup_{0<h\le B\tau}
   \frac{S(h)}{k}
   \le
   8\cdot10^{-8}.
\]
Finally, since $Z_1,\ldots,Z_k$ are i.i.d.\ copies of $Z$,
\[
\begin{aligned}
   \E_{\bbB}
   \left[
      \left(
         \frac1k\sum_{j=1}^kZ_j
      \right)^2
   \right]
   &=
   \frac1k\Var(Z)+(\E Z)^2 \\
   &\le
   \frac{\E Z^2}{k}+(\E Z)^2 \\
   &\le
   10^{-6}
\end{aligned}
\]
for all sufficiently large $k$.
\end{proof}

\begin{lemma}[The ideal experiments are close]
\label{lem:ideal-close}
If $\tau\le C_*\alpha$, then, for all sufficiently large $k$, uniformly
over
\[
   1.001\le\alpha\le c_0\log k,
\]
we have
\[
   \TV(I_B,I_1)
   \le
   \frac1{50}.
\]
\end{lemma}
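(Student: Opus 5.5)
We start from the density representation \eqref{eq:TVD_Like}, $\TV(I_B,I_1)=\tfrac12\E_{\bbB}|L_B-L_1|$, and split the expectation according to the truncation event $\mathcal E=\{\max_i N_i\le m\}$:
\[
   \E_{\bbB}|L_B-L_1|
   =
   \E_{\bbB}\bigl[|L_B-L_1|\one_{\mathcal E}\bigr]
   +
   \E_{\bbB}\bigl[|L_B-L_1|\one_{\mathcal E^c}\bigr].
\]
On $\mathcal E$ we have $L_B-L_1=\frac1k\sum_j Z_j$. So the first term is at most $\E_{\bbB}\bigl|\frac1k\sum_j Z_j\bigr|$. By Cauchy--Schwarz and Lemma~\ref{lem:second-moment}, this is at most $\sqrt{10^{-6}}=10^{-3}$, for all sufficiently large $k$ and uniformly over $1.001\le\alpha\le c_0\log k$.

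For the complementary event we do not control $L_B-L_1$ pointwise. Instead we use the triangle inequality $|L_B-L_1|\le L_B+L_1$ together with the change-of-measure identity $\E_{\bbB}[L_\theta\one_{\mathcal E^c}]=I_\theta(\mathcal E^c)$. This identity holds because $L_\theta=dI_\theta/d\bbB$, and all laws are supported on the same countable set of count vectors, where $\bbB$ has full support. Lemma~\ref{lem:truncation-likely} then gives
\[
   \E_{\bbB}\bigl[|L_B-L_1|\one_{\mathcal E^c}\bigr]
   \le
   I_B(\mathcal E^c)+I_1(\mathcal E^c)
   \le
   4e^{-40(m+1)}
   \le
   \frac{2}{1000}.
\]
This step avoids having to bound the truncation event under the baseline $\bbB$ separately, although that would also follow from the light-coordinate bound in the proof of Lemma~\ref{lem:truncation-likely}.

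Combining the two pieces gives
\[
   \TV(I_B,I_1)\le\tfrac12\bigl(10^{-3}+2\cdot10^{-3}\bigr)<\tfrac1{50}.
\]
All the thresholds on $k$ come from Lemmas~\ref{lem:truncation-likely} and~\ref{lem:second-moment}, and both hold uniformly in $\alpha$ in the allowed range. So one $K_0(c_0)$ suffices.

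The only delicate point is the bookkeeping on $\mathcal E^c$. Writing $|L_B-L_1|\one_{\mathcal E^c}$ through $Z_j$ would be wrong, since $Z_j$ discards exactly the large occupancies. Routing this term through the likelihood-ratio identity keeps it exact. Everything else is an immediate consequence of the two preceding lemmas.
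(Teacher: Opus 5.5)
Your proposal is correct and follows essentially the same route as the paper's proof. Both split $\E_{\bbB}|L_B-L_1|$ on $\mathcal E$, bound the $\mathcal E$-part by $10^{-3}$ via Cauchy--Schwarz and Lemma~\ref{lem:second-moment}, and bound the $\mathcal E^c$-part by $I_B(\mathcal E^c)+I_1(\mathcal E^c)\le 2/1000$ via $|L_B-L_1|\le L_B+L_1$ and Lemma~\ref{lem:truncation-likely}, which gives $\TV(I_B,I_1)\le 3/2000<1/50$.
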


\begin{proof}
Using the likelihood-ratio representation (\ref{eq:TVD_Like}) of total variation distance,
\[
   \TV(I_B,I_1)
   =
   \frac12
   \E_{\bbB}|L_B-L_1|.
\]
Split according to $\mathcal E$:
\begin{align*}
   \E_{\bbB}|L_B-L_1|
   &\le
   \E_{\bbB}
   \left[
      \one_{\mathcal E}|L_B-L_1|
   \right] \\
   &\quad+
   \E_{\bbB}
   \left[
      \one_{\mathcal E^c}(L_B+L_1)
   \right] \\
   &=
   \E_{\bbB}
   \left[
      \one_{\mathcal E}|L_B-L_1|
   \right]
   +
   I_B(\mathcal E^c)
   +
   I_1(\mathcal E^c).
\end{align*}
By Lemma~\ref{lem:second-moment} and the Cauchy--Schwarz inequality (Fact~\ref{fact:basic-probability}),
\[
\begin{aligned}
   \E_{\bbB}
   \left[
      \one_{\mathcal E}|L_B-L_1|
   \right]
   &=
   \E_{\bbB}
   \left[
      \one_{\mathcal E}
      \left|
         \frac1k\sum_{j=1}^kZ_j
      \right|
   \right] \\
   &\le
   \left(
      \E_{\bbB}
      \left[
         \left(
            \frac1k\sum_{j=1}^kZ_j
         \right)^2
      \right]
   \right)^{1/2} \\
   &\le
   10^{-3}.
\end{aligned}
\]
By Lemma~\ref{lem:truncation-likely},
\[
   I_B(\mathcal E^c)+I_1(\mathcal E^c)
   \le
   \frac{2}{1000}.
\]
Thus,
\[
   \E_{\bbB}|L_B-L_1|
   \le
   \frac{3}{1000},
\]
and therefore
\[
   \TV(I_B,I_1)
   \le
   \frac{3}{2000}
   <
   \frac1{50}.
\]
\end{proof}

\subsection{From the ideal experiment back to the true mixture}

We now show that replacing the true light-coordinate mean by
$\lambda=n/k$ changes the mixture law by a negligible amount in total
variation.

We use squared Hellinger distance under the convention
\[
   H^2(P,Q)
   :=
   \sum_x
   \left(
      \sqrt{P(x)}-\sqrt{Q(x)}
   \right)^2.
\]
Under this convention,
\[
   \TV(P,Q)\le H(P,Q);
\]
see, for example, \cite{Tsybakov2009}.

\begin{lemma}[The true and ideal mixtures are close]
\label{lem:true-ideal-close}
For each $\theta\in\{1,B\}$,
\[
   \TV(M_\theta,I_\theta)
   =
   o(1)
\]
as $k\to\infty$, uniformly over
\[
   1.001\le\alpha\le c_0\log k
   \qquad\text{and}\qquad
   \tau\le C_*\alpha.
\]
In particular, for all sufficiently large $k$,
\[
   \TV(M_\theta,I_\theta)
   \le
   \frac1{100}.
\]
\end{lemma}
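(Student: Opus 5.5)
Both $M_\theta$ and $I_\theta$ are mixtures over the same uniform hidden index $J$, so I would first reduce to a fixed location. Total variation is jointly convex, hence
\[
   \TV(M_\theta,I_\theta)\le\frac1k\sum_{j=1}^k\TV\bigl(M_\theta^{(j)},I_\theta^{(j)}\bigr),
\]
where $M_\theta^{(j)}$ and $I_\theta^{(j)}$ are the conditional laws given $J=j$. By symmetry every summand is equal. Conditional on $J=j$, both laws are products of independent Poissons. The heavy coordinate is $\Poi(\theta\tau)$ in both, since $n\theta a=\theta\tau$ exactly. The $k-1$ light coordinates are $\Poi(\mu_\theta)$ under $M$ and $\Poi(\lambda)$ under $I$. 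The heavy factors therefore cancel. Using $\TV\le H$ and Hellinger tensorization (with the convention $H^2=\sum(\sqrt P-\sqrt Q)^2$, so that $1-H^2/2$ is multiplicative), we get
\[
   H^2\bigl(M_\theta^{(j)},I_\theta^{(j)}\bigr)\le (k-1)\,H^2\bigl(\Poi(\mu_\theta),\Poi(\lambda)\bigr).
\]

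Next I would use the closed form
\[
   H^2\bigl(\Poi(u),\Poi(v)\bigr)=2\bigl(1-e^{-(\sqrt u-\sqrt v)^2/2}\bigr)\le(\sqrt u-\sqrt v)^2\le\frac{(u-v)^2}{v}.
\]
The mean gap is
\[
   \mu_\theta-\lambda=n\Bigl(\frac{1-\theta a}{k-1}-\frac1k\Bigr)=\frac{n(1-\theta ka)}{k(k-1)}.
\]
Since $ka=k^{1/\alpha}\ge1$, we have $|\mu_\theta-\lambda|\le n\theta a/(k-1)+n/(k(k-1))\le 2B\tau/(k-1)+\lambda/(k-1)$, so roughly $|\mu_\theta-\lambda|\lesssim B\tau/k$ plus a negligible term. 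Dividing by $\lambda=\tau k^{-1/\alpha}$ gives
\[
   (k-1)\frac{(\mu_\theta-\lambda)^2}{\lambda}\lesssim \frac{B^2\tau^2/k}{\tau k^{-1/\alpha}}=B^2\tau\,k^{1/\alpha-1}=B^2\tau a.
\]

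Finally, $\tau\le C_*\alpha$ gives the bound $\lesssim B^2C_*\,\alpha a$, which tends to $0$ uniformly by Lemma~\ref{lem:elementary-estimates}. The lower-order term $\lambda/(k-1)$ contributes $(k-1)\cdot\lambda^2/((k-1)^2\lambda)=\lambda/(k-1)\le C_*\alpha/k\to0$. Hence $\TV(M_\theta,I_\theta)\le\sqrt{O(\alpha a)}=o(1)$ uniformly, and this is below $1/100$ for large $k$.

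The step needing the most care is the cancellation of the large $k^{1/\alpha}$ factor. The mean shift is of order $\tau/k$, not $\lambda/k$, and it must be divided by the small variance proxy $\lambda$; what closes the argument is that the resulting ratio is exactly $\tau a$, which is $o(1)$ only because $\alpha a\to0$. A minor additional point is the edge case $\tau\to0$ with $\lambda$ tiny. This causes no problem, because the bound is linear in $\tau$ and does not blow up.
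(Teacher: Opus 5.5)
Your proof is correct and follows the paper's argument. You condition on the common hidden index, cancel the identical heavy factor $\Poi(\theta\tau)$, tensorize Hellinger over the $k-1$ light coordinates, and bound the Poisson Hellinger distance by the squared mean gap divided by $\lambda$, which yields $O(\tau a)=O(\alpha a)=o(1)$. The only cosmetic difference is your use of $(\sqrt u-\sqrt v)^2\le (u-v)^2/v$ with $v=\lambda$, which lets you skip the paper's auxiliary step $\min\{\mu_\theta,\lambda\}\ge\lambda/2$.
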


\begin{proof}
Condition on $J=j$. The true and ideal conditional laws have the same
heavy-coordinate mean $\theta\tau$. They differ only on the $k-1$
light coordinates. The true light-coordinate mean is
\[
   \mu_\theta
   =
   n\frac{1-\theta a}{k-1},
\]
whereas the ideal light-coordinate mean is
\[
   \lambda
   =
   \frac{n}{k}.
\]
We have
\begin{align*}
   \mu_\theta-\lambda
   &=
   n
   \left(
      \frac{1-\theta a}{k-1}
      -
      \frac1k
   \right) \\
   &=
   n
   \frac{1-\theta ka}{k(k-1)}.
\end{align*}
Since
\[
   ka=k^{1/\alpha},
\]
for all sufficiently large $k$,
\[
   |1-\theta ka|
   \le
   2\theta k^{1/\alpha},
\]
and hence
\[
   |\mu_\theta-\lambda|
   \le
   \frac{4\theta\tau}{k}.
\]
Also,
\[
   \lambda
   =
   \tau k^{-1/\alpha},
\]
and $\lambda > 0$, so
\[
   \frac{
      |\mu_\theta-\lambda|
   }{
      \lambda
   }
   \le
   4\theta k^{-1+1/\alpha}
   =
   4\theta a
   =
   o(1)
\]
uniformly over the allowed range. Thus, for all sufficiently large $k$,
\[
   \min\{\mu_\theta,\lambda\}
   \ge
   \frac{\lambda}{2}.
\]

For Poisson laws (see Fact~\ref{fact:poisson-hellinger}),
\[
   H^2(\Poi(u),\Poi(v))
   \le
   (\sqrt u-\sqrt v)^2
   \le
   \frac{(u-v)^2}{\min\{u,v\}}.
\]
Therefore, for one light coordinate,
\[
\begin{aligned}
   H^2(\Poi(\mu_\theta),\Poi(\lambda))
   &\le
   \frac{
      2(\mu_\theta-\lambda)^2
   }{
      \lambda
   } \\
   &\le
   C_B\tau k^{-2+1/\alpha},
\end{aligned}
\]
where $C_B$ depends only on $B$.

We now pass from one light coordinate to the product of the $k-1$
independent light coordinates. For fixed $j$, write
\[
   P_{\theta,j}
   :=
   M_\theta(\,\cdot\,|J=j),
   \qquad
   Q_{\theta,j}
   :=
   I_\theta(\,\cdot\,|J=j).
\]
The two laws have the same heavy-coordinate marginal
$\Poi(\theta\tau)$ and differ only on the light coordinates. Thus,
\[
   P_{\theta,j}
   =
   \Poi(\theta\tau)
   \otimes
   \bigotimes_{i\ne j}\Poi(\mu_\theta),
\]
and
\[
   Q_{\theta,j}
   =
   \Poi(\theta\tau)
   \otimes
   \bigotimes_{i\ne j}\Poi(\lambda).
\]

Let
\[
   \delta_\theta
   :=
   H^2(
      \Poi(\mu_\theta),
      \Poi(\lambda)
   ).
\]
Recall the Hellinger affinity
\[
   A(P,Q)
   :=
   \sum_x\sqrt{P(x)Q(x)}.
\]
Under our convention,
\[
   H^2(P,Q)
   =
   2-2A(P,Q)
   =
   2\bigl(1-A(P,Q)\bigr),
\]
and hence
\[
   A(P,Q)
   =
   1-\frac12H^2(P,Q).
\]

For product measures, Hellinger affinity tensorizes (see Fact \ref{fact:hellinger}):
\[
   A\left(
      \bigotimes_{\ell=1}^sP_\ell,
      \bigotimes_{\ell=1}^sQ_\ell
   \right)
   =
   \prod_{\ell=1}^sA(P_\ell,Q_\ell).
\]
Indeed,
\begin{align*}
   A\left(
      \bigotimes_{\ell=1}^sP_\ell,
      \bigotimes_{\ell=1}^sQ_\ell
   \right)
   &=
   \sum_{x_1,\ldots,x_s}
   \sqrt{
      \prod_{\ell=1}^sP_\ell(x_\ell)
      \prod_{\ell=1}^sQ_\ell(x_\ell)
   } \\
   &=
   \sum_{x_1,\ldots,x_s}
   \prod_{\ell=1}^s
   \sqrt{
      P_\ell(x_\ell)Q_\ell(x_\ell)
   } \\
   &=
   \prod_{\ell=1}^s
   \sum_{x_\ell}
   \sqrt{
      P_\ell(x_\ell)Q_\ell(x_\ell)
   } \\
   &=
   \prod_{\ell=1}^sA(P_\ell,Q_\ell).
\end{align*}

Applying this identity with $s=k-1$, and observing that the
heavy-coordinate affinity is one because the heavy-coordinate
marginals are identical, gives
\[
   H^2(P_{\theta,j},Q_{\theta,j})
   =
   2
   \left[
      1-
      \left(
         1-\frac{\delta_\theta}{2}
      \right)^{k-1}
   \right].
\]
Since $0\le\delta_\theta\le2$, we may use
\[
   1-(1-x)^s
   \le
   sx,
   \qquad
   0\le x\le1,
\]
with $x=\delta_\theta/2$ and $s=k-1$. Therefore (see Fact~\ref{fact:hellinger}),
\[
\begin{aligned}
   H^2(P_{\theta,j},Q_{\theta,j})
   &\le
   2(k-1)\frac{\delta_\theta}{2} \\
   &=
   (k-1)\delta_\theta \\
   &=
   (k-1)
   H^2(
      \Poi(\mu_\theta),
      \Poi(\lambda)
   ).
\end{aligned}
\]
Using the one-coordinate bound,
\[
\begin{aligned}
   H^2(P_{\theta,j},Q_{\theta,j})
   &\le
   (k-1)
   C_B\tau k^{-2+1/\alpha} \\
   &\le
   C_B\tau k^{-1+1/\alpha}.
\end{aligned}
\]

Since
\[
   \tau\le C_*\alpha,
   \qquad
   \alpha\le c_0\log k,
\]
and
\[
   k^{-1+1/\alpha}
   =
   k^{-(\alpha-1)/\alpha}
   \le
   k^{-1/1001},
\]
we obtain
\[
   C_B\tau k^{-1+1/\alpha}
   \le
   C_BC_*c_0
   (\log k)k^{-1/1001}
   \to0.
\]
Thus, the conditional laws are $o(1)$ apart in Hellinger distance and
hence in total variation, uniformly in $j$. Averaging over the common
uniform hidden coordinate gives (see Fact~\ref{fact:tv-mixture})
\[
   \TV(M_\theta,I_\theta)
   =
   o(1).
\]
\end{proof}

\begin{proposition}[Poissonized mixture indistinguishability]
\label{prop:poissonized-indistinguishable}
For all sufficiently large $k$, uniformly over
\[
   1.001\le\alpha\le c_0\log k,
\]
if
\[
   n
   \le
   C_*\alpha k^{1-1/\alpha},
\]
then
\[
   \TV(M_B,M_1)
   \le
   \frac14.
\]
\end{proposition}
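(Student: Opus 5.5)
The plan is to apply the triangle inequality for total variation distance to the four laws $M_B, I_B, I_1, M_1$:
\[
   \TV(M_B,M_1)
   \le
   \TV(M_B,I_B)+\TV(I_B,I_1)+\TV(I_1,M_1).
\]
First observe that the hypothesis $n\le C_*\alpha k^{1-1/\alpha}$ is the same as $\tau=n/k^{1-1/\alpha}\le C_*\alpha$. This is exactly the standing assumption under which Lemmas~\ref{lem:truncation-likely}, \ref{lem:second-moment}, \ref{lem:ideal-close} and~\ref{lem:true-ideal-close} were proved.

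Next, bound the middle term. Lemma~\ref{lem:ideal-close} gives $\TV(I_B,I_1)\le 1/50$ for all $k$ beyond some threshold depending only on $c_0$, uniformly over $1.001\le\alpha\le c_0\log k$. That lemma already combines the truncation estimate and the truncated second moment, so no further work is needed here.

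For the outer two terms, apply Lemma~\ref{lem:true-ideal-close} once with $\theta=B$ and once with $\theta=1$. This gives $\TV(M_\theta,I_\theta)\le 1/100$ for all sufficiently large $k$, uniformly in the allowed $\alpha$ and in $\tau\le C_*\alpha$. Summing the three pieces gives
\[
   \TV(M_B,M_1)\le \tfrac1{100}+\tfrac1{50}+\tfrac1{100}=\tfrac1{25}\le\tfrac14 .
\]
The final threshold $K_0(c_0)$ is the maximum of the finitely many thresholds produced by these lemmas. Each of those thresholds depends only on $c_0$, since $B=e^4$ and $C_*=C_*(c_0)$ are fixed.

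The one point to check is uniformity. The $o(1)$ statement in Lemma~\ref{lem:true-ideal-close} must hold uniformly over the whole admissible range of $(\alpha,\tau)$, not merely for fixed values. This holds because its bound $C_BC_*c_0(\log k)k^{-1/1001}$ depends on neither $\alpha$ nor $\tau$. The bounds in Lemma~\ref{lem:ideal-close} likewise depend only on $k$ and $c_0$. I therefore expect no real obstacle: the substantive work was done in the preceding lemmas, and this proposition is bookkeeping.
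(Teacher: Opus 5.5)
Your proposal is correct and is the paper's proof. You apply the triangle inequality through $I_B$ and $I_1$, bound the two outer terms by $1/100$ each via Lemma~\ref{lem:true-ideal-close} and the middle term by $1/50$ via Lemma~\ref{lem:ideal-close}, and obtain $1/25<1/4$. Your added remark is also right: every threshold depends only on $c_0$, with bounds uniform over $\alpha$ and $\tau\le C_*\alpha$, and this is exactly the uniformity the paper relies on.
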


\begin{proof}
The condition on $n$ is equivalent to
\[
   \tau\le C_*\alpha.
\]
By the triangle inequality (see Fact~\ref{fact:tv-mixture}),
\[
\begin{aligned}
   \TV(M_B,M_1)
   &\le
   \TV(M_B,I_B)
   +
   \TV(I_B,I_1)
   +
   \TV(I_1,M_1).
\end{aligned}
\]
For all sufficiently large $k$,
Lemma~\ref{lem:true-ideal-close} bounds the first and third terms by
$1/100$, while Lemma~\ref{lem:ideal-close} bounds the middle term by
$1/50$. Hence,
\[
   \TV(M_B,M_1)
   \le
   \frac1{100}
   +
   \frac1{50}
   +
   \frac1{100}
   =
   \frac1{25}
   <
   \frac14.
\]
\end{proof}

\subsection{Application of the Le Cam reduction and de-Poissonization}

By Le Cam's reduction in Lemma \ref{lem:Le_Cam}, applied with \(F=H_\alpha\), the priors
\(\Pi_0,\Pi_1\), and the Poissonized count-vector experiment, any estimator
with success probability \(q\) over the two prior supports must satisfy
\[
\operatorname{TV}(M_1,M_B)\ge 2q-1.
\]

By Lemma~\ref{lem:entropy-separation}, the R\'enyi entropy values under
$\Pi_0$ and $\Pi_1$ differ by more than $2\epsilon$. By
Proposition~\ref{prop:poissonized-indistinguishable}, the corresponding
Poissonized count-vector laws have total variation distance at most
$1/4$, i.e., \(\operatorname{TV}(M_1,M_B)\le 1/4\), whenever
\[
   n
   \le
   C_*\alpha k^{1-1/\alpha}.
\]
Consequently, no Poissonized estimator with mean sample size
\[
   n
   \le
   C_*\alpha k^{1-1/\alpha}
\]
can estimate $H_\alpha$ to error $\epsilon$ uniformly over the two
prior supports with success probability \(q>5/8\).

\ignore{
We now convert indistinguishability of the two mixture laws into an
estimation lower bound.

\begin{lemma}[Le Cam testing reduction {\cite{Tsybakov2009}}]
\label{lem:le-cam}
Suppose that two priors $\Pi_0$ and $\Pi_1$ are supported on
distributions whose target values are separated by more than
$2\epsilon$. If an estimator estimates the target to error at most
$\epsilon$ with probability at least $q$ for every distribution in the
supports of both priors, then the induced sample laws under the two
priors can be tested with average error at most $1-q$.
\end{lemma}

\begin{proof}
Let the two target values be $h_0$ and $h_1$, with
\[
   |h_0-h_1|>2\epsilon,
\]
and let
\[
   t
   :=
   \frac{h_0+h_1}{2}.
\]
Given the estimator output $\widehat h$, decide hypothesis $1$ if
$\widehat h$ lies on the side of $t$ containing $h_1$, and decide
hypothesis $0$ otherwise. Under either hypothesis, whenever
\[
   |\widehat h-h_i|
   \le
   \epsilon,
\]
the test is correct. Therefore, the error probability under each
hypothesis is at most $1-q$, and the average error is at most $1-q$.
\end{proof}

For any two probability laws $P$ and $Q$, every test between them has
average error at least
\[
   \frac12
   \bigl(
      1-\TV(P,Q)
   \bigr);
\]
see, for example, \cite{Tsybakov2009}. Thus, if
\[
   \TV(P,Q)\le\frac14,
\]
then every test has average error at least $3/8$.

By Lemma~\ref{lem:entropy-separation}, the R\'enyi entropy values under
$\Pi_0$ and $\Pi_1$ differ by more than $2\epsilon$. By
Proposition~\ref{prop:poissonized-indistinguishable}, the corresponding
Poissonized count-vector laws have total variation distance at most
$1/4$ whenever
\[
   n
   \le
   C_*\alpha k^{1-1/\alpha}.
\]
Consequently, no Poissonized estimator with mean sample size
\[
   n
   \le
   C_*\alpha k^{1-1/\alpha}
\]
can estimate $H_\alpha$ to error $\epsilon$ uniformly over the two
prior supports with success probability strictly larger than $5/8$.
} 

It remains to de-Poissonize, that is, to transfer this lower bound to
the fixed-sample model. Suppose, toward a contradiction, that there
exists a fixed-sample estimator using
\[
   s
   \le
   \frac{C_*}{20}
   \alpha k^{1-1/\alpha}
\]
samples and succeeding with probability at least $2/3$ uniformly over
$\Delta_k$. The case $s=0$ is impossible for sufficiently large $k$
already by Lemma~\ref{lem:entropy-separation} and the same testing
argument, so assume $s\ge1$.

Run a Poissonized experiment with mean
\[
   n=20s.
\]
If the realized sample size $N$ satisfies $N\ge s$, apply the
fixed-sample estimator to the first $s$ samples and ignore the rest. If
$N<s$, output an arbitrary value. Since $N$ is independent of the
sample values, conditional on $N\ge s$, the first $s$ samples are
i.i.d.\ from the unknown distribution. Hence, the constructed
Poissonized estimator succeeds with probability at least
\[
   \frac23
   \Prob(\Poi(20s)\ge s).
\]
The multiplicative Chernoff lower-tail bound gives
\[
   \Prob(\Poi(20s)<s)
   \le
   \exp\left(
      -\frac{20s(0.95)^2}{2}
   \right)
   <
   \frac1{32},
   \qquad
   s\ge1.
\]
Therefore, the success probability is at least
\[
   \frac23\cdot\frac{31}{32}
   =
   \frac{31}{48}
   >
   \frac58.
\]
But
\[
   n
   =
   20s
   \le
   C_*\alpha k^{1-1/\alpha},
\]
contradicting the preceding Poissonized lower bound.

Thus, no fixed-sample estimator can use at most
\[
   \frac{C_*}{20}
   \alpha k^{1-1/\alpha}
\]
samples for all sufficiently large $k$. Taking, for example,
\[
   c_1(c_0)
   :=
   \frac{C_*(c_0)}{40}
\]
and increasing $K_0(c_0)$ if necessary proves
Theorem~\ref{thm:main_2}.

\subsection{Context}

For fixed integer $\alpha>1$, the classical lower bounds of Acharya,
Orlitsky, Suresh, and Tyagi identify the dependence
$k^{1-1/\alpha}$
\cite{AcharyaOrlitskySureshTyagi2015}. Lower bounds for real
$\alpha>1$ were subsequently sharpened by Obremski and Skorski
\cite{ObremskiSkorski2017}. The argument above uses a hidden heavy
coordinate and a truncated likelihood calculation to obtain the
uniform lower bound
\[
   \Omega_{c_0}\!\left(
      \alpha k^{1-1/\alpha}
   \right)
\]
over
\[
   1.001\le\alpha\le c_0\log k.
\]
For integer orders
\[
   2\le\alpha\le c_0\log k,
\]
this matches the upper bound proved in
Section~\ref{sec:Renyi_Entropy_UB}. For noninteger orders, the present
argument provides only the displayed lower bound and is not intended to
replace the stronger fixed-order bounds known in the literature.

\section{Conclusion and Open Problems}
\label{sec:conclusion}

This paper separates the sample complexity of min-entropy estimation
from that of Shannon entropy estimation. Shannon entropy has
fixed-accuracy sample complexity $\Theta(k/\log k)$, whereas
min-entropy has fixed-accuracy sample complexity $\Theta(k\log k)$.
The min-entropy upper bound follows from a refined analysis of the
empirical maximum, while the matching lower bound uses a
hidden-heavy-coordinate testing construction. The result also explains
why the Valiant--Valiant relative-earthmover estimator cannot be applied
as a black box to min-entropy: min-entropy is not continuous with
respect to the metric required by that framework.

For R\'{e}nyi entropy, we identify the correct growing-order scale for
integer orders $2\le\alpha\le c_0\log k$. In this regime,
fixed-accuracy estimation has sample complexity
$
   \Theta_{c_0}\!\left(
      \alpha k^{1-1/\alpha}
   \right).
$
The upper bound is achieved by a falling-factorial estimator of the
power sum, together with a variance analysis that retains the dependence
on the growing order. The matching lower bound uses a hidden-heavy-
coordinate prior, Poissonization, and a truncated second-moment
argument. Thus, the factor $\alpha$, which is hidden in fixed-order
notation, is statistically unavoidable when the integer order grows
with $k$.

More generally, for every real order
$1.001\le\alpha\le c_0\log k$, we prove the uniform lower bound
$
   \Omega_{c_0}\!\left(
      \alpha k^{1-1/\alpha}
   \right).
$
For noninteger orders, we do not claim that this lower bound is sharp.
The matching upper bound in the growing-order regime remains restricted
to integer $\alpha$, where unbiased falling-factorial estimators of the
power sum are available.

For sufficiently large orders, min-entropy becomes a uniformly accurate
deterministic approximation to $H_\alpha$. In particular, the bound
$
   0
   \le
   H_\alpha(p)-H_\infty(p)
   \le
   \frac{\log k}{\alpha-1}
$
transfers the min-entropy estimator to high-order R\'{e}nyi entropy.
For fixed accuracy and confidence, this yields
$\Theta_\varepsilon(k\log k)$ sample complexity once
$\alpha$ is a sufficiently large multiple of $\log k$.

Several questions remain open. First, it would be valuable to obtain a
constructive upper bound with explicit growing-$\alpha$ dependence for
noninteger $\alpha>1$. Such a result would clarify the gap between the
uniform lower bound proved here and the substantially different
fixed-order behavior known for noninteger orders.

Second, the dependence of the lower bounds on the additive accuracy
parameter $\varepsilon$ is not optimized. Determining the minimax sample
complexity jointly as a function of $k$, $\alpha$, $\varepsilon$, and
$\delta$ would provide a more complete characterization beyond the
fixed-accuracy regime.

Third, it would be useful to design a single adaptive estimator that
moves automatically between the finite-order collision-estimation
regime and the min-entropy regime. Such an estimator would attain the
appropriate rate without requiring a separate choice of method
depending on whether $\alpha$ lies below, near, or above the logarithmic
threshold.
\section*{Acknowledgments} We acknowledge the use of ChatGPT 5.5 to assist in brainstorming possible proof strategies for the analysis of bounds for R\'{e}nyi entropy estimation. All mathematical statements, proofs, technical arguments, and conclusions were developed and verified by the authors, who take full responsibility for the content of the paper.

\begin{appendix}

\section{The $\alpha$-Dependent Multipliers in the Integer-Order R\'{e}nyi Entropy Bounds in \cite{AcharyaOrlitskySureshTyagi2017}}\label{app:Acharya_LB_UB}

\newcommand{\fall}[2]{#1^{\underline{#2}}}
\newcommand{\bbE}{\mathbb{E}}

Acharya, Orlitsky, Suresh, and Tyagi \cite{AcharyaOrlitskySureshTyagi2017} prove that, for every fixed integer
$\alpha>1$, the sample complexity of estimating $H_\alpha(p)$ over a
$k$-symbol distribution is $\Theta(k^{1-1/\alpha})$ for fixed accuracy and
confidence. In this section we analyze the constants hidden in the proofs of the
integer-order lower and upper bounds. We denote the additive entropy accuracy
as $\Delta$ and the failure probability as $\varepsilon$; the paper often uses
$\delta$ both for accuracy and for a perturbation parameter, so this notation
avoids ambiguity. The main conclusion is that the proof-derived upper
multiplier grows like $\alpha^2$ for fixed additive entropy accuracy when
$\alpha\to\infty$, while the best extraction from the lower-bound construction
does not grow with $\alpha$ for fixed small $\Delta$.  Consequently the
paper's fixed-$\alpha$ theorem should not be read as a uniform theorem for
arbitrary growing integer orders $\alpha=\alpha(k)$.

Throughout Section \ref{app:Acharya_LB_UB} we refer to theorems, lemmas and formulas within the paper of Acharya, Orlitsky, Suresh, and Tyagi~\cite{AcharyaOrlitskySureshTyagi2017}.

\subsection{Setup and convention}

For a distribution $p=(p_x)$, recall the power sum and R\'{e}nyi entropy
\begin{equation}
  P_\alpha(p) := \sum_x p_x^\alpha,
  \qquad
  H_\alpha(p) := \frac{1}{1-\alpha}\log P_\alpha(p),
  \qquad \alpha\neq 1.
\end{equation}
The paper works with fixed accuracy and confidence and defines asymptotic
notation so that hidden constants may depend on $\alpha$, the accuracy, and the
confidence. Thus, the statement
\begin{equation}
  S_\alpha(k,\Delta,\eps)=\Theta(k^{1-1/\alpha}),
  \qquad 1<\alpha\in\mathbb{N},
\end{equation}
is a fixed-$\alpha$ statement, not automatically a uniform statement over a
sequence $\alpha=\alpha(k)$.

Throughout this section set
\begin{equation}
  s_\alpha := 1-\frac{1}{\alpha}.
\end{equation}
We use natural logarithms.  If one uses base-2 logarithms, replace
$e^{-x}$ below by $2^{-x}$, or equivalently insert factors of $\log 2$ in the
accuracy parameter. This does not affect the dependence on $\alpha$.

We track multipliers in bounds of the schematic form
\begin{equation}
  S_\alpha(k,\Delta,\eps)
  \ge \Omega\!\left(C_1(\alpha;\Delta,\eps) \cdot k^{s_\alpha}\right),
  \qquad
  S_\alpha(k,\Delta,\eps)
  \le O\!\left(C_2(\alpha;\Delta,\eps) \cdot k^{s_\alpha}\right).
\end{equation}
When $\Delta$ and $\eps$ are fixed, one often suppresses them and writes $C_i(\alpha)$.

\subsection{Upper-bound multiplier $C_2(\alpha)$}

\subsubsection{The variance bound in Theorem 11 in \cite{AcharyaOrlitskySureshTyagi2017}}

For integer $\alpha>1$, the paper estimates $P_\alpha(p)$ by the
bias-corrected estimator
\begin{equation}
  \widehat P_\alpha^{\,u}
  := \sum_x \frac{\fall{N_x}{\alpha}}{n^\alpha},
\end{equation}
where $N_x$ is the multiplicity of symbol $x$ under Poisson sampling and
$\fall{N_x}{\alpha}=N_x(N_x-1)\cdots(N_x-\alpha+1)$.  This estimator is unbiased:
\begin{equation}
  \bbE[\widehat P_\alpha^{\,u}] = P_\alpha(p).
\end{equation}
The proof of Theorem 11 bounds its relative variance as
\begin{equation}
\label{eq:upper-var}
  \frac{\Var(\widehat P_\alpha^{\,u})}{P_\alpha(p)^2}
  \le
  \sum_{r=0}^{\alpha-1}
  \left(
    \frac{\alpha^2 k^{(\alpha-1)/\alpha}}{n}
  \right)^{\alpha-r}.
\end{equation}
The paper then notes that the right side is at most $\rho^2/12$ if
\begin{equation}
\label{eq:paper-condition}
  \frac{\alpha^2 k^{1-1/\alpha}}{n}
  \le
  \frac{11\rho^2}{144},
\end{equation}
where here $\rho$ is a relative accuracy for estimating $P_\alpha(p)$.
Together with the median trick (Fact~\ref{fact:median-amplification}), this gives the power-sum guarantee
\begin{equation}
\label{eq:power-upper}
  S_\alpha(k,\Delta,\eps)
  \le
  K\,\frac{\alpha^2}{\rho^2}
  k^{1-1/\alpha}\log\frac{2}{\eps}
\end{equation}
for an absolute numerical constant $K$.  If one reads constants directly from
\eqref{eq:paper-condition} and Lemma 8, one may take $K$ to be a large absolute
constant, for example on the order of $36\cdot 144/11$ before fixed-sample and
Poissonization cleanups.  The exact numerical value is less important here than
the factor $\alpha^2/\rho^2$.

\subsubsection{Converting entropy accuracy $\Delta$ to power-sum accuracy $\rho$}

Assume
\begin{equation}
  \widehat P_\alpha = P_\alpha(p)(1+u),
  \qquad |u|\le \rho<1.
\end{equation}
Then
\begin{equation}
  |\widehat H_\alpha-H_\alpha(p)|
  = \frac{1}{\alpha-1}\, |\log(1+u)|.
\end{equation}
To guarantee additive entropy error at most $\Delta$, it is enough to choose
\begin{equation}
\label{eq:rho-delta}
  \rho = \rho_{\alpha,\Delta}
  := 1-e^{-(\alpha-1)\Delta}.
\end{equation}
Indeed, if $|u|\le \rho_{\alpha,\Delta}$, then
\begin{equation}
  1+u\ge 1-\rho_{\alpha,\Delta}=e^{-(\alpha-1)\Delta},
\end{equation}
and also
\begin{equation}
  1+u\le 1+\rho_{\alpha,\Delta}=2-e^{-(\alpha-1)\Delta}
  \le e^{(\alpha-1)\Delta},
\end{equation}
where the last inequality follows from $e^x+e^{-x}\ge 2$.  Hence
\begin{equation}
  e^{-(\alpha-1)\Delta}
  \le \frac{\widehat P_\alpha}{P_\alpha(p)}
  \le e^{(\alpha-1)\Delta},
\end{equation}
which implies $|\widehat H_\alpha-H_\alpha(p)|\le \Delta$.

Substituting \eqref{eq:rho-delta} into \eqref{eq:power-upper} yields
\begin{equation}
\label{eq:C2-main}
  S_\alpha(k,\Delta,\eps)
  \le
  O\!\left(
    \frac{\alpha^2}{\left(1-e^{-(\alpha-1)\Delta}\right)^2}
    k^{1-1/\alpha}\log\frac{1}{\eps}
  \right).
\end{equation}
Thus the multiplier extracted from the proof is
\begin{equation}
\boxed{
  C_2(\alpha;\Delta,\eps)
  \lesssim
  \frac{\alpha^2}{\left(1-e^{-(\alpha-1)\Delta}\right)^2}
  \log\frac{1}{\eps}.
}
\end{equation}

\subsubsection{Regimes for $C_2$}

Let $x=(\alpha-1)\Delta$.

If $x\ll 1$, then $1-e^{-x}\asymp x$, so
\begin{equation}
\label{eq:C2-small}
  C_2(\alpha;\Delta,\eps)
  \lesssim
  \left(\frac{\alpha}{\alpha-1}\right)^2
  \frac{1}{\Delta^2}\log\frac{1}{\eps}.
\end{equation}
For integer $\alpha\ge2$, $(\alpha/(\alpha-1))^2\le4$, so this is uniform in
$\alpha$ when the product $(\alpha-1)\Delta$ is small.

If $\Delta>0$ is fixed and $\alpha\to\infty$, then
$1-e^{-(\alpha-1)\Delta}\to1$, and therefore
\begin{equation}
\label{eq:C2-fixed-delta}
  C_2(\alpha;\Delta,\eps)
  \lesssim_{\Delta}
  \alpha^2\log\frac{1}{\eps}.
\end{equation}
This is the reason the upper-bound proof gives, for growing integer
$\alpha=\alpha(k)$ and fixed additive entropy accuracy $\Delta$,
\begin{equation}
\label{eq:growing-upper}
  S_{\alpha(k)}(k,\Delta,\eps)
  \le
  O_{\Delta}\!\left(
    \alpha(k)^2 k^{1-1/\alpha(k)}\log\frac{1}{\eps}
  \right),
\end{equation}
not $O(k^{1-1/\alpha(k)})$ with a $k$-independent constant.

\subsection{Lower-bound multiplier $C_1(\alpha)$}

\subsubsection{The two distributions in Theorem 15 in \cite{AcharyaOrlitskySureshTyagi2017}}

The lower-bound proof uses Le Cam's two-point method. Let
\begin{equation}
  a := k^{-s_\alpha}=k^{-(1-1/\alpha)},
\end{equation}
and introduce a perturbation parameter $t>0$.  Define distributions $p$ and
$q$ on $[k]$ by
\begin{equation}
  p_1=a,
  \qquad
  p_x=\frac{1-a}{k-1}\quad (2\le x\le k),
\end{equation}
\begin{equation}
  q_1=(1+t)a,
  \qquad
  q_x=\frac{1-(1+t)a}{k-1}\quad (2\le x\le k).
\end{equation}
The paper writes this perturbation as $\delta$; here it is $t$ to distinguish
it from the entropy accuracy $\Delta$.

The power sums are
\begin{equation}
  P_\alpha(p)=a^\alpha+\frac{(1-a)^\alpha}{(k-1)^{\alpha-1}},
  \qquad
  P_\alpha(q)=((1+t)a)^\alpha+
  \frac{(1-(1+t)a)^\alpha}{(k-1)^{\alpha-1}}.
\end{equation}
The paper displays the weaker bound
\begin{equation}
  \frac{|P_\alpha(q)-P_\alpha(p)|}{P_\alpha(p)}\ge \frac{t}{4}
\end{equation}
for $k$ sufficiently large.  This already proves the stated fixed-$\alpha$
lower bound.  However, the same proof contains the sharper information
$(1+t)^\alpha\ge1+\alpha t$, and retaining the full nonlinear term gives a
better description of the $\alpha$-dependence.

\subsubsection{A sharper extraction from the same construction}

First note that $a^\alpha=k^{1-\alpha}$.  Since $a\ge1/k$ for $\alpha\ge1$,
\begin{equation}
  \frac{(1-a)^\alpha}{(k-1)^{\alpha-1}}
  \le
  \frac{(1-1/k)^\alpha}{(k-1)^{\alpha-1}}
  = \frac{k-1}{k^\alpha}
  \le k^{1-\alpha}.
\end{equation}
Therefore
\begin{equation}
\label{eq:Pp-upper}
  P_\alpha(p)\le 2k^{1-\alpha}.
\end{equation}
Next write
\begin{align}
  P_\alpha(q)-P_\alpha(p)
  &=\big((1+t)^\alpha-1\big)k^{1-\alpha}  
  \nonumber\\
  &\quad
  -\frac{(1-a)^\alpha-(1-(1+t)a)^\alpha}{(k-1)^{\alpha-1}}.
\end{align}
By the mean value theorem,
\begin{equation}
  (1-a)^\alpha-(1-(1+t)a)^\alpha
  \le \alpha t a(1-a)^{\alpha-1}.
\end{equation}
Using $(1+t)^\alpha-1\ge \alpha t$ and $a\ge1/k$, the ratio of this negative
term to the positive term is at most
\begin{equation}
  a\left(\frac{1-a}{1-1/k}\right)^{\alpha-1}\le a=k^{-s_\alpha}.
\end{equation}
Hence, whenever $k^{s_\alpha}\ge2$,
\begin{equation}
\label{eq:power-diff-lower}
  P_\alpha(q)-P_\alpha(p)
  \ge
  \frac12\big((1+t)^\alpha-1\big)k^{1-\alpha}.
\end{equation}
Combining \eqref{eq:Pp-upper} and \eqref{eq:power-diff-lower},
\begin{equation}
\label{eq:relative-gap-nonlinear}
  \frac{P_\alpha(q)-P_\alpha(p)}{P_\alpha(p)}
  \ge
  \frac14\big((1+t)^\alpha-1\big).
\end{equation}
Thus the entropy separation obeys
\begin{equation}
\label{eq:entropy-gap-lower}
  |H_\alpha(p)-H_\alpha(q)|
  =\frac{1}{\alpha-1}\log\frac{P_\alpha(q)}{P_\alpha(p)}
  \ge
  \frac{1}{\alpha-1}
  \log\left(1+\frac14\big((1+t)^\alpha-1\big)\right).
\end{equation}

\subsubsection{Choosing the perturbation to get entropy gap $2\Delta$}

For Le Cam's method with an additive entropy estimator of accuracy $\Delta$, it
is enough to make the entropy gap at least $2\Delta$.  A convenient choice is
\begin{equation}
\label{eq:t-alpha-delta}
  t_{\alpha,\Delta}
  :=
  \left[1+4\big(e^{2(\alpha-1)\Delta}-1\big)\right]^{1/\alpha}-1.
\end{equation}
Then \eqref{eq:entropy-gap-lower} gives
\begin{equation}
  |H_\alpha(p)-H_\alpha(q)|\ge 2\Delta.
\end{equation}
For small fixed $\Delta$ this $t_{\alpha,\Delta}$ is below one for all large
$\alpha$; for instance, $t_{\alpha,\Delta}\to e^{2\Delta}-1$, so
$\Delta<\tfrac12\log2$ suffices eventually.  This is consistent with the
paper's lower bound, which is stated for sufficiently small accuracy.

The squared Hellinger distance is the one computed in the paper:
\begin{align}
  h^2(p,q)
  &=\left(\sqrt{a}-\sqrt{(1+t)a}\right)^2
  +\left(\sqrt{1-a}-\sqrt{1-(1+t)a}\right)^2 
  \nonumber\\
  &\le \frac{2t^2}{k^{s_\alpha}},
\end{align}
for $0<t\le1$ and $k$ large enough that the first probabilities are at most
$1/2$.  Consequently,
\begin{equation}
  \|p^n-q^n\|_{\TV}
  \le \sqrt{{n}h^2(p,q)}
  \le t\sqrt{\frac{2n}{k^{s_\alpha}}}.
\end{equation}
Choose a constant $\eta_\eps<1-2\eps$.  If
\begin{equation}
  n\le \eta_\eps^2\frac{k^{s_\alpha}}{2t^2},
\end{equation}
then $\|p^n-q^n\|_{\TV}\le\eta_\eps$, and Le Cam's lemma implies that no
estimator can have error at most $\Delta$ with failure probability at most
$\eps$ uniformly over both $p$ and $q$.  Taking $t=t_{\alpha,\Delta}$ gives
\begin{equation}
\label{eq:C1-main}
  S_\alpha(k,\Delta,\eps)
  \ge
  \Omega_\eps\!\left(
    \frac{1}{t_{\alpha,\Delta}^2}
    k^{1-1/\alpha}
  \right),
\end{equation}
where $t_{\alpha,\Delta}$ is defined in \eqref{eq:t-alpha-delta}.  Therefore
one may take the lower-bound multiplier, up to absolute numerical constants, as
\begin{equation}
\boxed{
  C_1(\alpha;\Delta,\eps)
  \gtrsim_\eps
  \frac{1}{\left(\left[1+4\big(e^{2(\alpha-1)\Delta}-1\big)\right]^{1/\alpha}-1\right)^2}.
}
\end{equation}

\subsubsection{Regimes for $C_1$}

Again let $x=(\alpha-1)\Delta$.

If $x\ll1$, then
\begin{equation}
  e^{2x}-1\sim 2x,
  \qquad
  t_{\alpha,\Delta}
  =\left(1+8x+o(x)\right)^{1/\alpha}-1
  \asymp \frac{\alpha-1}{\alpha}\Delta.
\end{equation}
Thus
\begin{equation}
\label{eq:C1-small}
  C_1(\alpha;\Delta,\eps)
  \gtrsim_\eps
  \left(\frac{\alpha}{\alpha-1}\right)^2\frac{1}{\Delta^2}.
\end{equation}
For integer $\alpha\ge2$, this is bounded below by a constant multiple of
$1/\Delta^2$.

If $\Delta>0$ is fixed and $\alpha\to\infty$, then
\begin{equation}
  t_{\alpha,\Delta}
  \longrightarrow e^{2\Delta}-1,
\end{equation}
and hence
\begin{equation}
\label{eq:C1-fixed-delta}
  C_1(\alpha;\Delta,\eps)
  \gtrsim_{\Delta,\eps}
  1.
\end{equation}
More explicitly, the limiting lower-bound multiplier is of order
\begin{equation}
  \frac{1}{(e^{2\Delta}-1)^2},
\end{equation}
which is approximately $1/(4\Delta^2)$ for small fixed $\Delta$.  In particular,
the best extraction from the two-point lower-bound proof does \emph{not} grow
like $\alpha^2$ as $\alpha\to\infty$ with fixed $\Delta$.

\subsection{What happens when $\alpha=\alpha(k)$ grows?}

The paper's theorem is fixed in $\alpha$.  Still, the displayed proof
calculations suggest the following proof-level behavior, subject to the side
conditions in the lower-bound construction and to the fact that the original
paper does not state a uniform triangular-array theorem.

\subsubsection{The $k$-power itself}

The main $k$-factor is
\begin{equation}
  k^{1-1/\alpha(k)}
  = k\exp\left(-\frac{\log k}{\alpha(k)}\right).
\end{equation}
Thus:
\begin{itemize}[leftmargin=2em]
  \item if $\alpha(k)\ll \log k$, then $k^{1-1/\alpha(k)}=o(k)$, although its exponent tends to $1$;
  \item if $\alpha(k)\sim c\log k$, then $k^{1-1/\alpha(k)}\sim e^{-1/c}k$;
  \item if $\alpha(k)\gg \log k$, then $k^{1-1/\alpha(k)}\sim k$.
\end{itemize}

\subsubsection{Fixed entropy accuracy $\Delta$}

For fixed $0<\Delta<\tfrac12\log2$ and fixed $\eps<1/2$, the optimized lower
construction gives
\begin{equation}
  S_{\alpha(k)}(k,\Delta,\eps)
  \ge
  \Omega_{\Delta,\eps}\!\left(k^{1-1/\alpha(k)}\right)
\end{equation}
whenever the construction's side conditions are satisfied.  The lower-bound
multiplier is bounded away from zero as $\alpha(k)\to\infty$; it does not grow
with $\alpha(k)$.

By contrast, the upper-bound proof gives
\begin{equation}
  S_{\alpha(k)}(k,\Delta,\eps)
  \le
  O_{\Delta}\!\left(
    \alpha(k)^2 k^{1-1/\alpha(k)}\log\frac{1}{\eps}
  \right).
\end{equation}
The factor $\alpha(k)^2$ comes directly from the variance bound
\eqref{eq:upper-var}.  For fixed $\Delta$, the entropy-to-power-sum tolerance
$1-e^{-(\alpha-1)\Delta}$ saturates at a constant instead of contributing a
cancelling factor of $(\alpha-1)^2\Delta^2$.

Therefore, for fixed additive entropy accuracy,
\begin{equation}
  C_1(\alpha(k);\Delta,\eps)=\Omega_{\Delta,\eps}(1),
  \qquad
  C_2(\alpha(k);\Delta,\eps)=O_{\Delta}\!\left(\alpha(k)^2\log\frac1\eps\right)
\end{equation}
from these proof extractions.  The claim that both multipliers grow with
$\alpha(k)$ is not supported by the optimized lower-bound proof; only the
upper-bound multiplier grows quadratically in this reading.

\subsubsection{Shrinking entropy accuracy, $(\alpha-1)\Delta\ll1$}

If the additive accuracy shrinks with $\alpha$ so that
$(\alpha-1)\Delta\ll1$, then the two multipliers have matching
$\alpha$-dependence:
\begin{equation}
  C_1(\alpha;\Delta,\eps)
  \gtrsim_\eps
  \left(\frac{\alpha}{\alpha-1}\right)^2\frac{1}{\Delta^2},
  \qquad
  C_2(\alpha;\Delta,\eps)
  \lesssim
  \left(\frac{\alpha}{\alpha-1}\right)^2
  \frac{1}{\Delta^2}\log\frac{1}{\eps}.
\end{equation}
For integer $\alpha\ge2$, the factor $(\alpha/(\alpha-1))^2$ is between $1$ and
$4$.  This is the regime in which the usual fixed-$\alpha$ statement hides only
constant factors in $\alpha$.

\subsection{Summary table}

\begin{center}
\begin{tabular}{@{}lll@{}}
\toprule
Regime & Lower multiplier from proof & Upper multiplier from proof \\
\midrule
$(\alpha-1)\Delta\ll1$
& $C_1\gtrsim_\eps \left(\frac{\alpha}{\alpha-1}\right)^2\Delta^{-2}$
& $C_2\lesssim \left(\frac{\alpha}{\alpha-1}\right)^2\Delta^{-2}\log(1/\eps)$ \\
\addlinespace
Fixed $\Delta$, $\alpha\to\infty$
& $C_1\gtrsim_{\Delta,\eps} 1$
& $C_2\lesssim_\Delta \alpha^2\log(1/\eps)$ \\
\bottomrule
\end{tabular}
\end{center}

The most important distinction is that the paper proves
$S_\alpha(k,\Delta,\eps)=\Theta(k^{1-1/\alpha})$ with $\alpha$ fixed.  If
$\alpha=\alpha(k)$ grows and $\Delta$ is fixed, the upper-bound proof yields an
extra factor of order $\alpha(k)^2$, while the lower-bound construction yields
only a constant multiplier times $k^{1-1/\alpha(k)}$.  Thus the fixed-$\alpha$
matching theorem does not, by itself, provide a uniform matching theorem for
growing integer orders.

\section{Extracting the $\alpha$-Dependence in the Renyi Entropy Lower Bound in \cite{ObremskiSkorski2017}}\label{app:Obremski_Skorski_LB}

\newcommand{\dd}{\Delta}
\newcommand{\R}{\mathcal R}

We will extract the factor in front of the lower bound expression of the lower bound $\Omega(k^{1-1/\alpha})$ for real $\alpha > 1$ by Obremski and Skorski \cite{ObremskiSkorski2017}, and show that it does not asymptotically depend on $\alpha$.

Fix the additive accuracy $\eps>0$ and the success probability $p>1/2$.
Only the alphabet size $k$ and the order $\alpha>1$ are allowed to vary.  Choose once and for all a fixed entropy gap
\[
  \dd>2\eps .
\]
All logarithms are base $2$, as in Obremski--Skorski \cite{ObremskiSkorski2017}.  Constants below may depend on the fixed numbers $\eps,p,\dd$, but not on $k$ or $\alpha$.

\paragraph{The quantitative step in the lower-bound proof.}
Corollary 9 of Obremski--Skorski \cite{ObremskiSkorski2017} says that if two distributions $X,Y$ satisfy
\[
  \TV(X,Y)\le \gamma
  \quad\text{and}\quad
  |H_\alpha(X)-H_\alpha(Y)|>2\eps,
\]
then the sample size is at least a constant multiple of $\gamma^{-1}$.  In the paper's $2/3$-success normalization this constant is $1/3$; for success probability $p$ it is $c_p:=2p-1$.

Use the pair from the proof of Lemma 11:
\[
  X=U_k,
  \qquad
  Y=\left(\frac1k+\gamma,\frac1k-\frac{\gamma}{k-1},\ldots,
           \frac1k-\frac{\gamma}{k-1}\right).
\]
Then $\TV(X,Y)=\gamma$.  Equation (10) in the paper gives
\[
\R
:=
\frac{\sum_x Y(x)^\alpha}{\sum_x X(x)^\alpha}
=
k^{-1}\left((1+k\gamma)^\alpha
+(k-1)\left(1-\gamma\frac{k}{k-1}\right)^\alpha\right).
\]
Since $Y$ is more peaked than $X$,
\[
  |H_\alpha(X)-H_\alpha(Y)|
  =\frac{1}{\alpha-1}\log \R .
\]
The displayed estimate immediately before equation (12) in the paper, obtained from Proposition 10, yields
\[
\R \ge
\begin{cases}
1+\dfrac{\alpha-1}{3}k^{\alpha-1}\gamma^\alpha,
& 1<\alpha<2 \quad (k\gamma>1),\\[8pt]
1+k^{\alpha-1}\gamma^\alpha,
& \alpha\ge 2.
\end{cases}
\]
The condition $k\gamma>1$ in the first line is satisfied for all sufficiently large $k$ with the choice below; the remaining bounded-$k^{1-1/\alpha}$ cases are absorbed into the absolute constant.

Now set
\[
\gamma=A_\dd(\alpha)k^{-1+1/\alpha},
\]
where
\[
A_\dd(\alpha)=
\begin{cases}
\left(\dfrac{3\bigl(2^{(\alpha-1)\dd}-1\bigr)}{\alpha-1}\right)^{1/\alpha},
&1<\alpha<2,\\[14pt]
\left(2^{(\alpha-1)\dd}-1\right)^{1/\alpha},
&\alpha\ge 2.
\end{cases}
\]
With this choice, the preceding lower bound on $\R$ gives
\[
  \R\ge 2^{(\alpha-1)\dd},
  \qquad\text{hence}\qquad
  |H_\alpha(X)-H_\alpha(Y)|\ge \dd>2\eps .
\]
Therefore Corollary 9 gives
\[
  n\ge c_p\,A_\dd(\alpha)^{-1}k^{1-1/\alpha}.
\]
Thus the constant multiplying $k^{1-1/\alpha}$ extracted from their proof is
\[
\boxed{
C_{\dd,p}(\alpha)=c_p\begin{cases}
\left(\dfrac{\alpha-1}{3\bigl(2^{(\alpha-1)\dd}-1\bigr)}\right)^{1/\alpha},
&1<\alpha<2,\\[14pt]
\left(2^{(\alpha-1)\dd}-1\right)^{-1/\alpha},
&\alpha\ge 2.
\end{cases}}
\]
For the paper's $2/3$-success convention, take $c_p=1/3$.

\paragraph{Uniform boundedness in $\alpha$.}
We show that $C_{\dd,p}(\alpha)=\Theta_{\dd,p}(1)$ on the whole range $\alpha>1$.
The prefactor $c_p$ is fixed and positive, so it suffices to inspect the two branches.

First let $1<\alpha<2$ and put $t=\alpha-1\in(0,1)$.  The inner factor is
\[
  f(t)=\frac{t}{3(2^{t\dd}-1)}.
\]
It extends continuously to $t=0$ with
\[
  f(0):=\lim_{t\downarrow0}\frac{t}{3(2^{t\dd}-1)}
  =\frac{1}{3\dd\log 2},
\]
and it is positive and finite at $t=1$.  Hence
\[
  g(t):=f(t)^{1/(1+t)}
\]
extends to a positive continuous function on the compact interval $[0,1]$.  Therefore there are constants $0<m_1(\dd)\le M_1(\dd)<\infty$ such that
\[
  m_1(\dd)
  \le
  \left(\frac{\alpha-1}{3(2^{(\alpha-1)\dd}-1)}\right)^{1/\alpha}
  \le
  M_1(\dd)
  \qquad (1<\alpha<2).
\]

Next let $\alpha\ge2$ and put $t=\alpha-1\ge1$.  The second branch is
\[
  h(t)=\left(2^{t\dd}-1\right)^{-1/(t+1)} .
\]
It is positive and continuous for $t\ge1$, and
\[
  \log_2 h(t)
  =-\frac{\log_2(2^{t\dd}-1)}{t+1}
  \longrightarrow -\dd
  \qquad(t\to\infty).
\]
Thus
\[
  h(t)\longrightarrow 2^{-\dd}>0.
\]
Consequently $h$ is bounded above and below by positive constants on $[1,\infty)$: for large $t$ this follows from the limit, and on any finite interval it follows from continuity and positivity.

Combining the two regimes, there exist fixed constants
\[
  0<c_-(\dd,p)\le c_+(\dd,p)<\infty
\]
such that, for every real $\alpha>1$,
\[
  c_-(\dd,p)
  \le C_{\dd,p}(\alpha)
  \le c_+(\dd,p).
\]
Equivalently,
\[
  C_{\dd,p}(\alpha)=\Theta_{\dd,p}(1)
  \qquad\text{uniformly for all }\alpha>1.
\]
In particular,
\[
  \lim_{\alpha\to\infty}C_{\dd,p}(\alpha)=c_p\,2^{-\dd},
\]
so the lower-bound constant stays bounded away from zero even when $\alpha$ grows without bound.

\paragraph{Conclusion.}
For fixed additive accuracy and fixed success probability, the lower-bound proof gives
\[
  n\ge C_{\dd,p}(\alpha)k^{1-1/\alpha},
  \qquad
  C_{\dd,p}(\alpha)=\Theta(1),
\]
where the $\Theta(1)$ constants are independent of both $k$ and $\alpha$.

\section{Falling-factorial fact}
Recall, that for an integer $r\ge 0$, the falling factorial is
\[
    (z)_r=z(z-1)\cdots(z-r+1),
\]
with the convention $(z)_0=1$. For an integer $n\ge 0$, if $r>n$ then $(n)_r=0$.

For a set $A\subseteq[k]$ and an integer $r\ge 1$, write
\[
    P_r(A)=\sum_{x\in A}p_x^r.
\]

\begin{fact}[Factorial moment identity for a binomial random variable]\label{app:Fact_binomial-factorial}
Let
\[
    Y \sim \Bin(n,q),
\]
where $n\ge 0$ is an integer and $q\in[0,1]$. Then, for every integer $r\ge 0$,
\[
    \E[(Y)_r]=(n)_r q^r .
\]
In particular, if $0\le r\le n$, then
\[
    \E\bigl[Y(Y-1)\cdots(Y-r+1)\bigr]
    =n(n-1)\cdots(n-r+1)q^r,
\]
and if $r>n$, both sides are zero.
\end{fact}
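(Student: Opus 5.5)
The plan is to prove the identity by the indicator (combinatorial) representation, so that we never manipulate sums of binomial coefficients directly. Write $Y=\sum_{t=1}^n B_t$ with $B_1,\ldots,B_n$ i.i.d.\ Bernoulli$(q)$. The first step is the pointwise identity
\[
   (Y)_r=\sum_{(i_1,\ldots,i_r)} B_{i_1}B_{i_2}\cdots B_{i_r},
\]
where the sum ranges over ordered $r$-tuples of pairwise distinct indices in $[n]$. This holds because the product $B_{i_1}\cdots B_{i_r}$ equals $1$ exactly when all chosen indices lie in the set $\{t:B_t=1\}$. That set has $Y$ elements, and the number of ordered $r$-tuples of distinct elements drawn from a $Y$-element set is $Y(Y-1)\cdots(Y-r+1)=(Y)_r$. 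This count is $0$ when $r>Y$, which matches the convention $(Y)_r=0$ in that case.

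The second step is to take expectations using linearity. For distinct indices, independence gives $\E[B_{i_1}\cdots B_{i_r}]=q^r$. The number of ordered $r$-tuples of distinct indices in $[n]$ is $(n)_r$. Hence $\E[(Y)_r]=(n)_r q^r$.

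It remains to handle the edge cases. For $r=0$ both sides equal $1$, since the sum has the single empty tuple and $(z)_0=1$. For $r>n$ there are no tuples, so both sides vanish. The cases $q\in\{0,1\}$ need no separate treatment, since the argument never divides by $q$ or $1-q$.

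There is no real obstacle here. The only point needing care is the counting identity in the first step, and in particular its consistency with the convention $(x)_r=0$ for integers $r>x$.

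As a cross-check, one could instead use the probability generating function. Since $\E[s^Y]=(1-q+qs)^n$, differentiating $r$ times and evaluating at $s=1$ gives $(n)_r q^r$ directly.
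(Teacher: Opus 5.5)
Your proposal is correct and follows the paper's proof. Both arguments write $Y$ as a sum of Bernoulli variables, use the pointwise identity $(Y)_r=\sum B_{i_1}\cdots B_{i_r}$ over ordered $r$-tuples of distinct indices, and conclude by linearity, independence, and the count $(n)_r$, treating the cases $r=0$ and $r>n$ the same way (your generating-function cross-check is a valid extra but is not needed).
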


\begin{proof}
Write $Y$ as a sum of independent Bernoulli random variables:
\[
    Y=B_1+\cdots+B_n,
    \qquad
    B_i\sim \operatorname{Bernoulli}(q),
\]
with $B_1,\ldots,B_n$ independent.

For $r=0$, the identity is immediate because $(Y)_0=(n)_0=q^0=1$. Now assume $r\ge 1$. Let
\[
    [n]^r_{\neq}
    =
    \{(i_1,\ldots,i_r)\in[n]^r:
      i_a\neq i_b\text{ whenever }a\neq b\}
\]
be the set of ordered $r$-tuples of distinct indices. We claim that
\[
    (Y)_r
    =
    \sum_{(i_1,\ldots,i_r)\in [n]^r_{\neq}}
    B_{i_1}B_{i_2}\cdots B_{i_r}.
\]
Indeed, after the Bernoulli variables are realized, suppose exactly $Y=y$ of them equal one. The right-hand side counts the number of ordered $r$-tuples chosen from those $y$ successful indices, which is exactly
\[
    y(y-1)\cdots(y-r+1)=(y)_r.
\]
Thus the displayed identity holds pointwise.

Taking expectations and using independence gives
\begin{align*}
    \E[(Y)_r]
    &=
    \sum_{(i_1,\ldots,i_r)\in [n]^r_{\neq}}
    \E[B_{i_1}B_{i_2}\cdots B_{i_r}] \\
    &=
    \sum_{(i_1,\ldots,i_r)\in [n]^r_{\neq}}
    q^r .
\end{align*}
The number of ordered $r$-tuples of distinct indices from $[n]$ is
\[
    |[n]^r_{\neq}|=(n)_r.
\]
Therefore
\[
    \E[(Y)_r]=(n)_r q^r.
\]
If $r>n$, then $[n]^r_{\neq}$ is empty and $(n)_r=0$, so the same formula remains valid.
\end{proof}

\subsection{Application to the falling-factorial estimator}\label{app:appl_falling}
Let $X_1,\ldots,X_m$ be i.i.d. samples from a distribution $p$ on $[k]$, and for a fixed symbol $x\in[k]$ define the count
\[
    N_x=\sum_{i=1}^m \one\{X_i=x\}.
\]
Then
\[
    N_x\sim \Bin(m,p_x).
\]
Applying Fact~\ref{app:Fact_binomial-factorial} with $n=m$, $q=p_x$, and $r=\alpha$ gives
\[
    \E[(N_x)_\alpha]=(m)_\alpha p_x^\alpha .
\]
Consequently, for the batch estimator
\[
    \widehat P_\alpha
    =
    \frac{1}{(m)_\alpha}
    \sum_{x=1}^k (N_x)_\alpha,
\]
one obtains
\[
    \E[\widehat P_\alpha]
    =
    \frac{1}{(m)_\alpha}
    \sum_{x=1}^k \E[(N_x)_\alpha]
    =
    \sum_{x=1}^k p_x^\alpha
    =
    P_\alpha(p).
\]
This is the unbiasedness identity for the integer-order power-sum estimator.

\begin{fact}[Tuple-sum form of the restricted estimator]\label{app:fact:tuple-sum-estimator}
Let $\alpha\ge 1$ and $m\ge \alpha$ be integers. Let $X_1,\ldots,X_m$ be samples taking values in $[k]$. For $A\subseteq[k]$, define
\[
    N_x=\sum_{i=1}^m \one\{X_i=x\},
    \qquad x\in[k],
\]
and define the restricted falling-factorial estimator
\[
    \widehat P_\alpha(A)
    =
    \frac{1}{(m)_\alpha}
    \sum_{x\in A}(N_x)_\alpha .
\]
Let
\[
    \Tcal
    =
    \{(i_1,\ldots,i_\alpha)\in[m]^\alpha:
      i_a\neq i_b\text{ whenever }a\neq b\}
\]
be the set of ordered $\alpha$-tuples of distinct sample positions. For $t=(i_1,\ldots,i_\alpha)\in\Tcal$, define
\[
    I_t^A
    =
    \one\{X_{i_1}=X_{i_2}=\cdots=X_{i_\alpha}\in A\}.
\]
Then the estimator has the exact pointwise representation
\[
    \widehat P_\alpha(A)
    =
    \frac{1}{(m)_\alpha}
    \sum_{t\in\Tcal} I_t^A .
\]
\end{fact}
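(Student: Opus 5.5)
The plan is a pointwise double-counting argument: fix an outcome $(X_1,\ldots,X_m)$, evaluate the tuple sum $\sum_{t\in\Tcal}I_t^A$ by grouping tuples according to the common symbol they hit, and check that this gives $\sum_{x\in A}(N_x)_\alpha$. Dividing by $(m)_\alpha$ then yields the claim. No probability is involved; the identity holds for every realization, exactly as in the proof of Fact~\ref{app:Fact_binomial-factorial}.

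\textbf{Step 1 (partition the tuples).} For a tuple $t=(i_1,\ldots,i_\alpha)\in\Tcal$, we have $I_t^A=1$ exactly when all $X_{i_r}$ equal a single symbol $x$ and $x\in A$. Such an $x$ is uniquely determined by $t$. Therefore
\[
    \sum_{t\in\Tcal}I_t^A=\sum_{x\in A}\bigl|\{t\in\Tcal: X_{i_1}=\cdots=X_{i_\alpha}=x\}\bigr|,
\]
since the sets on the right are pairwise disjoint as $x$ ranges over $A$.

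\textbf{Step 2 (count each class).} Let $S_x=\{i\in[m]:X_i=x\}$, so that $|S_x|=N_x$. A tuple of distinct positions has all its entries equal to $x$ exactly when it is an ordered $\alpha$-tuple of distinct elements of $S_x$. The number of such tuples is $N_x(N_x-1)\cdots(N_x-\alpha+1)=(N_x)_\alpha$.

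This count also covers $N_x<\alpha$: then no such tuple exists, and $(N_x)_\alpha=0$ by the convention recorded before Fact~\ref{app:Fact_binomial-factorial}. This degenerate case is the only point needing care. The hypothesis $m\ge\alpha$ ensures $(m)_\alpha=|\Tcal|>0$, so the normalization is well defined.

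\textbf{Step 3 (conclude).} Combining the two steps gives $\sum_{t}I_t^A=\sum_{x\in A}(N_x)_\alpha$ pointwise. Dividing by $(m)_\alpha$ gives the stated representation of $\widehat P_\alpha(A)$.
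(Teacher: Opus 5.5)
Your proof is correct and follows essentially the same route as the paper: both fix a realization, identify $(N_x)_\alpha$ as the number of ordered $\alpha$-tuples of distinct positions all carrying symbol $x$, and regroup the tuple sum by the unique common symbol in $A$. The paper phrases the regrouping as an interchange of sums together with the identity $\sum_{x\in A}\prod_{r=1}^{\alpha}\one\{X_{i_r}=x\}=I_t^A$, which is exactly your disjoint-partition argument in Step~1.
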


\begin{proof}
The proof uses only the definition of the estimator and a counting identity for falling factorials. Fix a realization of the samples $X_1,\ldots,X_m$.

For a fixed symbol $x\in[k]$, the quantity $(N_x)_\alpha$ counts the number of ordered $\alpha$-tuples of distinct sample positions whose entries are all equal to $x$. Equivalently,
\[
    (N_x)_\alpha
    =
    \sum_{t=(i_1,\ldots,i_\alpha)\in\Tcal}
    \prod_{r=1}^\alpha \one\{X_{i_r}=x\}.
\]
Indeed, if $N_x=n_x$, then there are exactly
\[
    n_x(n_x-1)\cdots(n_x-\alpha+1)=(n_x)_\alpha
\]
ordered ways to choose $\alpha$ distinct positions among the $n_x$ positions at which the sample value is $x$.

Substituting this identity into the definition of $\widehat P_\alpha(A)$ gives
\begin{align*}
    \widehat P_\alpha(A)
    &=
    \frac{1}{(m)_\alpha}
    \sum_{x\in A}(N_x)_\alpha \\
    &=
    \frac{1}{(m)_\alpha}
    \sum_{x\in A}
    \sum_{t=(i_1,\ldots,i_\alpha)\in\Tcal}
    \prod_{r=1}^\alpha \one\{X_{i_r}=x\} \\
    &=
    \frac{1}{(m)_\alpha}
    \sum_{t=(i_1,\ldots,i_\alpha)\in\Tcal}
    \sum_{x\in A}
    \prod_{r=1}^\alpha \one\{X_{i_r}=x\}.
\end{align*}
For a fixed tuple $t=(i_1,\ldots,i_\alpha)$, the inner sum
\[
    \sum_{x\in A}
    \prod_{r=1}^\alpha \one\{X_{i_r}=x\}
\]
equals $1$ exactly when all values $X_{i_1},\ldots,X_{i_\alpha}$ are equal to one common symbol in $A$, and equals $0$ otherwise. In other words,
\[
    \sum_{x\in A}
    \prod_{r=1}^\alpha \one\{X_{i_r}=x\}
    =
    I_t^A.
\]
Therefore
\[
    \widehat P_\alpha(A)
    =
    \frac{1}{(m)_\alpha}
    \sum_{t\in\Tcal} I_t^A,
\]
as claimed.
\end{proof}

\begin{fact}[Joint moment of overlapping tuple indicators]\label{app:fact:overlap-indicator-moment}
Let $\alpha\ge 1$ and $m\ge \alpha$ be integers, and let $X_1,\ldots,X_m$ be i.i.d. samples from a distribution $p$ on $[k]$. Let $A\subseteq[k]$. Let $\Tcal$ and $I_t^A$ be defined as in Fact~\ref{app:fact:tuple-sum-estimator}. For $t=(i_1,\ldots,i_\alpha)\in\Tcal$, write
\[
    S(t)=\{i_1,\ldots,i_\alpha\}
\]
for the set of sample positions appearing in the tuple. Suppose $s,t\in\Tcal$ satisfy
\[
    |S(s)\cap S(t)|=j
\]
for some $1\le j\le \alpha$. Then
\[
    \E\!\left[I_s^A I_t^A\right]
    =
    \sum_{x\in A}p_x^{2\alpha-j}
    =
    P_{2\alpha-j}(A).
\]
\end{fact}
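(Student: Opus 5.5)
The plan is to compute the expectation directly by conditioning on the common symbol. Write the combined index set as $S(s)\cup S(t)$; since $|S(s)|=|S(t)|=\alpha$ and $|S(s)\cap S(t)|=j$, this union has exactly $2\alpha-j$ distinct sample positions. The product $I_s^AI_t^A$ equals one precisely when all entries indexed by $S(s)$ share a common symbol $x\in A$ and all entries indexed by $S(t)$ share a common symbol $y\in A$. Because $j\ge1$, the two index sets share at least one position $i$, and then $x=X_i=y$. So the event $\{I_s^AI_t^A=1\}$ coincides with the event that all $2\alpha-j$ samples in $S(s)\cup S(t)$ equal a single symbol of $A$.

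Next, decompose this event as the disjoint union over $x\in A$ of the events $E_x:=\{X_i=x \text{ for all } i\in S(s)\cup S(t)\}$. The union is disjoint because distinct $x$ give incompatible values at any fixed position. By independence of $X_1,\ldots,X_m$, each $E_x$ has probability $p_x^{2\alpha-j}$. Summing over $x\in A$ gives $\E[I_s^AI_t^A]=\Prob(\bigcup_{x\in A}E_x)=\sum_{x\in A}p_x^{2\alpha-j}=P_{2\alpha-j}(A)$.

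There is no real obstacle here. The only point needing care is the use of $j\ge1$ to force the two common symbols to agree. When $j=0$, the expectation instead factorizes as $P_\alpha(A)^2$, which is why the statement excludes that case. I would also note explicitly that the positions inside each tuple are distinct, so the count $2\alpha-j$ is exact rather than an upper bound.
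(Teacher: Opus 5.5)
Your proposal is correct and follows the paper's proof essentially step for step. Both arguments count $2\alpha-j$ positions in $S(s)\cup S(t)$, use a shared position (available because $j\ge 1$) to force the two common symbols to coincide, and split the event disjointly over $x\in A$ into events of probability $p_x^{2\alpha-j}$ by independence; your remark on the $j=0$ case also matches the paper's accompanying remark.
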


\begin{proof}
Let
\[
    U=S(s)\cup S(t).
\]
Since both $s$ and $t$ contain $\alpha$ distinct sample positions and their position sets overlap in exactly $j$ positions,
\[
    |U|=|S(s)|+|S(t)|-|S(s)\cap S(t)|=2\alpha-j.
\]

We first identify the event $\{I_s^A I_t^A=1\}$. If $I_s^A I_t^A=1$, then there exist symbols $a,b\in A$ such that all sample values in the positions of $s$ are equal to $a$, while all sample values in the positions of $t$ are equal to $b$. Because $j\ge 1$, the two tuples share at least one sample position. At a shared position the sample value must equal both $a$ and $b$, so $a=b$. Hence all sample values in the union $U$ are equal to one common symbol in $A$.

Conversely, if all sample values in the positions of $U$ are equal to one common symbol $x\in A$, then both $I_s^A$ and $I_t^A$ are equal to one. Therefore
\[
    \{I_s^A I_t^A=1\}
    =
    \bigcup_{x\in A}
    \{X_i=x\text{ for every }i\in U\}.
\]
The events in this union are disjoint for different $x$. Since the samples are independent,
\[
    \Prob\{X_i=x\text{ for every }i\in U\}
    =
    p_x^{|U|}
    =
    p_x^{2\alpha-j}.
\]
Thus
\[
    \E\!\left[I_s^A I_t^A\right]
    =
    \Prob\{I_s^A I_t^A=1\}
    =
    \sum_{x\in A}p_x^{2\alpha-j}
    =
    P_{2\alpha-j}(A),
\]
as claimed.
\end{proof}

\begin{remark}
The condition $j\ge 1$ in Fact~\ref{app:fact:overlap-indicator-moment} is essential. If $j=0$, then the two tuples use disjoint sample positions. In that case the two indicators are independent and
\[
    \E\!\left[I_s^A I_t^A\right]
    =
    P_\alpha(A)^2,
\]
not $P_{2\alpha}(A)$ in general.
\end{remark}

\section{Standard Inequalities and Probabilistic Facts}
\label{app:standard-facts}

This appendix records several standard inequalities used throughout the
paper. We collect them here to avoid repeating routine arguments.
Our main references are \cite{mitzenmacher2017probability} for basic
probability inequalities and concentration,
\cite{Tsybakov2009} for statistical distances and testing,
\cite{hardy1952inequalities} for classical analytic inequalities, and
\cite{barbour1992poisson} for Poisson distributions and
Poissonization. The power-sum formulations in
Fact~\ref{fact:power-sums} are also used explicitly in the analysis of
R\'enyi entropy estimation in
\cite{AcharyaOrlitskySureshTyagi2017}.

\begin{fact}[Basic probability inequalities]
\label{fact:basic-probability}
The following are the union bound, Markov's inequality, Chebyshev's
inequality, and the Cauchy--Schwarz inequality; see
\cite[Chapters~3--4]{mitzenmacher2017probability}.
For events $A_1,A_2,\ldots$,
\[
   \Prob\left(\bigcup_i A_i\right)
   \le
   \sum_i\Prob(A_i).
\]
If $X\ge0$ and $t>0$, then
\[
   \Prob(X\ge t)
   \le
   \frac{\E X}{t}.
\]
If $X$ has finite variance, then
\[
   \Prob\bigl(|X-\E X|\ge t\bigr)
   \le
   \frac{\Var(X)}{t^2}.
\]
Finally, if $X$ is square-integrable, then
\[
   \E|X|
   \le
   \sqrt{\E[X^2]}.
\]
\end{fact}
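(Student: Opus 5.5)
These are standard facts, so the plan is to derive each from elementary principles in a few lines. For the union bound, I will first disjointify: set $B_1=A_1$ and $B_i=A_i\setminus\bigcup_{l<i}A_l$. The $B_i$ are disjoint, $B_i\subseteq A_i$, and $\bigcup_i B_i=\bigcup_i A_i$. Countable additivity and monotonicity then give $\Prob(\bigcup_i A_i)=\sum_i\Prob(B_i)\le\sum_i\Prob(A_i)$.

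For Markov's inequality, I will use the pointwise bound $t\,\one\{X\ge t\}\le X$, which holds because $X\ge0$. Taking expectations gives $t\,\Prob(X\ge t)\le\E X$, and dividing by $t>0$ finishes. Chebyshev's inequality then follows by applying Markov to the nonnegative variable $(X-\E X)^2$ at level $t^2$. The events $\{|X-\E X|\ge t\}$ and $\{(X-\E X)^2\ge t^2\}$ coincide, and $\E(X-\E X)^2=\Var(X)$.

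For the last inequality, I will use $\Var(|X|)=\E[X^2]-(\E|X|)^2\ge0$, which gives $(\E|X|)^2\le\E[X^2]$; taking square roots yields the claim. Alternatively, apply the Cauchy--Schwarz inequality $\E[|X|\cdot1]\le(\E X^2)^{1/2}(\E 1)^{1/2}$. Square-integrability guarantees that $\E|X|$ is finite, so the manipulation is legitimate.

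No step is a genuine obstacle. The only point needing care is the measure-theoretic justification of the countable case in the union bound, which relies on the disjointification above together with $\sigma$-additivity.
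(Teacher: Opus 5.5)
Your proof is correct and follows the route the paper relies on. The paper gives no argument of its own and simply cites \cite{mitzenmacher2017probability}, whose standard proofs are exactly your disjointification, pointwise indicator bound, Markov applied to $(X-\E X)^2$, and nonnegativity of $\Var(|X|)$ (or Cauchy--Schwarz against the constant $1$). One implicit assumption is worth stating: Chebyshev's inequality needs $t>0$, since only then do the events $\{|X-\E X|\ge t\}$ and $\{(X-\E X)^2\ge t^2\}$ coincide.
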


\begin{fact}[Total variation and $\chi^2$ divergence]
\label{fact:tv-chi-square}
These identities and inequalities are standard properties of
statistical distances; see
\cite[Section~2.4]{Tsybakov2009}.
Let $P\ll Q$, and write $L=dP/dQ$. Then
\[
   \TV(P,Q)
   =
   \frac12\E_Q|L-1|.
\]
Moreover,
\[
   \chi^2(P\|Q)
   =
   \E_Q[(L-1)^2]
   =
   \E_Q[L^2]-1,
\]
and
\[
   \TV(P,Q)
   \le
   \frac12\sqrt{\chi^2(P\|Q)}.
\]
\end{fact}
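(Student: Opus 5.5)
The statement to prove is Fact~\ref{fact:tv-chi-square}. I would use only the Radon--Nikodym representation and Cauchy--Schwarz, in three short steps.

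\textbf{Step 1: total variation.} Since $P\ll Q$, we can write $P(A)=\E_Q[L\one_A]$ for every measurable $A$. Then
\[
   P(A)-Q(A)=\E_Q[(L-1)\one_A].
\]
This is maximized by $A^*=\{L>1\}$. Because $\E_Q[L-1]=0$, the positive part and the negative part of $L-1$ have equal $Q$-integrals. Hence
\[
   \sup_A|P(A)-Q(A)|=\E_Q[(L-1)_+]=\tfrac12\E_Q|L-1|.
\]
This is Scheff\'e's identity.

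\textbf{Step 2: the $\chi^2$ identity.} Taking $\chi^2(P\|Q):=\E_Q[(L-1)^2]$ as the definition, expand the square and use $\E_Q L=1$:
\[
   \E_Q[(L-1)^2]=\E_Q[L^2]-2\E_Q[L]+1=\E_Q[L^2]-1.
\]
If $\E_Q[L^2]=\infty$, both sides are infinite and the identity still holds.

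\textbf{Step 3: the comparison.} Apply the Cauchy--Schwarz inequality from Fact~\ref{fact:basic-probability} to $X=L-1$ under $Q$:
\[
   \E_Q|L-1|\le\sqrt{\E_Q[(L-1)^2]}.
\]
Combining this with Step 1 gives $\TV(P,Q)\le\tfrac12\sqrt{\chi^2(P\|Q)}$.

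The only step that needs any care is Step 1. There one must check that the supremum over events is attained at $\{L>1\}$, and that the equal split between the positive and negative parts of $L-1$ follows from $\E_Q L=1$. Steps 2 and 3 are routine.
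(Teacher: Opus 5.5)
Your proof is correct and follows essentially the same route as the paper's own verification of these facts (given inside the proof of Lemma~\ref{lem:likelihood}): the $\chi^2$ identity by expanding the square and using $\E_Q L=1$, and the comparison by Cauchy--Schwarz applied to $L-1$. The only difference is that you derive Scheff\'e's identity $\TV(P,Q)=\tfrac12\E_Q|L-1|$ from the supremum-over-events definition, whereas the paper takes that representation as its starting point and cites Tsybakov.
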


\begin{fact}[Testing and total variation]
\label{fact:testing-tv}
The following identity is the two-point testing characterization of
total variation distance; see
\cite[Section~2.2 and Lemma~2.2]{Tsybakov2009}.
For any probability measures $P$ and $Q$,
\[
   \inf_{\phi}
   \frac{
      P(\phi=1)+Q(\phi=0)
   }{2}
   =
   \frac{1-\TV(P,Q)}{2},
\]
where the infimum is over all tests $\phi$ taking values in
$\{0,1\}$. In particular, if $\TV(P,Q)\le\tau$, then every test has
average error at least $(1-\tau)/2$.
\end{fact}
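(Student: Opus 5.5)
The statement just above is Fact~\ref{fact:testing-tv}: for any two probability measures $P,Q$, the minimal average testing error equals $(1-\TV(P,Q))/2$. My plan is the standard two-sided argument: show every test has average error at least $(1-\TV)/2$, then exhibit the likelihood-ratio test that attains it.

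\textbf{Step 1 (reduce tests to sets).} Write $\mu:=P+Q$, so both measures are absolutely continuous with respect to $\mu$, and let $f:=dP/d\mu$ and $g:=dQ/d\mu$. A deterministic test $\phi$ is the indicator of a measurable set $A=\{\phi=1\}$. Its two error terms are $P(\phi=1)=P(A)$ and $Q(\phi=0)=1-Q(A)$, so
\[
P(\phi=1)+Q(\phi=0)=1-\bigl(Q(A)-P(A)\bigr).
\]
Hence minimizing the average error is the same as maximizing $Q(A)-P(A)$ over measurable $A$.

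\textbf{Step 2 (identify the optimum).} Since
\[
Q(A)-P(A)=\int_A(g-f)\,d\mu\le\int_{\{g>f\}}(g-f)\,d\mu,
\]
the maximum is attained at $A^\star=\{g>f\}$, the likelihood-ratio test.

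\textbf{Step 3 (match with total variation).} The value at $A^\star$ is the positive part of $g-f$. Because $\int(g-f)\,d\mu=0$, the positive and negative parts integrate to the same amount, so
\[
\int(g-f)_+\,d\mu=\tfrac12\int|g-f|\,d\mu=\TV(P,Q).
\]
This is the same Scheff\'e identity used in \eqref{eq:TVD_Like}. Therefore the infimum over deterministic tests equals $(1-\TV(P,Q))/2$, and it is attained by $A^\star$.

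\textbf{Step 4 (randomized tests and the corollary).} If randomized tests $\phi\in[0,1]$ are allowed, the average error becomes
\[
1-\int\phi\,(g-f)\,d\mu,
\]
which is linear in $\phi$ and still maximized by $\phi=\one_{A^\star}$, so the infimum does not change. The ``in particular'' clause is immediate: $(1-\TV)/2$ is decreasing in $\TV$, so $\TV(P,Q)\le\tau$ gives average error at least $(1-\tau)/2$.

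No step is a genuine obstacle. The only care needed is measure-theoretic: pick a dominating measure so the densities exist and $\{g>f\}$ is measurable; the choice $\mu=P+Q$ settles both.
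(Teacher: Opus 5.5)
Your proof is correct and is the standard Scheff\'e/likelihood-ratio argument; the paper does not prove this fact itself, but cites Tsybakov and uses the identity as a black box in Lemma~\ref{lem:Le_Cam}, so your write-up supplies exactly the argument behind that citation. One cosmetic slip: in Step~4, the quantity $1-\int\phi\,(g-f)\,d\mu$ equals the sum $\int\phi\,dP+\int(1-\phi)\,dQ$ of the two error probabilities, not their average, which is half of it; this is consistent with how you set things up in Step~1.
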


\begin{fact}[Triangle and mixture inequalities]
\label{fact:tv-mixture}
The following are standard consequences of the norm representation of
total variation distance; see
\cite[Section~2.4]{Tsybakov2009}.
For any probability measures $P,Q,R$,
\[
   \TV(P,R)
   \le
   \TV(P,Q)+\TV(Q,R).
\]
More generally, if $\pi$ is a probability measure and
$\{P_\theta\}$ and $\{Q_\theta\}$ are two families of probability
measures, then
\[
   \TV\left(
      \int P_\theta\,d\pi(\theta),
      \int Q_\theta\,d\pi(\theta)
   \right)
   \le
   \int
      \TV(P_\theta,Q_\theta)
   \,d\pi(\theta).
\]
\end{fact}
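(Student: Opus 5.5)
The last statement is Fact (Triangle and mixture inequalities): the triangle inequality for $\TV$, and joint convexity of $\TV$ under a common mixing measure $\pi$. The plan is to derive both from the representation
\[
   \TV(P,Q)=\sup_A|P(A)-Q(A)|=\tfrac12\int|p-q|\,d\nu ,
\]
where $p,q$ are densities of $P,Q$ with respect to any dominating measure $\nu$ (for instance $\nu=P+Q$, or $P+Q+R$ when three measures appear). In the paper's applications the spaces are discrete and count-vector laws, so $\nu$ can be counting measure. Even in that case I would state the argument for a general dominating measure so that it covers the Poisson laws as well.

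For the triangle inequality, I fix a measurable set $A$ and write $|P(A)-R(A)|\le|P(A)-Q(A)|+|Q(A)-R(A)|$. Each term on the right is at most the corresponding total variation distance. Taking the supremum over $A$ then gives $\TV(P,R)\le\TV(P,Q)+\TV(Q,R)$. Equivalently, one can apply the pointwise inequality $|p-r|\le|p-q|+|q-r|$ and integrate against $\nu$.

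For the mixture inequality, I again use the sup-over-sets form. For a measurable $A$, the mixtures satisfy $\bar P(A)=\int P_\theta(A)\,d\pi(\theta)$ and $\bar Q(A)=\int Q_\theta(A)\,d\pi(\theta)$. This needs $\theta\mapsto P_\theta(A)$ to be measurable, which I take to be part of the definition of a mixture; it holds trivially when $\pi$ is uniform on a finite index set, as for the hidden coordinate $J$. Hence
\[
   |\bar P(A)-\bar Q(A)|\le\int|P_\theta(A)-Q_\theta(A)|\,d\pi(\theta)\le\int\TV(P_\theta,Q_\theta)\,d\pi(\theta),
\]
and taking the supremum over $A$ finishes the proof.

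The only point needing care is measurability of $\theta\mapsto\TV(P_\theta,Q_\theta)$, since the right-hand integral must be defined. I would not try to prove this in general. Instead I would note two things. First, in the finite-prior case used in Lemma~\ref{lem:true-ideal-close} the integral is a finite average, so measurability is automatic. Second, in general one can read the right-hand side as an upper integral, and the displayed chain of inequalities remains valid with that interpretation.
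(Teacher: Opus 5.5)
Your proof is correct and is the standard argument the paper has in mind; the paper gives no proof of its own and only remarks that both inequalities follow from the norm (equivalently, sup-over-sets) representation of $\TV$, citing Tsybakov. Working set-wise for the mixture bound needs only measurability of $\theta\mapsto P_\theta(A)$, not jointly measurable densities, and your treatment of measurability of $\theta\mapsto\TV(P_\theta,Q_\theta)$ is fine (it is automatic for the finite uniform prior over the hidden coordinate $J$ used in the paper).
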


\begin{fact}[Hellinger distance and tensorization]
\label{fact:hellinger}
Standard properties of Hellinger distance and affinity are collected in
\cite[Section~2.4]{Tsybakov2009}.
Under the convention
\[
   H^2(P,Q)
   :=
   \int
   \left(
      \sqrt{dP}-\sqrt{dQ}
   \right)^2,
\]
we have
\[
   \TV(P,Q)\le H(P,Q).
\]
If
\[
   A(P,Q)
   :=
   \int\sqrt{dP\,dQ}
\]
denotes the Hellinger affinity, then
\[
   H^2(P,Q)
   =
   2\bigl(1-A(P,Q)\bigr)
\]
and
\[
   A\left(
      \bigotimes_{\ell=1}^mP_\ell,
      \bigotimes_{\ell=1}^mQ_\ell
   \right)
   =
   \prod_{\ell=1}^m A(P_\ell,Q_\ell).
\]
Consequently,
\[
   H^2\left(
      \bigotimes_{\ell=1}^mP_\ell,
      \bigotimes_{\ell=1}^mQ_\ell
   \right)
   \le
   \sum_{\ell=1}^mH^2(P_\ell,Q_\ell).
\]
\end{fact}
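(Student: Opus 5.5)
The statement to be proved is the last one displayed, Fact~\ref{fact:hellinger}. It asserts four things: $\TV\le H$, $H^2=2(1-A)$, tensorization of the affinity, and subadditivity of $H^2$ over products. I will prove them in that order. Throughout I work with densities $p_\ell,q_\ell$ with respect to a common dominating measure $\mu$ (for instance $(P+Q)/2$), so that every quantity is an integral of nonnegative functions.

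\emph{Step 1: $H^2=2(1-A)$.} Expand $(\sqrt p-\sqrt q)^2=p+q-2\sqrt{pq}$ and integrate. Since $\int p\,d\mu=\int q\,d\mu=1$, this gives $H^2=2-2A$ directly.

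\emph{Step 2: $\TV\le H$.} Write
\[
2\,\TV=\int|p-q|\,d\mu=\int|\sqrt p-\sqrt q|(\sqrt p+\sqrt q)\,d\mu .
\]
By Cauchy--Schwarz this is at most $H\cdot\bigl(\int(\sqrt p+\sqrt q)^2\,d\mu\bigr)^{1/2}$. The second factor equals $(2+2A)^{1/2}\le 2$, because $A\le1$ by Cauchy--Schwarz. Hence $2\,\TV\le 2H$, which is the claim.

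\emph{Step 3: tensorization of $A$.} The product densities factor, so $\sqrt{\prod_\ell p_\ell q_\ell}=\prod_\ell\sqrt{p_\ell q_\ell}$. Fubini--Tonelli (valid since everything is nonnegative) then splits the integral into the product of one-coordinate affinities. This is exactly the computation already written out in the proof of Lemma~\ref{lem:true-ideal-close}.

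\emph{Step 4: subadditivity of $H^2$.} Set $x_\ell:=H^2(P_\ell,Q_\ell)/2\in[0,1]$. By Steps 1 and 3,
\[
H^2\Bigl(\bigotimes P_\ell,\bigotimes Q_\ell\Bigr)=2\Bigl(1-\prod_\ell(1-x_\ell)\Bigr).
\]
The elementary inequality $1-\prod_\ell(1-x_\ell)\le\sum_\ell x_\ell$ for $x_\ell\in[0,1]$ follows by induction on $m$: with $\Pi_{m}:=\prod_{\ell\le m}(1-x_\ell)$ one has $1-\Pi_{m+1}=(1-\Pi_m)+\Pi_m x_{m+1}\le(1-\Pi_m)+x_{m+1}$. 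Substituting yields $H^2\le\sum_\ell H^2(P_\ell,Q_\ell)$.

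The only step with any subtlety is Step 2. There the point is to bound the factor $\int(\sqrt p+\sqrt q)^2\,d\mu$ by $4$ rather than by something larger, and this is what makes the constant come out exactly $1$ under the paper's convention, which omits the factor $1/2$ in $H^2$. The remaining steps are routine algebra.
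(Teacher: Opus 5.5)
Your proof is correct and matches the standard argument the paper relies on. The paper cites Tsybakov for $\TV\le H$, whose proof is exactly your Cauchy--Schwarz step with $\int(\sqrt p+\sqrt q)^2=2+2A\le 4$, and it writes out the affinity tensorization and the bound $1-(1-x)^s\le sx$ inline in the proof of Lemma~\ref{lem:true-ideal-close}. Your induction for $1-\prod_\ell(1-x_\ell)\le\sum_\ell x_\ell$ is the non-identical-factor version of that last inequality, which is what the general statement of the Fact requires.
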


\begin{fact}[Hellinger distance between Poisson laws]
\label{fact:poisson-hellinger}
The Poisson affinity formula follows by summing the Poisson probability
mass functions; see also the treatment of Poisson measures in
\cite{barbour1992poisson}. For all $u,v\ge0$,
\[
   H^2\bigl(\Poi(u),\Poi(v)\bigr)
   =
   2\left(
      1-
      \exp\left\{
         -\frac{(\sqrt u-\sqrt v)^2}{2}
      \right\}
   \right).
\]
In particular,
\[
   H^2\bigl(\Poi(u),\Poi(v)\bigr)
   \le
   (\sqrt u-\sqrt v)^2.
\]
If $u,v>0$, then
\[
   (\sqrt u-\sqrt v)^2
   =
   \frac{(u-v)^2}{(\sqrt u+\sqrt v)^2}
   \le
   \frac{(u-v)^2}{\min\{u,v\}}.
\]
\end{fact}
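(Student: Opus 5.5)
The statement in question is the last one displayed: Fact~\ref{fact:poisson-hellinger}, the Hellinger distance between Poisson laws. The plan is to compute the Hellinger affinity $A(\Poi(u),\Poi(v))$ exactly and then convert it to $H^2$ using the identity $H^2=2(1-A)$ from Fact~\ref{fact:hellinger}.

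\emph{Step 1 (the affinity).} Summing the probability mass functions gives
\[
   A=\sum_{r\ge0}\sqrt{e^{-u}\frac{u^r}{r!}\,e^{-v}\frac{v^r}{r!}}
   =e^{-(u+v)/2}\sum_{r\ge0}\frac{(\sqrt{uv})^r}{r!}
   =\exp\Bigl\{-\tfrac{u+v}{2}+\sqrt{uv}\Bigr\}
   =\exp\Bigl\{-\tfrac{(\sqrt u-\sqrt v)^2}{2}\Bigr\}.
\]
The degenerate cases $u=0$ or $v=0$ are consistent with this formula under the conventions $0^0=1$ and $\Poi(0)=\delta_0$. Substituting into $H^2=2(1-A)$ yields the displayed identity.

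\emph{Step 2 (the first inequality).} Apply $1-e^{-x}\le x$ for $x\ge0$ with $x=(\sqrt u-\sqrt v)^2/2$. This gives $H^2\le 2\cdot(\sqrt u-\sqrt v)^2/2=(\sqrt u-\sqrt v)^2$.

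\emph{Step 3 (the second inequality).} For $u,v>0$, rationalize: $\sqrt u-\sqrt v=(u-v)/(\sqrt u+\sqrt v)$. Then note that $(\sqrt u+\sqrt v)^2\ge u+v\ge\min\{u,v\}$, which gives the final bound.

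No step is a real obstacle. The only points needing care are the edge cases $u=0$ or $v=0$ in Step 1 and fixing the Hellinger convention without the factor $1/2$, so that $H^2=2(1-A)$ holds exactly.
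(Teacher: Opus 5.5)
Your proof is correct and takes the same route the paper indicates: you compute the Hellinger affinity by summing the Poisson mass functions, convert it with $H^2=2(1-A)$, apply $1-e^{-x}\le x$, and rationalize $\sqrt u-\sqrt v$. Your treatment of the degenerate cases $u=0$ or $v=0$, and your use of the Hellinger convention without the factor $1/2$, also match the paper's conventions.
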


\begin{fact}[Elementary exponential and logarithmic inequalities]
\label{fact:elementary-log-exp}
These standard inequalities follow from convexity and concavity of the
exponential and logarithm functions; see
\cite{hardy1952inequalities}.
For every $x\in\mathbb R$,
\[
   1+x\le e^x.
\]
For every $x>-1$,
\[
   \log(1+x)\le x.
\]
For every $x\in[0,1)$,
\[
   \log(1-x)\le-x
   \qquad\text{and}\qquad
   \log(1-x)\ge-\frac{x}{1-x}.
\]
\end{fact}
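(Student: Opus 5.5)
We derive everything from the single tangent-line inequality for the exponential, obtaining each remaining inequality by a substitution. First, $x\mapsto e^x$ is convex, and its tangent line at $x=0$ is $y=1+x$. A convex differentiable function lies above each of its tangent lines, so $1+x\le e^x$ for every $x\in\mathbb R$. To avoid quoting convexity, one can instead set $g(x):=e^x-1-x$. Then $g'(x)=e^x-1$ is negative on $(-\infty,0)$ and positive on $(0,\infty)$, so $g$ attains its minimum $g(0)=0$ at $x=0$.

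Second, for $x>-1$ we have $1+x>0$. Taking logarithms in $1+x\le e^x$ gives $\log(1+x)\le x$, since $\log$ is increasing on $(0,\infty)$.

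The two bounds on $\log(1-x)$ for $x\in[0,1)$ then follow by substitution. For the upper bound, apply $\log(1+y)\le y$ with $y=-x\in(-1,0]$, which is admissible because $y>-1$; this gives $\log(1-x)\le -x$. For the lower bound, write
\[
   \frac{1}{1-x}=1+\frac{x}{1-x},
\]
and apply the second inequality with $y=x/(1-x)\ge0>-1$. This yields $-\log(1-x)=\log\bigl(1+\tfrac{x}{1-x}\bigr)\le \tfrac{x}{1-x}$, which is exactly $\log(1-x)\ge -\tfrac{x}{1-x}$.

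There is no genuine obstacle here. The only step requiring care is checking that each substituted argument lies in the admissible domain $y>-1$. The restriction $x\in[0,1)$ is exactly what makes $1-x>0$ and $x/(1-x)$ finite and nonnegative.
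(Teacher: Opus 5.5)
Your proof is correct: each inequality reduces to the tangent-line bound $1+x\le e^x$, and the substitutions $y=-x$ and $y=x/(1-x)$ both stay in the admissible range $y>-1$ for $x\in[0,1)$. The paper gives no proof beyond attributing these facts to convexity of the exponential and concavity of the logarithm, and your argument is a complete version of that same reasoning.
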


\begin{fact}[Elementary algebraic bounds]
\label{fact:elementary-algebra}
The following elementary convexity, binomial, and geometric-series
bounds are standard; see \cite{hardy1952inequalities}.
For all real $x,y$,
\[
   (x-y)^2
   \le
   2x^2+2y^2.
\]
For every $x\in[0,1]$ and every integer $s\ge1$,
\[
   1-(1-x)^s
   \le
   sx.
\]
If $a_r\ge0$ and
$a_{r+1}\le\rho a_r$ for all $r\ge r_0$, where
$\rho\in[0,1)$, then
\[
   \sum_{r=r_0}^{\infty}a_r
   \le
   \frac{a_{r_0}}{1-\rho}.
\]
\end{fact}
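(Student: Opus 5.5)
The last statement in the paper is the final fact, Fact~\ref{fact:elementary-algebra}. It collects three elementary claims, and I will prove each one directly from first principles.

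For $(x-y)^2\le 2x^2+2y^2$, I will expand the difference:
\[
2x^2+2y^2-(x-y)^2 = x^2+2xy+y^2=(x+y)^2\ge 0 .
\]
This gives the first inequality immediately.

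For $1-(1-x)^s\le sx$ with $x\in[0,1]$ and integer $s\ge1$, I will argue by induction on $s$.
\begin{itemize}
\item The base case $s=1$ holds with equality.
\item For the step, I will write $(1-x)^{s+1}=(1-x)^s(1-x)$ and use the inductive hypothesis $(1-x)^s\ge 1-sx$, together with $1-x\ge 0$. This gives
\[
(1-x)^{s+1}\ge(1-sx)(1-x)=1-(s+1)x+sx^2\ge 1-(s+1)x .
\]
\end{itemize}
Equivalently, this is Bernoulli's inequality. As a cross-check, the mean value theorem for $u\mapsto u^s$ on $[1-x,1]$ also works, since the derivative $su^{s-1}$ is at most $s$ there.

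For the geometric-series bound, I will first show by induction on $r\ge r_0$ that $a_r\le \rho^{r-r_0}a_{r_0}$, using $a_{r+1}\le\rho a_r$ and $a_r\ge0$ (nonnegativity is needed so the ratio bound can be iterated). Summing the comparison series then gives
\[
\sum_{r\ge r_0}a_r\le a_{r_0}\sum_{i\ge0}\rho^i=\frac{a_{r_0}}{1-\rho}.
\]
This converges because $\rho\in[0,1)$. Since all terms are nonnegative, the partial sums are monotone, so the series converges and the bound passes to the limit.

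None of these steps is a real obstacle. The only points needing care are the side conditions: in the induction for the second claim, multiplying by $1-x$ preserves the inequality only because $1-x\ge0$ (from $x\le1$); in the third claim, $a_r\ge 0$ is what allows the ratio bound to be iterated and the limit to be taken.
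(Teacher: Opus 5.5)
Your three arguments are correct and complete. The paper gives no proof of this fact and only cites it as standard (Hardy--Littlewood--P\'olya), so your direct verifications are exactly the routine arguments it defers to:
\begin{itemize}
\item the identity $2x^2+2y^2-(x-y)^2=(x+y)^2$;
\item Bernoulli's inequality, proved by induction (or by the mean value theorem);
\item termwise comparison $a_r\le\rho^{r-r_0}a_{r_0}$ with a geometric series.
\end{itemize}

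One small correction to your side remarks. Iterating $a_{r+1}\le\rho a_r$ into $a_r\le\rho^{r-r_0}a_{r_0}$ uses $\rho\ge0$, because you multiply the inductive bound by $\rho$; it does not use $a_r\ge0$. Nonnegativity of the $a_r$ is needed for two other things. It makes the partial sums monotone, so the series converges. It also gives $a_{r_0}\ge0$, which is what lets you bound $a_{r_0}\sum_{i=0}^{R-r_0}\rho^i$ by $a_{r_0}/(1-\rho)$.
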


\begin{fact}[Factorial and optimization bounds]
\label{fact:factorial-optimization}
The factorial bound is a standard consequence of Stirling's inequality,
and the second bound follows by maximizing the function
$x\mapsto x\log(A/x)$; see
\cite{mitzenmacher2017probability,hardy1952inequalities}.
For every integer $r\ge1$,
\[
   r!
   \ge
   \left(\frac re\right)^r.
\]
For every $A>0$ and $x>0$,
\[
   x\log\frac Ax
   \le
   \frac Ae.
\]
\end{fact}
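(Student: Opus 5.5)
The first inequality follows from a one-line comparison with the exponential series. For every integer $r\ge1$, all terms of
\[
   e^r=\sum_{j=0}^\infty\frac{r^j}{j!}
\]
are nonnegative, so $e^r$ is at least the single term with $j=r$:
\[
   e^r\ge\frac{r^r}{r!}.
\]
Rearranging gives $r!\ge r^r e^{-r}=(r/e)^r$. This avoids Stirling's formula and its error terms entirely. It is also exactly the form used in Lemma~\ref{lem:second-moment}, where $1/r!\le (e/r)^r$ is substituted into the Poisson weights.

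For the second inequality, fix $A>0$ and study
\[
   f(x):=x\log\frac Ax=x\log A-x\log x,\qquad x>0.
\]
I would use concavity. We have $f'(x)=\log A-\log x-1$ and $f''(x)=-1/x<0$, so $f$ is strictly concave on $(0,\infty)$. Its unique critical point solves $\log x=\log A-1$, namely $x=A/e$, and this point is therefore the global maximizer. Evaluating there gives
\[
   f(A/e)=\frac Ae\log e=\frac Ae.
\]
Hence $x\log(A/x)\le A/e$ for every $x>0$.

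As a sanity check, the boundary behaviour is consistent with this maximum. As $x\downarrow0$, $f(x)\to0$. As $x\to\infty$, $f(x)\to-\infty$. So no supremum escapes to the boundary of $(0,\infty)$.

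There is no genuine obstacle in either part. The only point needing care is that the derivative argument requires the domain $x>0$ open, which holds here. The inequality is then applied in Lemma~\ref{lem:second-moment} with $x=r\ge1$ and $A=eB^2C_*\alpha>0$, both of which lie in the allowed domain.
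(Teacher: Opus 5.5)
Your proof is correct. The paper gives no written proof, only the pointers ``Stirling's inequality'' and ``maximize $x\mapsto x\log(A/x)$''. For the second bound you follow exactly that route: strict concavity, the critical point $x=A/e$, and the value $A/e$.

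For the factorial bound your route is different and more elementary. Instead of invoking Stirling's inequality (e.g.\ $r!\ge\sqrt{2\pi r}\,(r/e)^r$), you keep only the $j=r$ term of $e^r=\sum_{j\ge0}r^j/j!$. This argument is self-contained, needs no asymptotic or error-term control, and holds for every $r\ge1$. It also yields exactly the crude form $1/r!\le(e/r)^r$ that is substituted into the Poisson weights in Lemma~\ref{lem:second-moment}. The Stirling route gives a stronger lower bound by the factor $\sqrt{2\pi r}$, but the paper never uses that extra factor, so your choice loses nothing.
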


\begin{fact}[Power-sum inequalities]
\label{fact:power-sums}
These inequalities follow from Jensen's inequality, H\"older's
inequality, and monotonicity of finite-dimensional $\ell_r$ norms; see
\cite{hardy1952inequalities}. Their power-sum forms are also used in
\cite{AcharyaOrlitskySureshTyagi2017}.
Let $p\in\Delta_k$, and define
\[
   P_r(p):=\sum_{i=1}^kp_i^r.
\]
For every $r\ge1$,
\[
   P_r(p)\ge k^{1-r}.
\]
For every $r>0$ and $s\ge0$,
\[
   P_{r+s}(p)
   \le
   P_r(p)^{(r+s)/r}.
\]
For every $r>0$ and $s\in[0,r]$,
\[
   P_{r-s}(p)
   \le
   k^{s/r}
   P_r(p)^{(r-s)/r}.
\]
In particular,
\[
   P_{2r}(p)
   \le
   P_r(p)^2.
\]
\end{fact}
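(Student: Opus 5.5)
The plan is to prove the four inequalities one at a time, each from a single classical tool, working coordinatewise on $p\in\Delta_k$.

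\emph{Lower bound $P_r(p)\ge k^{1-r}$ for $r\ge1$.} Apply Jensen's inequality to the convex map $t\mapsto t^r$ on $[0,\infty)$ under the uniform average over $[k]$. This gives
\[
\frac1k\sum_{i=1}^k p_i^r\ge\Bigl(\frac1k\sum_{i=1}^k p_i\Bigr)^r=k^{-r},
\]
and multiplying by $k$ yields the claim. This is the same argument already invoked for \eqref{eq:power-lower}.

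\emph{Upper bound $P_{r+s}\le P_r^{(r+s)/r}$ for $r>0$, $s\ge0$.} I will use the pointwise trick from Lemma~\ref{lem:powersum}. Since $p_i^r\le P_r(p)$ for each $i$, we have $p_i^s\le P_r(p)^{s/r}$. Hence $p_i^{r+s}=p_i^r p_i^s\le P_r(p)^{s/r}p_i^r$. Summing over $i$ gives $P_{r+s}\le P_r^{s/r}P_r=P_r^{(r+s)/r}$. This is equivalent to monotonicity $\|p\|_{r+s}\le\|p\|_r$ of $\ell_q$ norms. Taking $s=r$ immediately gives the special case $P_{2r}\le P_r^2$.

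\emph{Bound $P_{r-s}\le k^{s/r}P_r^{(r-s)/r}$ for $s\in[0,r]$.} For $0<s<r$, apply H\"older's inequality to $\sum_i p_i^{r-s}\cdot 1$ with conjugate exponents $r/(r-s)$ and $r/s$. This gives
\[
\sum_i p_i^{r-s}\le\Bigl(\sum_i p_i^{r}\Bigr)^{(r-s)/r}\Bigl(\sum_i 1\Bigr)^{s/r}=k^{s/r}P_r^{(r-s)/r}.
\]
The endpoint $s=0$ is an equality. At the endpoint $s=r$, with the convention $0^0=1$, we get $P_0=k$ and both sides equal $k$; if one instead sums only over the support, the left side is at most $k$ and the inequality still holds.

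There is no real obstacle here. The only care needed is the endpoint cases of H\"older and the convention for $P_0$, which I will state explicitly.
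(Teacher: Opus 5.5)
Your proof is correct and takes essentially the paper's approach. The paper gives no separate proof of this fact and only attributes it to Jensen's inequality, H\"older's inequality and monotonicity of $\ell_r$ norms, but it uses exactly your Jensen argument for \eqref{eq:power-lower} and your pointwise bound $p_x\le P_\alpha(p)^{1/\alpha}$ in the proof of Lemma~\ref{lem:powersum}; your explicit treatment of the endpoints $s=0$ and $s=r$ covers cases the paper leaves implicit.
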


\begin{fact}[Median amplification]
\label{fact:median-amplification}
This is the standard median trick obtained by applying a Chernoff bound
to the number of unsuccessful repetitions; see
\cite[Chapter~4]{mitzenmacher2017probability}.
Let $\widehat\theta_1,\ldots,\widehat\theta_B$ be independent
estimators such that
\[
   \Prob\bigl(
      |\widehat\theta_b-\theta|\le\varepsilon
   \bigr)
   \ge
   \frac34
   \qquad
   \text{for every }b.
\]
If
\[
   B
   \ge
   8\log\frac1\delta,
\]
then their median satisfies
\[
   \Prob\left(
      \left|
         \operatorname{median}
         (\widehat\theta_1,\ldots,\widehat\theta_B)
         -\theta
      \right|
      >
      \varepsilon
   \right)
   \le
   \delta.
\]
\end{fact}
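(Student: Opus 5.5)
We prove the median-amplification statement by reducing failure of the median to a majority of the individual estimators failing, and then bounding that majority event by Hoeffding's inequality, exactly as in the proof of Theorem~\ref{thm:main_1}.

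First we show the key deterministic containment. For each $b$, let $E_b=\{|\widehat\theta_b-\theta|\le\varepsilon\}$ and $Z_b=\mathbf 1_{E_b^c}$. The $Z_b$ are independent Bernoulli variables with $\E Z_b\le 1/4$. Suppose strictly more than $B/2$ of the estimates lie in $[\theta-\varepsilon,\theta+\varepsilon]$. Then strictly fewer than $B/2$ lie above $\theta+\varepsilon$, so the median is at most $\theta+\varepsilon$. Symmetrically, strictly fewer than $B/2$ lie below $\theta-\varepsilon$, so the median is at least $\theta-\varepsilon$. This holds under any fixed convention for the median when $B$ is even, for example the lower middle value or the average of the two middle values. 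Consequently,
\[
\Bigl\{\bigl|\operatorname{median}(\widehat\theta_1,\ldots,\widehat\theta_B)-\theta\bigr|>\varepsilon\Bigr\}\subseteq\Bigl\{\textstyle\sum_{b=1}^B Z_b\ge B/2\Bigr\}.
\]

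Next we bound the probability of the majority-failure event. Since $\E\sum_b Z_b\le B/4$, Hoeffding's inequality for independent $[0,1]$-valued variables applies with deviation $B/4$:
\[
\Prob\Bigl(\textstyle\sum_{b=1}^B Z_b\ge B/2\Bigr)\le\exp\bigl(-2B(1/4)^2\bigr)=e^{-B/8}.
\]
Hoeffding is used here because it handles the case $\E Z_b<1/4$ through monotonicity of the upper tail in the mean, which a stochastic-domination argument would also give. Finally, $B\ge 8\log(1/\delta)$ gives $e^{-B/8}\le\delta$, which completes the argument.

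The only step needing care is the deterministic containment, specifically stating the median convention when $B$ is even. The probabilistic bound is a direct application of Hoeffding's inequality.
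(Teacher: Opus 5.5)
Your proof is correct and is essentially the paper's argument. The paper gives no separate proof of this fact (it cites Mitzenmacher--Upfal), but the proof of Theorem~\ref{thm:main_1} uses exactly your reduction to the event $\sum_b Z_b\ge B/2$ followed by Hoeffding's bound $e^{-B/8}\le\delta$, and your explicit justification of that containment, including the even-$B$ median convention, fills in a step the paper only asserts.
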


\begin{fact}[Poissonization]
\label{fact:poissonization}
The following Poisson splitting property is standard; see
\cite{barbour1992poisson} and
\cite{mitzenmacher2017probability}.
Let $N\sim\Poi(n)$, and conditional on $N$, draw
$X_1,\ldots,X_N$ independently from $p\in\Delta_k$. If $N_i$
denotes the number of occurrences of symbol $i$, then
\[
   N_1,\ldots,N_k
\]
are independent and
\[
   N_i\sim\Poi(np_i),
   \qquad i\in[k].
\]
\end{fact}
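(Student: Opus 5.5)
The plan is to compute the joint probability mass function of the count vector $(N_1,\ldots,N_k)$ directly and to show that it factors into a product of Poisson mass functions. Fix nonnegative integers $r_1,\ldots,r_k$ and put $R:=r_1+\cdots+r_k$. The event $\{N_1=r_1,\ldots,N_k=r_k\}$ forces the total sample size to equal $R$, since $\sum_i N_i=N$ pointwise. Conditioning on $N$ therefore gives
\[
   \Prob(N_1=r_1,\ldots,N_k=r_k)
   =
   \Prob(N=R)\,
   \Prob(N_1=r_1,\ldots,N_k=r_k\mid N=R).
\]

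Conditional on $N=R$, the samples $X_1,\ldots,X_R$ are i.i.d.\ from $p$, so the count vector is multinomial with parameters $(R;p_1,\ldots,p_k)$. The conditional probability is thus $\frac{R!}{r_1!\cdots r_k!}\prod_i p_i^{r_i}$. Multiplying by $\Prob(N=R)=e^{-n}n^R/R!$ cancels the factor $R!$. Writing $n^R=\prod_i n^{r_i}$ and $e^{-n}=\prod_i e^{-np_i}$, which uses $\sum_i p_i=1$ since $p\in\Delta_k$, we obtain
\[
   \Prob(N_1=r_1,\ldots,N_k=r_k)
   =
   \prod_{i=1}^k e^{-np_i}\frac{(np_i)^{r_i}}{r_i!}.
\]

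Because the joint mass function is a product of functions of the individual $r_i$, each of which is a $\Poi(np_i)$ mass function, the coordinates are independent. Summing out the other coordinates then identifies the marginals as $N_i\sim\Poi(np_i)$. The case $R=0$ and symbols with $p_i=0$ need only the conventions $0^0=1$ and $\Poi(0)=\delta_0$; the formula covers them without change.

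I do not expect a genuine obstacle here. The one point needing care is the normalization step: the identity $e^{-n}=\prod_i e^{-np_i}$ is exactly where $\sum_i p_i=1$ enters. As a cross-check, one could verify the joint probability generating function $\E\prod_i z_i^{N_i}=\exp\bigl(n\sum_i p_i(z_i-1)\bigr)$, which factors in the same way.
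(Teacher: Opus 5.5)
Your argument is correct: conditioning on the total $N=R$, multiplying the multinomial probability by $e^{-n}n^R/R!$, and using $\sum_i p_i=1$ to split $e^{-n}=\prod_i e^{-np_i}$ yields exactly the product of $\Poi(np_i)$ mass functions, and this gives both independence and the marginals. The paper gives no proof of its own and only cites the references, and your direct mass-function computation is the standard argument for this fact.
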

\begin{fact}[The stated logarithmic-exponential inequality]\label{app:Fact_4}
For every $x\in[-1/2,0]$,
\[
    \log(1+x)\ge 2x.
\]
Equivalently, for every $y\in[0,1/2]$,
\[
    \log(1-y)\ge -2y,
\]
and therefore
\[
    1-y\ge e^{-2y}.
\]
\end{fact}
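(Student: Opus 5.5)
We prove the statement in two parts. First we establish the equivalence $\log(1+x)\ge 2x$ on $[-1/2,0]$ directly. Then we substitute $x=-y$ and exponentiate.

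Define $g(x):=\log(1+x)-2x$ on $[-1/2,0]$. At the right endpoint, $g(0)=0$. The derivative is $g'(x)=\frac{1}{1+x}-2$. Since $1+x\in[1/2,1]$ on this interval, we have $\frac{1}{1+x}\le 2$, so $g'(x)\le 0$. Hence $g$ is nonincreasing on $[-1/2,0]$, and therefore $g(x)\ge g(0)=0$ for every $x$ there, which is the first inequality.

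Alternatively, one can use concavity: $\log(1+x)$ is concave while $2x$ is linear, so it suffices to check the two endpoints. At $x=0$ both sides are $0$. At $x=-1/2$ we need $-\log 2\ge -1$, which holds because $\log 2<1$.

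Setting $y=-x\in[0,1/2]$ gives $\log(1-y)\ge -2y$. Exponentiating, using that $\exp$ is increasing, yields $1-y\ge e^{-2y}$.

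There is no real obstacle here. The only point to watch is that the derivative sign argument uses $x\ge -1/2$, which is exactly what keeps $1/(1+x)\le 2$.

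This also justifies the earlier use in Case 2b of the proof of Lemma~\ref{lem:one-batch}. There, $y=\varepsilon/8\le 1/8$, so the fact gives $\log(1-\varepsilon/8)\ge -\varepsilon/4$.
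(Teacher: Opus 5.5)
Your proof is correct and is essentially the paper's: the paper likewise takes $f(x)=\log(1+x)-2x$, notes $f(0)=0$ and $f'(x)=(-1-2x)/(1+x)\le 0$ on $[-1/2,0]$, concludes $f\ge 0$, and then substitutes $x=-y$ and exponentiates. Your concavity-plus-endpoints alternative, which uses $\log 2<1$, is also valid but is not needed.
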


\begin{proof}
Let
\[
    f(x)=\log(1+x)-2x
\]
for $x\in[-1/2,0]$. Then
\[
    f(0)=0
\]
and
\[
    f'(x)=\frac{1}{1+x}-2
    =\frac{-1-2x}{1+x}.
\]
For $x\in[-1/2,0]$, the denominator $1+x$ is positive and the numerator $-1-2x$ is nonpositive. Hence
\[
    f'(x)\le 0
\]
on $[-1/2,0]$. Therefore $f$ is nonincreasing on $[-1/2,0]$. Since $f(0)=0$, for every $x\in[-1/2,0]$ we have
\[
    f(x)\ge f(0)=0.
\]
Thus
\[
    \log(1+x)-2x\ge 0,
\]
which proves
\[
    \log(1+x)\ge 2x
\]
for every $x\in[-1/2,0]$.

Finally, substituting $x=-y$ gives, for $y\in[0,1/2]$,
\[
    \log(1-y)\ge -2y.
\]
Exponentiating both sides gives
\[
    1-y\ge e^{-2y}.
\]
\end{proof}

\end{appendix}

\bibliographystyle{unsrt}
\bibliography{biblio}

\end{document}